\documentclass[journal]{IEEEtran}
\usepackage{cite}
\usepackage{amsmath,amssymb,amsfonts}
\usepackage{algorithmic}
\usepackage{graphicx}
\usepackage{algorithm,algorithmic}
\usepackage{hyperref}
\hypersetup{hidelinks=true}
\usepackage{textcomp}
\usepackage{xcolor}

\usepackage{amsmath,amssymb,mathtools,bm,upgreek}
\newtheorem{definition}{Definition}
\newtheorem{theorem}{Theorem}
\newtheorem{lemma}{Lemma}
\newtheorem{proposition}{Proposition}
\newtheorem{corollary}{Corollary}
\newtheorem{assumption}{Assumption}
\newtheorem{remark}{Remark}

\newenvironment{proof}{\begin{IEEEproof}}{\end{IEEEproof}}
\newcommand{\vsgn}[2]{\left|\mkern-2mu\left\lceil #1 \right\rfloor\mkern-2mu\right|^{#2}}
\newcommand{\R}{\mathbb{R}}

\def\BibTeX{{\rm B\kern-.05em{\sc i\kern-.025em b}\kern-.08em
    T\kern-.1667em\lower.7ex\hbox{E}\kern-.125emX}}

\begin{document}
\title{Abstract homogeneous chains:\\ a Lyapunov framework for high-order\\ sliding modes in multi-agent systems}
\author{Rodrigo Aldana-López
\thanks{\textcolor{red}{This work has been submitted to the IEEE for possible publication. Copyright may be transferred without notice, after which this version may no longer be accessible.}}
\thanks{This work was supported via projects PID2021-124137OB-I00 and PID2024-159279OB-I00 funded by MICIU/AEI/10.13039/501100011033 and by ERDF/EU, via project REMAIN S1/1.1/E0111 (Interreg Sudoe Programme, ERDF), by the Gobierno de Aragón under Project DGA T45-23R, and by Spanish grant FPU20/03134. Grant reference BG24/00121 funded by MICIU/AEI/10.13039/501100011033.}
\thanks{Rodrigo Aldana-López is with the Departamento de Informática e Ingeniería de Sistemas (DIIS) and Instituto de Investigación en Ingeniería de Aragón (I3A), Universidad de Zaragoza, 50018 Zaragoza, Spain (e-mail: raldana@unizar.es).}}

\maketitle
\begin{abstract}
This work develops a Lyapunov framework for a broad class of arbitrary-order sliding-mode algorithms in multi-agent systems. We introduce abstract homogeneous chains, a class of nonlinear error systems characterized by common convexity and homogeneity properties. For this class, we establish global finite-time stability for arbitrary order, construct a homogeneous Lyapunov function, and derive a recursive optimization-based gain-proposal procedure. The framework addresses several gaps in existing dynamic average consensus and distributed differentiation results: it provides a recursive numerical optimization-based gain-proposal procedure for EDCHO at arbitrary order, extends REDCHO convergence from local to global, and provides arbitrary-order numerical gain-proposal rules for leader--follower distributed differentiation, previously available only at first order. It also provides a new arbitrary-order observer for multi-leader affine formation tracking with global finite-time convergence. 
\end{abstract}

\begin{IEEEkeywords}
High-order sliding modes, multi-agent systems, finite-time stability, homogeneity, Lyapunov methods, distributed estimation, dynamic average consensus, formation control.
\end{IEEEkeywords}

\section{Introduction}
\label{sec:introduction}

\IEEEPARstart{H}{igh-order} sliding-mode (HOSM) algorithms have become a well-established tool for robust finite-time estimation and control. Initially motivated by the possibility of retaining the robustness properties of sliding modes while alleviating the chattering associated with first-order discontinuous feedback, HOSM techniques also enabled robust exact differentiation of arbitrary order under bounded higher-order variations \cite{Levant1998,levant2003}. Their theoretical analysis has evolved substantially over the years. Early results relied mainly on geometric and homogeneity arguments to establish finite-time convergence and the existence of stabilizing gains \cite{levant2003,levant2005}. Lyapunov constructions were subsequently developed for the so-called super-twisting algorithm \cite{moreno2012}, leading to explicit feasible gain conditions and settling-time estimates \cite{seeber2017,seeber2018}, and eventually to Lyapunov-based analyses and tuning rules for arbitrary-order exact differentiators \cite{cruz2018,seeber2023}. This progression has produced a mature theory in which finite-time convergence, Lyapunov analysis, and gain design can be addressed systematically for classical HOSM algorithms.

The use of sliding modes in multi-agent systems has followed a similar progression at the algorithmic level, but not yet at the level of the supporting theory. For example, first-order sliding modes were introduced in dynamic average consensus to achieve exact tracking under bounded persistent variations, overcoming limitations of linear consensus mechanisms \cite{freeman2019}. HOSM then appeared through super-twisting-type distributed differentiators and were later extended to arbitrary order. In particular, the Exact Dynamic Consensus of High Order (EDCHO) protocol and its robust extension REDCHO provide leaderless distributed differentiation of arbitrary order \cite{edcho,redcho}, while related HOSM constructions have been proposed for leader--follower distributed differentiation \cite{elf}. These developments show that HOSM mechanisms can be successfully embedded in networked systems. However, the corresponding Lyapunov framework has not progressed at the same pace. For EDCHO, the available gain-design procedure leaves one gain to be chosen sufficiently large, without providing an explicit characterization of the required threshold for arbitrary order. For REDCHO, the available convergence guarantee is local. For the leader--follower case, Lyapunov-based gain-proposal is available only for first-order differentiation. 

The difficulty is structural. In classical differentiators, the nonlinear feedback acts directly on a single scalar or vector estimation error. In multi-agent systems, these nonlinearities are intertwined with the communication topology through incidence matrices, Laplacians, leader-pinning matrices, or other graph-dependent operators. Consequently, existing generalized or multivariable super-twisting frameworks do not directly cover the couplings arising in networked systems. Constructive multivariable generalizations have been developed in \cite{lopez2019generalised,moreno2021multivariable,geromel2026}, but they rely on structural relations between the continuous and discontinuous feedback channels that are not satisfied by general graph-induced nonlinearities. Likewise, approaches based on transformations that reveal a suitably scaled linear structure \cite{hernan2019,seeber2021} do not extend directly when the nonlinear feedback is coupled through a communication graph.

A different viewpoint was recently proposed in \cite{aldana2026b} for the super-twisting case. Rather than preserving the particular algebraic form of the classical algorithm, that work identified structural properties to define an abstract super-twisting system encompassing both the classical super-twisting algorithm and a leaderless distributed differentiator. By combining homogeneity with convex analysis, this abstraction enabled a common Lyapunov construction, global finite-time stability, and numerical gain conditions for the latter. However, the framework is restricted to the second-order super-twisting structure, and its connection with other multi-agent problems was not explored.

The present work generalizes this viewpoint to a broad class of arbitrary-order multi-agent HOSM error dynamics characterized by common homogeneity, convexity, and graph-induced properties. We formalize this class as an \emph{Abstract Homogeneous Chain} (AHC), which provides a unified Lyapunov and gain-proposal framework. The main contributions are summarized as follows.

\begin{enumerate}
\item We introduce AHCs and establish global finite-time stability for this class at arbitrary order. The proof develops a homogeneous Lyapunov construction combining the scaling properties underlying HOSM algorithms with tools from convex analysis. 

\item The Lyapunov analysis yields a recursive gain-proposal procedure in which all gain thresholds are characterized by finite-dimensional numerical optimization problems. Applied to existing distributed HOSM algorithms, this addresses several gaps in the literature: it provides an optimization-based arbitrary-order gain-proposal procedure for EDCHO, extends the REDCHO convergence result from local to global at arbitrary order, and provides arbitrary-order numerical gain-proposal rules for the leader--follower distributed differentiator, for which parameter tuning was available only at first order.

\item The framework is also shown to have design value beyond the analysis of existing algorithms. We use the AHC structure to construct a new arbitrary-order protocol for multi-leader affine formation with global finite-time convergence.
\end{enumerate}

\section{Preliminaries}

\subsection{Notation}
Lowercase bold letters denote vectors and uppercase bold letters denote matrices. Let \(\mathcal X\subseteq\R^n\) be a linear subspace endowed with the Euclidean inner product \(\mathbf x^\top\mathbf y\) and induced vector and matrix norm \(\|\bullet\|\). Let \(\mathbf0_n\), \(\mathbf1_n\), and \(\mathbf I_n\) denote the zero vector, all-ones vector, and identity matrix, respectively, with the subscript omitted when the dimension is clear. For compatible vectors, define \(\operatorname{col}(\mathbf x_1,\ldots,\mathbf x_q):=[\mathbf x_1^\top\ \cdots\ \mathbf x_q^\top]^\top\), and let \(\operatorname{diag}(\mathbf A_1,\ldots,\mathbf A_q)\) denote the corresponding diagonal or block-diagonal matrix. For a finite index set \(\mathcal I\subset\mathbb N\), \(\operatorname{col}_{i\in\mathcal I}(\mathbf x_i)\) denotes the column stacking of the vectors \(\mathbf x_i\), with the indices taken in increasing order. For a linear map \(\mathbf{Q}:\mathcal X\to\mathcal X\), with \(\mathbf Q^{-1}\) denoting its inverse on \(\mathcal X\) whenever it exists, a scalar \(a\in\R\), and sets \(\mathcal A,\mathcal B\subseteq\mathcal X\), write \(\mathbf{Q}\mathcal A:=\{\mathbf{Q}\mathbf a:\mathbf a\in\mathcal A\}\) and \(a\mathcal A-\mathcal B:=\{a\mathbf a-\mathbf b:\mathbf a\in\mathcal A,\ \mathbf b\in\mathcal B\}\). For \(a\in\R\) and \(\alpha>0\), define
\(
\lceil a\rfloor^\alpha:=|a|^\alpha\operatorname{sign}(a).
\)
The function \(\operatorname{sign}\) has its usual single-valued meaning, with \(\operatorname{sign}(0)=0\), except in differential inclusions, where the same notation is used with \(\operatorname{sign}(0)=[-1,1]\). For vector arguments, \(\lceil\cdot\rfloor^\alpha\) and \(\operatorname{sign}(\cdot)\) are applied componentwise. For \(\mathbf x\in\R^d\) and \(\alpha\geq0\), define 
\[
\vsgn{\mathbf x}{\alpha}
:=
\begin{cases}
\|\mathbf x\|^{\alpha-1}\mathbf x, & \mathbf x\neq\mathbf0,\\
\mathbf0, & \mathbf x=\mathbf0,\ \alpha>0,\\
\{\mathbf v\in\R^d:\|\mathbf v\|\leq1\}, & \mathbf x=\mathbf0,\ \alpha=0.
\end{cases}
\]
\subsection{Convex analysis}
\label{subsec:convex}

For a convex function \(U:\mathcal X\to\R\), denote its conjugate by
\begin{equation}
\label{eq:conjugate}
U^*(\mathbf y)
:=
\sup_{\mathbf x\in\mathcal X}
\left\{
\mathbf x^\top\mathbf y-U(\mathbf x)
\right\}.
\end{equation}
The Fenchel--Young gap associated with \(U\) is
\begin{equation}
\label{eq:gap}
F_U(\mathbf{x},\mathbf{y}):=U(\mathbf{x})+U^*(\mathbf{y})-\mathbf{x}^\top\mathbf{y}.
\end{equation}
By the Fenchel--Young inequality \cite[Proposition~13.13]{bauschke2011}, \(F_U\geq0\), and if \(U\) is differentiable, \cite[Theorem~16.23]{bauschke2011} gives \(F_U(\mathbf x,\mathbf y)=0\) if and only if \(\mathbf y=\nabla U(\mathbf x)\). It thus quantifies the mismatch between \(\mathbf x\) and \(\mathbf y\) through \(\nabla U\), without requiring the two to live in comparable coordinates, which makes it a natural building block for the Lyapunov function of Section~\ref{sec:proof:sec}. 

Moreover, if \(U:\mathcal X\to\R\) is differentiable and strictly convex, its gradient is strictly monotone \cite[Chapter~22]{bauschke2011}, i.e.,
\begin{equation}
\label{eq:strict-monotone}
(\mathbf x-\mathbf y)^\top
\left(
\nabla U(\mathbf x)-\nabla U(\mathbf y)
\right)
>0,
\ \
\forall\,\mathbf x,\mathbf y\in\mathcal X,\ \mathbf x\neq\mathbf y.
\end{equation}

\subsection{Homogeneity}
\label{subsec:homogeneity}

\begin{definition}[Homogeneity]
Given $\mathcal{X}\subseteq \mathbb{R}^n$, a function \(U:\mathcal X\to\R\) is \emph{homogeneous of degree \(d\in\mathbb{R}\)} if
\(U(\lambda\mathbf x)=\lambda^dU(\mathbf x)\) for every \(\mathbf x\in\mathcal X\) and \(\lambda>0\).
Likewise, a set-valued map \(\mathcal S:\mathcal X\rightrightarrows\mathcal X\) is homogeneous of degree \(d\in\mathbb{R}\) if
\(\mathcal S(\lambda\mathbf x)=\lambda^d\mathcal S(\mathbf x)\).
\end{definition}

For a fixed integer \( m\geq 1\), define the homogeneity weights
\begin{equation}
\label{eq:hom:weights}
r_\mu:=\frac{m+1-\mu}{m+1},
\qquad
\mu=0,\ldots,m.
\end{equation}
For \(\mathbf x=\operatorname{col}(\mathbf x_0,\ldots,\mathbf x_m)\in\mathcal X^{m+1}\) and \(\lambda>0\), define the weighted dilation
\begin{equation}
\label{eq:weighted-dilation}
\delta(\mathbf x;\lambda)
:=
\operatorname{col}
\left(
\lambda^{r_0}\mathbf x_0,\ldots,
\lambda^{r_m}\mathbf x_m
\right).
\end{equation}
The following is the standard definition of weighted homogeneity adapted to the concrete weights $r_\mu$ used in this work.
\begin{definition}[Weighted homogeneity]
Given $\mathcal{X}\subseteq \mathbb{R}^n$, a function \(V:\mathcal X^{m+1}\to\R\) is \emph{weighted homogeneous of degree \(d\in\mathbb{R}\)} if
\(V(\delta(\mathbf x;\lambda))=\lambda^dV(\mathbf x)\) for every \(\mathbf x\in\mathcal X^{m+1}\) and \(\lambda>0\).
Likewise, a set-valued map
\(\mathcal F:\mathcal X^{m+1}\rightrightarrows\mathcal X^{m+1}\)
is weighted homogeneous of degree \(d\) if
\(
\mathcal F(\delta(\mathbf x;\lambda))
=
\left\{
\lambda^d\delta(\mathbf v;\lambda):
\mathbf v\in\mathcal F(\mathbf x)
\right\}
\)
for every \(\mathbf x\in\mathcal X^{m+1}\) and \(\lambda>0\).
\end{definition}

Finally, define the homogeneous gauge characterized by \(\varrho(\delta(\mathbf x;\lambda))=\lambda\varrho(\mathbf x)\) and expressed as:
\begin{equation}
\label{eq:hom:norm}
\varrho(\mathbf x)
:=
\sum_{\mu=0}^m
\|\mathbf x_\mu\|^{1/r_\mu}.
\end{equation}

\section{Abstract homogeneous chains}

We now introduce the class of dynamical systems that provides the common representation used throughout this work. Consider the state \(\mathbf{q}=\operatorname{col}(\mathbf{q}_0,\cdots,\mathbf{q}_m)\in\mathcal{X}^{m+1}\), \(m\geq1\), where \(\mathcal{X}\subseteq\mathbb{R}^{n}\) is a linear subspace. We consider differential inclusions of the form
\begin{equation}
\begin{aligned}
\dot{\mathbf{q}}_\mu&=-k_\mu L^{1-r_{\mu+1}}\mathbf{Q}\nabla U_\mu(\mathbf{q}_0)+\mathbf{q}_{\mu+1}-\gamma r_\mu\mathbf{q}_\mu,\\ &\mu=0,\ldots,m-1,\\
\dot{\mathbf{q}}_m&\in-L\mathbf{Q}\left(k_m\mathcal{S}(\mathbf{q}_0)-\mathcal{D}\right)-\gamma r_m\mathbf{q}_m,
\end{aligned}
\label{eq:system}
\end{equation}
where \(\mathbf{Q}\in\mathbb{R}^{n\times n}\) satisfies \(\mathbf{Q}\mathcal{X}\subseteq\mathcal{X}\), \(U_0,\dots,U_{m-1}:\mathcal{X}\to\R_{\geq0}\) are differentiable functions, \(\mathcal{S}:\mathcal{X}\rightrightarrows\mathcal{X}\) is a set-valued map, and \(\mathcal{D}\subset\mathcal{X}\) is the set of admissible disturbances. The parameters satisfy \(L>0\), \(\gamma\geq0\), and \(k_0,\dots,k_m>0\), and the homogeneity weights $r_\mu$ are defined in \eqref{eq:hom:weights}.\footnote{A solution of \eqref{eq:system} on an interval \([0,T)\) is an absolutely continuous function \(\mathbf{q}:[0,T)\to\mathcal{X}^{m+1}\) for which \eqref{eq:system} holds for almost every \(t\in[0,T)\). }

The following definition identifies the structural properties of \eqref{eq:system} that characterize the class considered in this work.

\begin{definition}[Abstract Homogeneous Chain]
\label{def:ahc}
We say that \eqref{eq:system} is an AHC if the following conditions hold:
\begin{enumerate}
\item \(\mathbf{Q}+\mathbf{Q}^\top\) is positive definite on \(\mathcal X\).
\item For each \(\mu=0,\ldots,m-1\), the function \(U_\mu:\mathcal{X}\to\R_{\geq0}\) is continuously differentiable, convex, positive definite, and homogeneous of degree \(1+r_{\mu+1}=1+\frac{m-\mu}{m+1}\), i.e.,
\begin{equation}
\label{eq:homogenenity:U}
U_\mu(\lambda\mathbf{q})=\lambda^{1+r_{\mu+1}}U_\mu(\mathbf{q})
\end{equation}
for every \(\mathbf{q}\in\mathcal{X}\) and every \(\lambda\geq0\). Its conjugate \(U_\mu^*\) is twice continuously differentiable.
\item The map \(\mathcal{S}:\mathcal{X}\rightrightarrows\mathcal{X}\) is upper semicontinuous and has nonempty, compact, and convex values. It is homogeneous of degree $0$ on \(\mathcal X\). Moreover, there exists \(c_{\mathcal{S}}>0\) such that for every \(\mathbf{q}\in\mathcal{X}\), \(\mathbf{s}\in\mathcal{S}(\mathbf{q})\) then,
\[
\mathbf{q}^\top\mathbf{s}\geq c_{\mathcal{S}}\|\mathbf{q}\|.
\]
\item The disturbance set \(\mathcal{D}\subset\mathcal{X}\) is nonempty, compact, and convex,\footnote{Following from \cite{filippov1988differential}, together with the corresponding conditions on \(\mathcal S\), these assumptions ensure existence of solutions to \eqref{eq:system}.} and \(\|\mathbf{d}\|\leq1\) for every \(\mathbf{d}\in\mathcal{D}\).
\end{enumerate}
\end{definition}

The conditions in Definition~\ref{def:ahc} balance two requirements: they provide the structure needed for the Lyapunov construction developed in Section~\ref{sec:proof:sec}, while remaining broad enough to encompass both classical and networked HOSM dynamics.

To make this abstraction concrete, consider the standard scalar robust exact differentiator as a first example. Choosing the AHC parameters according to the first row of Table~\ref{tab:applications}, with \(\mathcal D=[-1,1]\) and \(\gamma=0\), yields
\[
\begin{aligned}
\dot q_\mu&=-k_\mu L^{1-r_{\mu+1}}\left\lceil q_0\right\rfloor^{r_{\mu+1}}+q_{\mu+1},\qquad \mu=0,\ldots,m-1,\\
\dot q_m&\in-Lk_m\operatorname{sign}(q_0)+[-L,L],
\end{aligned}
\]
with \(q_0(t),\dots,q_m(t)\in\mathbb{R}\). These are precisely the error dynamics of the arbitrary-order robust exact differentiator in~\cite{levant2003}. Proposition~\ref{prop:power-potential} in Appendix~\ref{sec:auxiliary}, applied with \(\mathcal X=\R\), \(\mathbf M=1\), and \(r=r_{\mu+1}\), verifies condition~2 of Definition~\ref{def:ahc} for this choice of \(U_0,\dots,U_{m-1}\). Thus, the scalar robust exact differentiator is an AHC. Section~\ref{sec:applications} shows that the same structure also arises in substantially different multi-agent systems.

The central stability result for AHCs is the following.
\begin{theorem}
\label{thm:main}
Let \(m\geq1\) be an integer, \(L>0\), and \(\gamma\geq0\). Let \(\mathbf{Q}\), \(U_\mu\), \(\mathcal S\), and \(\mathcal D\) satisfy the conditions in Definition~\ref{def:ahc}. Then, there exist gains \(k_0,\ldots,k_m>0\) such that the origin of \eqref{eq:system} is globally finite-time stable.\footnote{Global finite-time stability means that the origin is Lyapunov stable and every solution reaches the origin in finite time~\cite{Bhat2005}.}
\end{theorem}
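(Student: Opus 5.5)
Our plan is to build a weighted-homogeneous Lyapunov function for \eqref{eq:system} out of Fenchel--Young gaps, and to choose the gains recursively from the last integrator backwards. First we remove $L$ and $\gamma$ from the problem. Rescaling time and states by $\delta(\cdot;L)$-type dilations maps \eqref{eq:system} to the case $L=1$. The terms $-\gamma r_\mu\mathbf q_\mu$ equal $-\gamma$ times the generator of the dilation $\delta$. Along a weighted-homogeneous $V$ of degree $p$, they therefore contribute $-\gamma p V\le 0$ by Euler's identity. So it suffices to treat $L=1$, $\gamma=0$, provided the $V$ we build is weighted homogeneous. The vector field is weighted homogeneous of degree $-1$ with respect to the weights $r_\mu$: $\nabla U_\mu$ is homogeneous of degree $r_{\mu+1}$, $\mathcal S$ of degree $0$, and $\mathcal D$ is bounded. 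Once we have a continuous, positive definite, radially unbounded, weighted-homogeneous $V$ of degree $p>1$ whose set-valued derivative satisfies $\dot V\le -c\,V^{(p-1)/p}$, global finite-time stability follows by the standard argument \cite{Bhat2005}. By homogeneity, this reduces to showing $\max\dot V<0$ on the compact unit sphere $\{\varrho=1\}$.

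To construct $V$, we introduce the natural change of variables. The ideal sliding relations are $\mathbf q_{\mu+1}\approx k_\mu\mathbf Q\nabla U_\mu(\mathbf q_0)$, so we measure their mismatch with Fenchel--Young gaps. For each $\mu$, consider a term of the form $F_{U_\mu}(\mathbf q_0,\mathbf Q^{-1}\mathbf q_{\mu+1}/k_\mu)$, or a suitably reweighted version. Because $\mathbf Q+\mathbf Q^\top\succ0$ on $\mathcal X$, $\mathbf Q$ is invertible on $\mathcal X$. The conjugates $U_\mu^*$ are $C^2$, convex, and homogeneous of degree $(1+r_{\mu+1})/r_{\mu+1}$, so they grow like $\|\cdot\|^{1/r_{\mu+1}}$ in the weight of $\mathbf q_{\mu+1}$. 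We raise each term to a power so that all summands share a common degree $p$, and take
\[
V=\sum_{\mu} \beta_\mu\, F_\mu^{\,p/\deg F_\mu}+\beta\,\|\cdot\|\text{-type term in }\mathbf q_0,
\]
with positive weights $\beta_\mu$. The top-level term handling $\mathbf q_m$ against $\mathcal S$ and $\mathcal D$ is where the discontinuity lives. There the bound $\mathbf q^\top\mathbf s\ge c_{\mathcal S}\|\mathbf q\|$ together with $\|\mathbf d\|\le1$ yields a negative contribution $-(k_m c_{\mathcal S}-1)\|\cdot\|$ once $k_m>1/c_{\mathcal S}$. That condition fixes the base of the recursion.

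For the recursion itself, we differentiate $V$ and use $\nabla U^*_\mu=(\nabla U_\mu)^{-1}$ together with the strict monotonicity \eqref{eq:strict-monotone}. The cross terms generated by $\mathbf Q$ are then signed by $\mathbf Q+\mathbf Q^\top\succ0$: the inner products $\nabla U_\mu(\mathbf q_0)^\top\mathbf Q\,\nabla U_\mu(\mathbf q_0)$ are positive and homogeneous. Each gain $k_\mu$ is chosen large relative to $k_{\mu+1},\dots,k_m$, in the spirit of the classical Levant/Cruz--Moreno backward tuning. Each threshold is then the value of a finite-dimensional optimization, namely a max/min of a continuous homogeneous function over the compact unit sphere. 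Compactness and upper semicontinuity of $\mathcal S$ make these extrema attained and finite.

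The main obstacle is the cross-coupling terms between levels. Because $\mathbf Q$ is not symmetric and the $U_\mu$ are not powers of a common norm, the derivative of the $\mu$-th gap contains terms in $\mathbf q_{\mu+2}$ and in $\nabla^2U^*_\mu$ that are not obviously dominated. We would handle them with a homogeneous Young inequality: each indefinite term, of degree $p-1$, is bounded by the sum of a small multiple of the negative definite terms from levels $\mu$ and $\mu+1$. We would then argue, as in the super-twisting case \cite{aldana2026b}, that on the sphere the set where the dominant negative term vanishes is exactly the next-level manifold. By continuity and compactness, a sufficiently large $k_\mu$, or a small $\beta_\mu$, makes $\dot V$ strictly negative there. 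The delicate point is that $V$ need only be $C^1$ off the origin, given that $U_\mu^*\in C^2$. We also expect to need a nonsmooth chain rule for the set-valued derivative.
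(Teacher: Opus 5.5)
\paragraph{Gap: the proposed Lyapunov function cannot decrease strictly.} Your reduction steps are sound. Removing $L$ works (the plain scaling $\mathbf q\mapsto\mathbf q/L$ suffices). Disposing of $\gamma$ through Euler's identity $\sum_\mu r_\mu\nabla_{\mathbf x_\mu}V^\top\mathbf x_\mu=d_VV$ is valid and simpler than the paper's time reparametrization. The proof breaks at the Lyapunov function itself. You anchor every gap at $\mathbf q_0$, namely $G_\mu:=F_{U_\mu}(\mathbf q_0,\widetilde{\mathbf q}_{\mu+1})$ with $\widetilde{\mathbf q}_{\mu+1}:=\mathbf Q^{-1}\mathbf q_{\mu+1}/k_\mu$, and you add a term in $\mathbf q_0$. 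Consider the nontrivial cone $\mathcal M:=\{\widetilde{\mathbf q}_{\mu+1}=\nabla U_\mu(\mathbf q_0),\ \mu=0,\ldots,m-1\}$, the analogue of the paper's $\mathcal C_{m-1}$. On $\mathcal M$ one has $\dot{\mathbf q}_0=\cdots=\dot{\mathbf q}_{m-1}=\mathbf 0$, and every $G_\mu$ vanishes together with its gradient. Hence $\dot V=0$ on $\mathcal M\setminus\{\mathbf 0\}$ for every choice of gains and weights, so $\max\dot V<0$ on $\{\varrho=1\}$ is unattainable. The terminal contribution $-(k_mc_{\mathcal S}-1)\|\cdot\|$ you invoke has no source in your $V$. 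The set $\mathcal S$ enters only through $\partial G_{m-1}/\partial\widetilde{\mathbf q}_m=\nabla U_{m-1}^*(\widetilde{\mathbf q}_m)-\mathbf q_0$, which vanishes on $\mathcal M$. Moreover, the piece $\mathbf q_0^\top(k_m\mathbf s-\mathbf d)\ge(k_mc_{\mathcal S}-1)\|\mathbf q_0\|$ enters $\dot G_{m-1}$ with a \emph{positive} sign. The paper obtains this contribution from the terminal potential $V_m=U_{m-1}^*(\mathbf x_m)$. Its gradient $\boldsymbol\xi_m=\nabla U_{m-1}^*(\mathbf x_m)$ coincides with $\boldsymbol\xi_0$ on $\mathcal C_{m-1}$, which gives $-\widetilde k_m\boldsymbol\xi_0^\top(\mathbf s-\mathbf d/k_m)\le-\widetilde k_m(c_{\mathcal S}-1/k_m)\|\boldsymbol\xi_0\|$.

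\paragraph{Anchoring at $\mathbf q_0$ also breaks the gain recursion.} This persists even if you add that terminal term. For $1\le\nu\le m-1$, the state $\mathbf x_\nu$ enters your $V$ only through $G_{\nu-1}$. That gap vanishes to first order on $\mathcal C_{\nu-1}=\{\boldsymbol\xi_0=\cdots=\boldsymbol\xi_\nu\}$, so $k_\nu$ does not appear in $\dot V$ there, and choosing ``$k_\nu$ large'' has nothing to dominate. Concretely, take the scalar differentiator with $m=2$ and $V=\beta_0G_0^{\rho_0}+\beta_1G_1^{\rho_1}+\beta_2U_1^*(x_2)^{\rho_2}$ in normalized coordinates. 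At $\xi_0=\xi_1=1$, $\xi_2=-1$ one gets
\[
\dot V=\bigl[-2\beta_1\rho_1G_1^{\rho_1-1}-\beta_2\rho_2U_1^*(x_2)^{\rho_2-1}\bigr]\dot x_2,\qquad \dot x_2=-\widetilde k_2(1-d/k_2)<0.
\]
So $\dot V>0$ for all weights, all gains with $k_2>1$, and all $|d|\le1$.

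\paragraph{The missing idea: chain consecutive levels.} The paper sets $V_\mu=F_{U_{\mu-1}}(\nabla U_\mu^*(\mathbf x_{\mu+1}),\mathbf x_\mu)$. In the coordinates $\boldsymbol\xi_\mu=\nabla U_{\mu-1}^*(\mathbf x_\mu)$, $\dot V_\nu$ then contains
\[
-\widetilde k_\nu(\boldsymbol\xi_\nu-\boldsymbol\xi_{\nu+1})^\top\bigl(\nabla U_\nu(\boldsymbol\xi_0)-\nabla U_\nu(\boldsymbol\xi_{\nu+1})\bigr).
\]
On $\mathcal C_{\nu-1}$, where $\boldsymbol\xi_0=\boldsymbol\xi_\nu$, this is $-\widetilde k_\nu$ times a strictly monotone quantity that vanishes exactly on $\mathcal C_\nu$. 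This nested lever is what the homogeneous domination on the cones $\mathcal C_{\nu-1}$ exploits. It is also where the Hessian $\nabla^2U_\nu^*$ you anticipated actually arises. Without this structure, a level-by-level Young-inequality domination has no negative-definite terms to absorb into.
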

The proof of Theorem~\ref{thm:main} is given in Section~\ref{sec:proof:sec}.

\section{Applications}
\label{sec:applications}
We illustrate Theorem~\ref{thm:main} through three distributed observer problems involving a team of \(N\) agents. The team is able to interact through a communication network modeled by a fixed connected undirected graph \(\mathcal G=(\mathcal V,\mathcal E)\), with node set \(\mathcal V=\{1,\ldots,N\}\) and edge set $\mathcal{E}\subseteq \mathcal{V}\times \mathcal{V}$ defining neighbor sets \(\mathcal N_i\subseteq\mathcal{V}\). Assign an arbitrary orientation to the edges and let
\(\mathbf D=[\mathbf d_1,\ldots,\mathbf d_{|\mathcal E|}]\in\R^{N\times|\mathcal E|}\)
be the corresponding incidence matrix. Each column \(\mathbf d_\ell\in\mathbb{R}^N\) corresponds to an edge, with one entry equal to \(1\) at its tail, one equal to \(-1\) at its head, and zeros elsewhere. The graph Laplacian is \(\mathbf L:=\mathbf D\mathbf D^\top\). 

For each problem, the proposed algorithm yields an error system of the form
\eqref{eq:system} for specific choices of \(\mathcal X\), \(\mathbf{Q}\), \(U_\mu\), and \(\mathcal S\) summarized in Table~\ref{tab:applications}. The following subsections derive these error systems and state the corresponding
consequences of Theorem~\ref{thm:main}. The proofs of the resulting corollaries
are collected in Appendix~\ref{sec:proof:corollaries} for readability.
	
\begin{table*}[t]
\centering
\caption{\emph{AHC representation of different applications, including the multi-agent systems of Section~\ref{sec:applications}. Here, \([\mathbf x]_i\) denotes the \(i\)-th \(d\)-dimensional block of \(\mathbf x\), with \(d=1\) for leader--follower distributed differentiation and \(d\) the ambient-space dimension for affine formation tracking. The reported values of \(c_{\mathcal S}\) can be computed directly from the corresponding matrices.}}
\label{tab:applications}
\renewcommand{\arraystretch}{1.4}
\begin{tabular}{c|c|c|c|c|c}
\hline
Problem
& \(\mathcal X\)
& \(\mathbf{Q}\)
& \(U_\mu(\mathbf q)\)
& \(\mathbf{Q}\nabla U_\mu(\mathbf q)\)
& \(\mathcal S(\mathbf q)\)
\\
\hline

\begin{tabular}{c}
Robust exact\\
differentiator\\
\end{tabular}
&
\(\R\)
&
\(1\)
&
\(\displaystyle
\frac{1}{1+r_{\mu+1}}
|q|^{1+r_{\mu+1}}
\)
&
\(\left\lceil q\right\rfloor^{r_{\mu+1}}\)
&
\(\begin{gathered}
\operatorname{sign}(q)\\
c_\mathcal{S}=1
\end{gathered}\)
\\
\hline

\begin{tabular}{c}
Leader--follower\\ distributed\\
differentiation
\end{tabular}
&
\(\R^N\)
&
\((\mathbf L+\mathbf B)^{-1}\)
&
\(\displaystyle
\frac{1}{1+r_{\mu+1}}
\sum_{i=1}^N
|[(\mathbf L+\mathbf B)\mathbf q]_i|^{1+r_{\mu+1}}
\)
&
\(\left\lceil
(\mathbf L+\mathbf B)\mathbf q
\right\rfloor^{r_{\mu+1}}\)
&
\(\begin{gathered}
(\mathbf L+\mathbf B)
\operatorname{sign}((\mathbf L+\mathbf B)\mathbf q)\\
c_{\mathcal S}=\lambda_{\min}(\mathbf L+\mathbf B)
\end{gathered}\)
\\
\hline

\begin{tabular}{c}
Leaderless\\ distributed\\
differentiation
\end{tabular}
&
\(\mathbf1^\perp\)
&
\(\mathbf I_N\)
&
\(\displaystyle
\frac{1}{1+r_{\mu+1}}
\sum_{\ell=1}^{|\mathcal E|}
|\mathbf d_\ell^\top\mathbf q|^{1+r_{\mu+1}}
\)
&
\(\mathbf D\left\lceil\mathbf D^\top\mathbf q\right\rfloor^{r_{\mu+1}}\)
&
\(\begin{gathered}
\mathbf D\operatorname{sign}(\mathbf D^\top\mathbf q)\\
c_{\mathcal S}=\sqrt{\lambda_2(\mathbf L)}
\end{gathered}\)
\\
\hline

\begin{tabular}{c}
Multi-leader affine\\
formation tracking
\end{tabular}
&
\(\R^{dN_F}\)
&
\(\mathbf\Omega_{FF}^{-1}\)
&
\(\displaystyle
\frac{1}{1+r_{\mu+1}}
\sum_{i=1}^{N_F}
\left\|[\mathbf\Omega_{FF}\mathbf q]_i\right\|^{1+r_{\mu+1}}
\)
&
\(\displaystyle
\operatorname{col}_{i=1}^{N_F}
\left(
\vsgn{[\mathbf\Omega_{FF}\mathbf q]_i}{r_{\mu+1}}
\right)
\)
&
\(\begin{gathered}
\displaystyle
\mathbf\Omega_{FF}
\operatorname{col}_{i=1}^{N_F}
\left(
\vsgn{[\mathbf\Omega_{FF}\mathbf q]_i}{0}
\right)\\
c_{\mathcal S}=\lambda_{\min}(\mathbf\Omega_{FF})
\end{gathered}\)
\\
\hline
\end{tabular}
\end{table*}

\subsection{Leader-follower distributed differentiation}
\label{subsec:leader}

In the leader-follower distributed differentiation problem we assume \(z_0(t)\in\R\) is a leader signal available only to a nonempty subset of agents. For each \(i\in\mathcal V\), let \(b_i=1\) if agent \(i\) has access to \(z_0(t)\), and \(b_i=0\) otherwise, and define \(\mathbf B:=\operatorname{diag}(b_1,\ldots,b_N)\). The objective is for every agent to generate derivative estimates \(\{\hat z_{i,\mu}(t)\}_{\mu=0}^m\) such that
\begin{equation}
\label{eq:leader:objective}
\lim_{t\to\infty}\left|\hat z_{i,\mu}(t)-z_0^{(\mu)}(t)\right|=0,\qquad \mu=0,\ldots,m.
\end{equation}
For this purpose, each agent implements
\begin{equation}
\label{eq:leader:protocol}
\begin{aligned}
\dot{\hat z}_{i,\mu}
&=-k_\mu L^{1-r_{\mu+1}}
\left\lceil
\sum_{j\in\mathcal N_i}(\hat z_{i,0}-\hat z_{j,0})
+b_i(\hat z_{i,0}-z_0)
\right\rfloor^{r_{\mu+1}}\\
&+\hat z_{i,\mu+1}, \quad\mu=0,\ldots,m-1,\\
\dot{\hat z}_{i,m}
&\in-k_mL
\operatorname{sign}\left(
\sum_{j\in\mathcal N_i}(\hat z_{i,0}-\hat z_{j,0})
+b_i(\hat z_{i,0}-z_0)
\right),
\end{aligned}
\end{equation}
with parameters $L,k_0,\dots,k_m$. Only \(\hat z_{i,0}(t)\) is exchanged among neighboring agents. Convergence of \eqref{eq:leader:protocol} requires the following assumption.

\begin{assumption}
\label{ass:leader}
There exists \(L>0\) such that for every \(t\geq0\),
\[
|z_0^{(m+1)}(t)|\leq\frac{L}{\sqrt N}.
\]
\end{assumption}

To study convergence, let \(\hat{\mathbf z}_\mu:=\operatorname{col}(\hat z_{1,\mu},\ldots,\hat z_{N,\mu})\) and
\begin{equation}
\label{eq:leader:error:coordinates}
\mathbf{q}_\mu:=\hat{\mathbf z}_\mu-\mathbf1 z_0^{(\mu)},\qquad \mu=0,\ldots,m.
\end{equation}
Moreover, define
\[
e_i:=\sum_{j\in\mathcal N_i}(\hat z_{i,0}-\hat z_{j,0})+b_i(\hat z_{i,0}-z_0),
\]
so that, with \(\mathbf e:=\operatorname{col}(e_1,\ldots,e_N)\), one has \[\mathbf e=(\mathbf L+\mathbf B)\mathbf{q}_0.\] Therefore, the error dynamics are
\begin{equation}
\label{eq:leader:error}
\begin{aligned}
\dot{\mathbf{q}}_\mu
&=-k_\mu L^{1-r_{\mu+1}}
\left\lceil(\mathbf L+\mathbf B)\mathbf{q}_0\right\rfloor^{r_{\mu+1}}
+\mathbf{q}_{\mu+1},\\
&\mu=0,\ldots,m-1,\\
\dot{\mathbf{q}}_m
&\in-k_mL\operatorname{sign}((\mathbf L+\mathbf B)\mathbf{q}_0)
-\mathbf1 z_0^{(m+1)}(t).
\end{aligned}
\end{equation}
With appropriate choices summarized in Table~\ref{tab:applications}, \eqref{eq:leader:error} is of the form of the AHC in \eqref{eq:system}, so Theorem~\ref{thm:main} can be used to conclude the following.

\begin{corollary}
\label{cor:leader}
Suppose that Assumption~\ref{ass:leader} holds and that at least one agent has access to the leader signal. Then, there exist gains \(k_0,\ldots,k_m>0\) such that the origin of \eqref{eq:leader:error} is globally finite-time stable. Consequently, \eqref{eq:leader:protocol} achieves \eqref{eq:leader:objective} for every initial condition.
\end{corollary}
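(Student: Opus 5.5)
The plan is to verify that \eqref{eq:leader:error} is an AHC in the sense of Definition~\ref{def:ahc} with the data of the second row of Table~\ref{tab:applications}, and then invoke Theorem~\ref{thm:main}. We take \(\mathcal X=\R^N\), \(\gamma=0\), \(\mathbf Q=(\mathbf L+\mathbf B)^{-1}\), \(U_\mu(\mathbf q)=\frac{1}{1+r_{\mu+1}}\sum_i|[(\mathbf L+\mathbf B)\mathbf q]_i|^{1+r_{\mu+1}}\), \(\mathcal S(\mathbf q)=(\mathbf L+\mathbf B)\operatorname{sign}((\mathbf L+\mathbf B)\mathbf q)\), and \(\mathcal D=\{-\mathbf 1 z_0^{(m+1)}(t)/L\}\), enlarged to the Euclidean unit ball to obtain a time-invariant inclusion. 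The preliminary fact needed throughout is that \(\mathbf L+\mathbf B\) is symmetric positive definite when the graph is connected and \(\mathbf B\neq\mathbf 0\). Indeed, \(\mathbf x^\top(\mathbf L+\mathbf B)\mathbf x=\sum_{\ell}(\mathbf d_\ell^\top\mathbf x)^2+\sum_i b_ix_i^2\) vanishes only if \(\mathbf x\) is a multiple of \(\mathbf 1\) and \(x_i=0\) for some pinned \(i\), that is, only if \(\mathbf x=\mathbf 0\).

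\emph{Matching the dynamics.} Since \(\nabla U_\mu(\mathbf q)=(\mathbf L+\mathbf B)\lceil(\mathbf L+\mathbf B)\mathbf q\rfloor^{r_{\mu+1}}\), we get \(\mathbf Q\nabla U_\mu=\lceil(\mathbf L+\mathbf B)\mathbf q\rfloor^{r_{\mu+1}}\), which is exactly the first equations. Similarly, \(\mathbf Q(k_m\mathcal S-\mathcal D)\) becomes \(k_m\operatorname{sign}(\cdot)-(\mathbf L+\mathbf B)^{-1}\mathcal D\). The \(\mathcal D\) in \eqref{eq:system} is therefore \(\mathcal D=\{-(\mathbf L+\mathbf B)\mathbf 1 z_0^{(m+1)}/L\}\), and the multiplier \(L\) is absorbed in it. This is the only delicate normalization. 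Assumption~\ref{ass:leader} gives \(\|\mathbf 1 z_0^{(m+1)}\|\le L\), but the needed bound is \(\|(\mathbf L+\mathbf B)\mathbf 1 z_0^{(m+1)}/L\|\le 1\). I would handle this by rescaling. Take \(\mathcal D\) as the closed ball of radius \(\|\mathbf L+\mathbf B\|\), and write \(-L\mathbf Q(k_m\mathcal S-\mathcal D)=-L'\mathbf Q(k_m'\mathcal S-\mathcal D')\) with \(L'=L\|\mathbf L+\mathbf B\|\), \(\mathcal D'\) the unit ball, and \(k_m'=k_m/\|\mathbf L+\mathbf B\|\). The factors \(L^{1-r_{\mu+1}}\) in the first \(m\) equations are then matched by rescaling \(k_\mu\) accordingly. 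Existence of gains for the rescaled system is then equivalent to existence of gains for the original one. The same trick also shows that the solutions of \eqref{eq:leader:error} are among the solutions of the time-invariant inclusion, so stability transfers.

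\emph{Checking the conditions.} Condition~1 holds because \(\mathbf Q+\mathbf Q^\top=2(\mathbf L+\mathbf B)^{-1}\succ0\). Condition~2 follows from Proposition~\ref{prop:power-potential} with \(\mathbf M=\mathbf L+\mathbf B\) and \(r=r_{\mu+1}\), using invertibility of \(\mathbf M\), exactly as in the scalar example. Condition~3 has several parts:
\begin{itemize}
\item \(\mathcal S\) is the linear image of the componentwise set-valued sign composed with a linear map, so it is upper semicontinuous, has nonempty compact convex values, and is homogeneous of degree \(0\).
\item For \(\mathbf s=(\mathbf L+\mathbf B)\mathbf w\) with \(\mathbf w\in\operatorname{sign}(\mathbf e)\) and \(\mathbf e=(\mathbf L+\mathbf B)\mathbf q\), we have \(\mathbf q^\top\mathbf s=\mathbf e^\top\mathbf w=\|\mathbf e\|_1\ge\|\mathbf e\|\ge\lambda_{\min}(\mathbf L+\mathbf B)\|\mathbf q\|\).
\end{itemize}
The bound \(c_{\mathcal S}\) is unaffected by the rescaling of \(L\). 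Condition~4 holds by construction.

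\emph{Concluding.} Theorem~\ref{thm:main} then yields gains making the origin globally finite-time stable, so \(\mathbf q_\mu(t)=\mathbf 0\) after a finite time. By \eqref{eq:leader:error:coordinates}, \(\hat z_{i,\mu}=z_0^{(\mu)}\) thereafter, which implies \eqref{eq:leader:objective}. The main obstacle I expect is the disturbance normalization. The disturbance enters through \((\mathbf L+\mathbf B)\), so the radius condition \(\|\mathbf d\|\le1\) must be restored by redefining \(L\) and \(k_m\) consistently across all \(m+1\) equations. I would check carefully that this rescaling preserves the exact form of \eqref{eq:system}. If it does not, the fallback is to change time scale by \(t\mapsto\|\mathbf L+\mathbf B\|t\) together with a weighted dilation \(\delta\) of the state.
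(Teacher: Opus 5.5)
Your proof is correct, and on conditions~1--3 it matches the paper's: the same \(\mathbf Q\), \(U_\mu\), \(\mathcal S\), the same use of Proposition~\ref{prop:power-potential} with \(\mathbf M=\mathbf L+\mathbf B\), and the same bound \(\mathbf q^\top\mathbf s=\|\mathbf e\|_1\ge\lambda_{\min}(\mathbf L+\mathbf B)\|\mathbf q\|\). The genuine difference is the disturbance normalization, which you flag as the main obstacle.

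The paper settles it with one observation. Since \(\mathbf L\mathbf 1=\mathbf 0\), one has \((\mathbf L+\mathbf B)\mathbf 1=\mathbf B\mathbf 1\), and \(\|\mathbf B\mathbf 1\|\le\sqrt N\). Hence the required \(\mathbf d(t)=-z_0^{(m+1)}(t)\mathbf B\mathbf 1/L\) already satisfies \(\|\mathbf d(t)\|\le1\) under Assumption~\ref{ass:leader}. This is exactly why the assumption carries the factor \(1/\sqrt N\). The paper then takes \(\mathcal D=\{-(a/L)\mathbf B\mathbf 1:|a|\le L/\sqrt N\}\) and applies Theorem~\ref{thm:main} with the original \(L\), without any rescaling.

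Your route is also valid. You bound by \(\|\mathbf L+\mathbf B\|\) and pass to \(L'=L\|\mathbf L+\mathbf B\|\), with \(k_\mu'=k_\mu\|\mathbf L+\mathbf B\|^{-(1-r_{\mu+1})}\) for \(\mu<m\) and \(k_m'=k_m/\|\mathbf L+\mathbf B\|\). This preserves the exact form of \eqref{eq:system}, and Theorem~\ref{thm:main} holds for every \(L>0\), so the existence claim follows. Your time-scaling fallback is therefore unnecessary.

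What your route costs is conservatism. Condition \eqref{eq:km} applied to the rescaled system becomes \(k_m>\|\mathbf L+\mathbf B\|/\lambda_{\min}(\mathbf L+\mathbf B)\), instead of \(k_m>1/\lambda_{\min}(\mathbf L+\mathbf B)\). This is harmless for the corollary but would inflate the gains produced by Algorithm~\ref{alg:gain-selection}.

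Finally, your opening choice \(\mathcal D=\{-\mathbf 1z_0^{(m+1)}/L\}\) is inconsistent with \eqref{eq:system}, because it omits the factor \(\mathbf Q^{-1}=\mathbf L+\mathbf B\). You correct this in the next paragraph, but the final write-up should use only the corrected set.
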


\begin{remark}
The protocol \eqref{eq:leader:protocol} was introduced in \cite{elf}, where stability was established for arbitrary order \(m\). However, a Lyapunov function and numerical gain-proposal rules were provided only for \(m=1\). In contrast, Section~\ref{sec:gain-selection} provides a recursive numerical optimization-based gain-proposal procedure for general AHCs with \eqref{eq:leader:protocol} as a particular case.
\end{remark}

\subsection{Leaderless distributed differentiation}
\label{subsec:redcho}

A leaderless distributed differentiator aims to solve a dynamic average consensus problem where each agent \(i\in\mathcal{V}\) has access to a signal \(z_i(t)\in\mathbb{R}\) which may correspond to the output of a local sensor. The objective is for each agent to generate derivative estimates \(\{\hat z_{i,\mu}(t)\}_{\mu=0}^m\) such that
\begin{equation}
\label{eq:distro:diff}
\lim_{t\to\infty}\left|\hat z_{i,\mu}(t)-\bar z^{(\mu)}(t)\right|=0,\qquad \bar z(t):=\frac{1}{N}\sum_{i=1}^N z_i(t).
\end{equation}
To solve \eqref{eq:distro:diff}, each agent \(i\) introduces the auxiliary variables \(\{\upeta_{i,\mu}(t)\}_{\mu=0}^m\) and implements the REDCHO protocol \cite{redcho}:
\begin{equation}
\label{eq:redcho}
\begin{aligned}
\dot{\upeta}_{i,\mu}&=k_\mu L^{1-r_{\mu+1}}\sum_{j\in\mathcal N_i}\left\lceil\hat z_{i,0}-\hat z_{j,0}\right\rfloor^{r_{\mu+1}}+\upeta_{i,\mu+1}-\gamma r_\mu\upeta_{i,\mu},\\&\mu=0,\ldots,m-1,\\
\dot{\upeta}_{i,m}&\in k_mL\sum_{j\in\mathcal N_i}\operatorname{sign}(\hat z_{i,0}-\hat z_{j,0})-\gamma r_m\upeta_{i,m},\\
\hat z_{i,\mu}&=z_i^{(\mu)}-\sum_{\nu=0}^{m}[\mathbf G]_{\mu+1,\nu+1}\upeta_{i,\nu},\qquad \mu=0,\ldots,m,
\end{aligned}
\end{equation}
with parameters $L,k_0,\dots,k_m$, and fixed \(\gamma\geq 0\),
\[
\mathbf{\Gamma}=\begin{bmatrix}
-\gamma r_0 & 1 & 0 & \cdots & 0\\
0 & -\gamma r_1 & 1 & \ddots & \vdots\\
\vdots & \ddots & \ddots & \ddots & 0\\
0 & \cdots & 0 & -\gamma r_{m-1} & 1\\
0 & \cdots & \cdots & 0 & -\gamma r_m
\end{bmatrix},
\]
and \(\mathbf G=\operatorname{col}(\mathbf C,\mathbf C\mathbf{\Gamma},\ldots,\mathbf C\mathbf{\Gamma}^{m})\), with \(\mathbf C=[1\ 0\ \cdots\ 0]\). The matrix \(\mathbf G\) is nonsingular since it is the observability matrix of the observable pair \((\mathbf\Gamma,\mathbf C)\). Since its first row is \(\mathbf C\), only \(\hat z_{i,0}(t)=z_i(t)-\upeta_{i,0}(t)\) is exchanged. 
Convergence of \eqref{eq:redcho} requires the following assumption.

\begin{assumption}
\label{ass:redcho}
Given $\gamma\geq 0$, let \(\ell_0,\ldots,\ell_m\) be defined by the polynomial \(q(s)=\prod_{\mu=0}^{m}(s+\gamma r_\mu)=s^{m+1}+\sum_{\mu=0}^{m}\ell_\mu s^\mu\). Assume that there exists \(L>0\) such that \[|v_i(t)|\leq \frac{L}{\sqrt{N}}\] for every \(i=1,\ldots,N\) and \(t\geq0\) where
\[
v_i(t)=z_i^{(m+1)}(t)-\bar z^{(m+1)}(t)+\sum_{\mu=0}^{m}\ell_\mu\left(z_i^{(\mu)}(t)-\bar z^{(\mu)}(t)\right).
\]

\end{assumption}

To study convergence, let \(\hat{\mathbf z}_\mu:=\operatorname{col}(\hat z_{1,\mu},\ldots,\hat z_{N,\mu})\), \(\mathbf P:=\mathbf I_N-\mathbf1\mathbf1^\top/N\), and
\begin{equation}
\label{eq:redcho:disagreement}
\operatorname{col}
(\mathbf{q}_0,\ldots,\mathbf{q}_m)
=
(\mathbf G^{-1}\otimes\mathbf P)
\operatorname{col}
(\hat{\mathbf z}_0,\ldots,\hat{\mathbf z}_m).
\end{equation}
In particular, \(\mathbf{q}_0=\mathbf P\hat{\mathbf z}_0\), and for every \(\mu=0,\ldots,m\),
\[\mathbf{q}_\mu\in\mathbf1^\perp:=\{\mathbf x\in\R^N:\mathbf1^\top\mathbf x=0\}.\] It was shown in \cite{redcho} that convergence of the variables in \eqref{eq:redcho:disagreement} to the origin implies \eqref{eq:distro:diff}. Thus, it remains to establish stability of their dynamics, which a direct computation given in detail in \cite{redcho} leads to

\begin{equation}
\label{eq:redcho:error}
\begin{aligned}
\dot{\mathbf{q}}_\mu&=-k_\mu L^{1-r_{\mu+1}}\mathbf D\left\lceil\mathbf D^\top\mathbf{q}_0\right\rfloor^{r_{\mu+1}}+\mathbf{q}_{\mu+1}-\gamma r_\mu\mathbf{q}_\mu,\\ &\mu=0,\ldots,m-1,\\
\dot{\mathbf{q}}_m&\in-L\left(k_m\mathbf D\operatorname{sign}(\mathbf D^\top\mathbf{q}_0)-\mathbf d(t)\right)-\gamma r_m\mathbf{q}_m,
\end{aligned}
\end{equation}
where \(\mathbf d(t):=\mathbf v(t)/L\), with \(\mathbf v(t):=\operatorname{col}(v_1(t),\ldots,v_N(t))\), and the set-valued sign in \(\mathbf D\operatorname{sign}(\mathbf D^\top\mathbf q_0)\) is understood with one selection per edge. By Assumption~\ref{ass:redcho}, \(\|\mathbf d(t)\|\leq1\).

With appropriate choices summarized in Table~\ref{tab:applications}, the error dynamics \eqref{eq:redcho:error} are of the form of the AHC \eqref{eq:system}, so Theorem \ref{thm:main} can be used to conclude the following.

\begin{corollary}
\label{cor:redcho}
Suppose that Assumption~\ref{ass:redcho} holds and \(\gamma\geq0\). Then, there exist gains \(k_0,\ldots,k_m>0\) such that the origin of \eqref{eq:redcho:error} is globally finite-time stable. Moreover, if \(\gamma>0\), \eqref{eq:redcho} achieves \eqref{eq:distro:diff} for every initial condition, while if \(\gamma=0\), the same conclusion holds provided that
\begin{equation}
    \label{eq:initial}
    \sum_{i=1}^N \upeta_{i,\mu}(0) = 0, \mu=0,\dots,m.
\end{equation}
\end{corollary}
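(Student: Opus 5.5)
The plan is to verify that \eqref{eq:redcho:error} is an AHC in the sense of Definition~\ref{def:ahc}, apply Theorem~\ref{thm:main}, and then transfer finite-time convergence of the disagreement variables \eqref{eq:redcho:disagreement} back to \eqref{eq:distro:diff}.

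\textbf{Step 1: the AHC data.} Following the third row of Table~\ref{tab:applications}, take \(\mathcal X=\mathbf1^\perp\) and \(\mathbf Q=\mathbf I_N\), so that \(\mathbf Q\mathcal X\subseteq\mathcal X\) and \(\mathbf Q+\mathbf Q^\top=2\mathbf I\) is positive definite. Further take
\[
U_\mu(\mathbf q)=\tfrac{1}{1+r_{\mu+1}}\textstyle\sum_\ell|\mathbf d_\ell^\top\mathbf q|^{1+r_{\mu+1}},
\qquad \mathcal S(\mathbf q)=\mathbf D\operatorname{sign}(\mathbf D^\top\mathbf q),
\]
and let \(\mathcal D\) be the closed unit ball of \(\mathbf1^\perp\).

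\textbf{Step 2: condition 2.} Since \(\mathbf1^\top\mathbf D=\mathbf0\), every column \(\mathbf d_\ell\) lies in \(\mathbf1^\perp\). Hence the gradient \(\mathbf D\lceil\mathbf D^\top\mathbf q\rfloor^{r_{\mu+1}}\) and the elements of \(\mathcal S\) lie in \(\mathcal X\), and \(\mathbf d(t)=\mathbf v(t)/L\in\mathbf1^\perp\) because \(\sum_i v_i=0\). I would invoke Proposition~\ref{prop:power-potential} of Appendix~\ref{sec:auxiliary} with \(\mathbf M=\mathbf D^\top\) restricted to \(\mathcal X\). That matrix is injective on \(\mathbf1^\perp\) because the graph is connected, so \(\ker\mathbf D^\top=\operatorname{span}\mathbf1\). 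The proposition then gives \(C^1\), convexity, positive definiteness, the homogeneity degree, and the \(C^2\) conjugate. This step depends on that proposition covering injective \(\mathbf M\) on a subspace, and on the conjugate being taken relative to \(\mathcal X\).

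\textbf{Step 3: condition 3.} \(\mathcal S\) is a linear image of a product of set-valued signs. It is therefore upper semicontinuous with nonempty compact convex values, and it is homogeneous of degree \(0\). For any selection \(\mathbf s=\mathbf D\mathbf w\) with \(\mathbf w\in\operatorname{sign}(\mathbf D^\top\mathbf q)\),
\[
\mathbf q^\top\mathbf s=\|\mathbf D^\top\mathbf q\|_1\ge\|\mathbf D^\top\mathbf q\|_2=\sqrt{\mathbf q^\top\mathbf L\mathbf q}\ge\sqrt{\lambda_2(\mathbf L)}\,\|\mathbf q\|
\]
on \(\mathbf1^\perp\), so \(c_{\mathcal S}=\sqrt{\lambda_2(\mathbf L)}\).

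\textbf{Step 4: condition 4.} By Assumption~\ref{ass:redcho}, \(\|\mathbf v\|\le\sqrt N\cdot L/\sqrt N=L\), so \(\|\mathbf d(t)\|\le1\). Every solution of \eqref{eq:redcho:error} is therefore a solution of the inclusion \eqref{eq:system} with this \(\mathcal D\). Theorem~\ref{thm:main} then gives gains for which the origin is globally finite-time stable.

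\textbf{Step 5: back to the estimates.} The main obstacle is converting finite-time convergence of \(\mathbf q\) into \eqref{eq:distro:diff}. Since \(\mathbf G\) is invertible, \(\mathbf q=0\) implies \(\mathbf P\hat{\mathbf z}_\mu=0\), i.e.\ all agents agree. It remains to show that the common value \(\mathbf1^\top\hat{\mathbf z}_\mu/N\) converges to \(\bar z^{(\mu)}\). This reduces to showing \(\mathbf1^\top\boldsymbol\upeta_\mu\to0\).

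Summing \eqref{eq:redcho} over agents and using the antisymmetry of the edge terms, \(\lceil a\rfloor^{r}\) odd and one sign selection per edge, the network terms cancel. Thus \(\boldsymbol\sigma:=\operatorname{col}_\mu(\mathbf1^\top\boldsymbol\upeta_\mu)\) satisfies \(\dot{\boldsymbol\sigma}=\mathbf\Gamma\boldsymbol\sigma\). This system is upper triangular with eigenvalues \(-\gamma r_\mu\), and \(r_\mu>0\) for all \(\mu\). If \(\gamma>0\), \(\boldsymbol\sigma\to0\) exponentially from any initial condition. If \(\gamma=0\), \(\boldsymbol\sigma\) is generally not convergent (polynomial growth), but \eqref{eq:initial} gives \(\boldsymbol\sigma(0)=0\) and hence \(\boldsymbol\sigma\equiv0\).

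Finally, \(\mathbf1^\top\hat{\mathbf z}_\mu/N=\bar z^{(\mu)}-\sum_\nu[\mathbf G]_{\mu+1,\nu+1}\sigma_\nu/N\). Combining this with consensus yields \eqref{eq:distro:diff}, consistent with the reduction already established in \cite{redcho}.
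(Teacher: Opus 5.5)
Your proposal is correct and follows the paper's proof: the same AHC data from Table~\ref{tab:applications}, Proposition~\ref{prop:power-potential} with \(\mathbf M=\mathbf D^\top\) injective on \(\mathbf1^\perp\) for condition~2, the \(\ell_1\) bound giving \(c_{\mathcal S}=\sqrt{\lambda_2(\mathbf L)}\), and \(\mathcal D\) the unit ball of \(\mathbf1^\perp\), followed by Theorem~\ref{thm:main}. The only difference is the final step: the paper simply cites \cite{redcho} for \(\gamma>0\) and \cite{edcho} for \(\gamma=0\), whereas you prove the implication directly, showing that the agent sums \(\sigma_\mu:=\sum_i\upeta_{i,\mu}\) obey \(\dot{\boldsymbol\sigma}=\mathbf\Gamma\boldsymbol\sigma\) under the paper's one-selection-per-edge convention, which is Hurwitz for \(\gamma>0\) and nilpotent for \(\gamma=0\); this makes self-contained why \eqref{eq:initial} is needed only in the latter case.
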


\begin{remark}
For \(\gamma=0\), \eqref{eq:redcho} reduces to the EDCHO protocol in \cite{edcho}, whose gain-design procedure leaves one gain to be selected sufficiently large without an arbitrary-order expression for the threshold. In contrast, Section~\ref{sec:gain-selection} provides a numerical optimization-based gain-proposal procedure for the general class of AHCs with EDCHO as a particular case. For \(\gamma>0\), Corollary~\ref{cor:redcho} also extends the REDCHO result in \cite{redcho} from local to global convergence.
\end{remark}

\subsection{Multi-leader affine formation tracking}
\label{subsec:affine}

We consider the affine formation tracking problem in \cite{zhao2018}. The \(N>N_L\) agents are partitioned into \(N_L=d+1\) leaders \(\mathcal I_L=\{1,\dots,N_L\}\) and \(N_F=N-N_L\) followers \(\mathcal I_F=\{N_L+1,\dots,N\}\), and \(\mathbf p_i(t)\in\R^d\) denotes the position of agent \(i\). Given a nominal configuration \(\{\mathbf p_i^\star\}_{i=1}^N\), the team must realize a time-varying affine transformation of this configuration. The leaders select which one. They move freely, and each follower must be in the position consistent with the current leader positions.

Two properties of the nominal configuration turn this requirement into an algebraic one. The first is that the admissible formations form a linear space. Let \(\{\omega_{ij}\}_{(i,j)\in\mathcal E}\) be symmetric stress parameters, i.e., \(\omega_{ij}=\omega_{ji}\), and define the corresponding stress matrix \(\mathbf\Omega\in\R^{N\times N}\) by
\[
[\mathbf\Omega]_{ij}
=
\begin{cases}
- \omega_{ij}, & i\neq j,\\[1mm]
\displaystyle\sum_{k\in\mathcal N_i} \omega_{ik}, & i=j.
\end{cases}
\]
Assume that the stress parameters form an equilibrium stress for the nominal configuration, i.e., \(\sum_{j\in\mathcal N_i}\omega_{ij}(\mathbf p_i^\star-\mathbf p_j^\star)=\mathbf0\) for every \(i\), and are selected according to \cite{zhao2018} so that \(\mathbf\Omega\) is positive semidefinite, the nominal configuration affinely spans \(\R^d\), and \(\operatorname{rank}(\mathbf\Omega)=N-d-1\). Then,
\[
\begin{aligned}
&\operatorname{Null}(\mathbf\Omega\otimes\mathbf I_d)
\equiv
\big\{
\operatorname{col}(\mathbf p_1,\ldots,\mathbf p_N)\in\R^{dN}\\
&\qquad\quad:
\mathbf p_i=\mathbf A\mathbf p_i^\star+\mathbf b,\ 
\mathbf A\in\R^{d\times d},\ 
\mathbf b\in\R^d,\ 
i=1,\ldots,N
\big\},
\end{aligned}
\]
where \(\operatorname{Null}(\bullet)\) denotes the nullspace. That is, the admissible formations are the configurations annihilated by the $\mathbf{\Omega}\otimes \mathbf{I}_d$.

The second property is that the leaders determine the followers uniquely. Let \(\mathbf p_L:=\operatorname{col}(\mathbf p_1,\ldots,\mathbf p_{N_L})\), \(\mathbf p_F:=\operatorname{col}(\mathbf p_{N_L+1},\ldots,\mathbf p_N)\), and partition the lifted stress matrix accordingly as
\[
\mathbf\Omega\otimes\mathbf I_d
=
\begin{bmatrix}
\mathbf\Omega_{LL} & \mathbf\Omega_{LF}\\
\mathbf\Omega_{LF}^{\top} & \mathbf\Omega_{FF}
\end{bmatrix}.
\]
If the nominal leader positions are affinely independent, then \(\mathbf\Omega_{FF}\) is symmetric positive definite \cite{zhao2018}. Moreover, consistent configurations belong to \(\operatorname{Null}(\mathbf\Omega\otimes\mathbf I_d)\), so its leader and follower components satisfy
\[
\mathbf\Omega_{LF}^{\top}\mathbf p_L+\mathbf\Omega_{FF}\mathbf p_F=\mathbf0.
\]
Since \(\mathbf\Omega_{FF}\) is invertible, any leader configuration \(\mathbf p_L(t)\) fixes the desired follower configuration through
\[
\mathbf p_{F,\mathrm{ref}}(t):=-\mathbf\Omega_{FF}^{-1}\mathbf\Omega_{LF}^{\top}\mathbf p_L(t).
\]

The reference \(\mathbf p_{F,\mathrm{ref}}(t)\) is therefore known in closed form, but not to the followers. Evaluating it requires the leader positions and the global stress matrix, whereas each follower only interacts with its neighbors.

The problem is thus one of distributed estimation. We construct, for every follower \(i\in\mathcal I_F\), an observer with states \(\{\hat{\mathbf p}_{i,\mu}\}_{\mu=0}^m\) such that
\begin{equation}
\label{eq:affine:objective}
\lim_{t\to\infty}
\left\|
\hat{\mathbf p}_{F,\mu}(t)
-
\mathbf p_{F,\mathrm{ref}}^{(\mu)}(t)
\right\|=0,
\qquad
\mu=0,\ldots,m,
\end{equation}
where \(\hat{\mathbf p}_{F,\mu}:=\operatorname{col}(\hat{\mathbf p}_{i,\mu})_{i\in\mathcal I_F}\). Each follower thus reconstructs not only the reference itself, but also its first \(m\) derivatives for its potential use in a tracking controller. For \(j\in\mathcal I_L\), set \(\hat{\mathbf p}_{j,0}:=\mathbf p_j\). Then, each follower \(i\in\mathcal I_F\) implements
\begin{equation}
\label{eq:affine:observer}
\begin{aligned}
\dot{\hat{\mathbf p}}_{i,\mu}
&=
-k_\mu L^{1-r_{\mu+1}}
\vsgn{\sum_{j\in\mathcal N_i}
\omega_{ij}
(\hat{\mathbf p}_{i,0}-\hat{\mathbf p}_{j,0})
}{r_{\mu+1}}
\\
&+\hat{\mathbf p}_{i,\mu+1},\qquad\mu=0,\ldots,m-1,\\
\dot{\hat{\mathbf p}}_{i,m}
&\in
-k_mL
\vsgn{
\sum_{j\in\mathcal N_i}
\omega_{ij}
(\hat{\mathbf p}_{i,0}-\hat{\mathbf p}_{j,0})
}{0},
\end{aligned}
\end{equation}
with parameters $L,k_0,\dots,k_m$. The only quantity a follower transmits is \(\hat{\mathbf p}_{i,0}(t)\), and the leaders broadcast their positions \(\mathbf p_i(t)\). Convergence of \eqref{eq:affine:observer} requires the following assumption.

\begin{assumption}
\label{ass:affine}
There exists \(L>0\) such that
\[
\left\|
\mathbf\Omega_{LF}^{\top}
\mathbf p_L^{(m+1)}(t)
\right\|
\leq L
\]
for every \(t\geq0\).
\end{assumption}

The assumption bounds how fast the leaders may maneuver. It limits the \((m+1)\)-th derivative of the reference the followers are asked to track. To study convergence, define the error
\begin{equation}
\label{eq:affine:error:coordinates}
\mathbf{q}_\mu
:=
\hat{\mathbf p}_{F,\mu}
-
\mathbf p_{F,\mathrm{ref}}^{(\mu)}
=
\hat{\mathbf p}_{F,\mu}
+
\mathbf\Omega_{FF}^{-1}
\mathbf\Omega_{LF}^{\top}
\mathbf p_L^{(\mu)},
\quad
\mu=0,\ldots,m.
\end{equation}
The disagreement signals driving \eqref{eq:affine:observer} can be stacked to obtain
\[
\begin{aligned}
\operatorname{col}_{i\in\mathcal I_F}
\left(
\sum_{j\in\mathcal N_i}
\omega_{ij}
(\hat{\mathbf p}_{i,0}-\hat{\mathbf p}_{j,0})
\right)
&=
\mathbf\Omega_{LF}^{\top}\mathbf p_L
+
\mathbf\Omega_{FF}\hat{\mathbf p}_{F,0}
\\&=
\mathbf\Omega_{FF}\mathbf q_0.
\end{aligned}
\]
Therefore, the error dynamics are
\begin{equation}
\label{eq:affine:error}
\begin{aligned}
\dot{\mathbf{q}}_\mu
&=
-k_\mu L^{1-r_{\mu+1}}
\operatorname{col}_{i=1}^{N_F}
\left(
\vsgn{
[\mathbf\Omega_{FF}\mathbf q_0]_i
}{r_{\mu+1}}
\right)
+\mathbf{q}_{\mu+1},\\
&\mu=0,\ldots,m-1,\\
\dot{\mathbf{q}}_m
&\in
-k_mL
\operatorname{col}_{i=1}^{N_F}
\left(
\vsgn{
[\mathbf\Omega_{FF}\mathbf q_0]_i
}{0}
\right)
+
\mathbf\Omega_{FF}^{-1}
\mathbf\Omega_{LF}^{\top}
\mathbf p_L^{(m+1)},
\end{aligned}
\end{equation}
where \([\mathbf\Omega_{FF}\mathbf q_0]_i\in\R^d\) denotes the \(i\)-th follower block. With the choices summarized in Table~\ref{tab:applications}, \eqref{eq:affine:error} is of the form of the AHC \eqref{eq:system}, so Theorem~\ref{thm:main} applies.

\begin{corollary}
\label{cor:affine}
Suppose that Assumption~\ref{ass:affine} holds. Then, there exist gains \(k_0,\ldots,k_m>0\) such that the origin of \eqref{eq:affine:error} is globally finite-time stable. Consequently, the observer \eqref{eq:affine:observer} achieves \eqref{eq:affine:objective} for every initial condition.
\end{corollary}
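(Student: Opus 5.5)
We would prove Corollary~\ref{cor:affine} by checking that \eqref{eq:affine:error} is an AHC and then invoking Theorem~\ref{thm:main}. Take \(\mathcal X=\R^{dN_F}\), \(\mathbf Q=\mathbf\Omega_{FF}^{-1}\), \(\gamma=0\), and \(U_\mu\) and \(\mathcal S\) as in the last row of Table~\ref{tab:applications}. For the disturbance we write
\[
\mathbf\Omega_{FF}^{-1}\mathbf\Omega_{LF}^\top\mathbf p_L^{(m+1)}=L\,\mathbf Q\,\mathbf d(t),
\qquad
\mathbf d(t):=\mathbf\Omega_{LF}^\top\mathbf p_L^{(m+1)}(t)/L .
\]
Then \(\mathbf d(t)\) lies in \(\mathcal D:=\{\mathbf d:\|\mathbf d\|\le1\}\) by Assumption~\ref{ass:affine}. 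This set is nonempty, compact and convex, so condition~4 holds. Every solution of \eqref{eq:affine:error} is therefore a solution of the inclusion \eqref{eq:system}.

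Next we check that the terms match. Since \(\mathbf\Omega_{FF}\) is symmetric,
\[
\nabla U_\mu(\mathbf q)=\mathbf\Omega_{FF}\operatorname{col}_i\big(\vsgn{[\mathbf\Omega_{FF}\mathbf q]_i}{r_{\mu+1}}\big),
\]
which follows from the chain rule and \(\nabla\frac{1}{1+r}\|\mathbf y\|^{1+r}=\vsgn{\mathbf y}{r}\). Hence \(\mathbf Q\nabla U_\mu\) equals the feedback in the first equations. Likewise \(\mathbf Q\,k_m\mathcal S(\mathbf q_0)\) reproduces the discontinuous term.

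The four conditions of Definition~\ref{def:ahc} are verified as follows.
\begin{enumerate}
\item \(\mathbf Q+\mathbf Q^\top=2\mathbf\Omega_{FF}^{-1}\succ0\), because \(\mathbf\Omega_{FF}\) is symmetric positive definite~\cite{zhao2018}.
\item We apply Proposition~\ref{prop:power-potential} in Appendix~\ref{sec:auxiliary} with \(\mathbf M=\mathbf\Omega_{FF}\), \(r=r_{\mu+1}\), and blockwise Euclidean norms of block size \(d\). This gives that \(U_\mu\) is \(C^1\), convex, homogeneous of degree \(1+r_{\mu+1}\), and positive definite since \(\mathbf M\) is invertible. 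It also gives that \(U_\mu^*\) is \(C^2\). We take this from the proposition for the vector-norm case; if it only covers the scalar case, we would repeat its argument blockwise. In that argument \(U^*_\mu(\mathbf y)\) is a multiple of \(\sum_i\|[\mathbf M^{-1}\mathbf y]_i\|^{1+1/r}\), and since \(1+1/r>2\) this is \(C^2\).
\item \(\mathcal S(\mathbf q)=\mathbf\Omega_{FF}\operatorname{col}_i(\vsgn{[\mathbf\Omega_{FF}\mathbf q]_i}{0})\) is a linear image of a product of maps that are either singletons \(\mathbf y/\|\mathbf y\|\) or closed unit balls. Each factor is upper semicontinuous with compact convex values, hence so is \(\mathcal S\). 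It is degree-0 homogeneous because \(\vsgn{\lambda\mathbf y}{0}=\vsgn{\mathbf y}{0}\). For the coercivity bound take \(\mathbf s=\mathbf\Omega_{FF}\mathbf w\) with \(\mathbf w\in\operatorname{col}_i(\vsgn{\mathbf y_i}{0})\), where \(\mathbf y:=\mathbf\Omega_{FF}\mathbf q\). Then
\[
\mathbf q^\top\mathbf s=\mathbf y^\top\mathbf w=\sum_i\|\mathbf y_i\|\ge\|\mathbf y\|\ge\lambda_{\min}(\mathbf\Omega_{FF})\|\mathbf q\|,
\]
so \(c_{\mathcal S}=\lambda_{\min}(\mathbf\Omega_{FF})\).
\end{enumerate}

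Theorem~\ref{thm:main} now yields gains \(k_0,\dots,k_m\) such that the origin is globally finite-time stable. Since \(\mathbf q_\mu=\hat{\mathbf p}_{F,\mu}-\mathbf p^{(\mu)}_{F,\mathrm{ref}}\) by \eqref{eq:affine:error:coordinates}, \eqref{eq:affine:objective} follows, and in fact holds exactly after a finite time.

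The main obstacle is condition~2, specifically the \(C^2\) regularity of \(U_\mu^*\) in the blockwise vector-norm setting. We would handle it by computing the conjugate explicitly through the change of variables \(\mathbf y=\mathbf M\mathbf q\).
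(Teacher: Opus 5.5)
Your proposal is correct and follows the paper's proof essentially step for step: the same choices $\mathcal X=\R^{dN_F}$, $\mathbf Q=\mathbf\Omega_{FF}^{-1}$, $\gamma=0$, the unit-ball $\mathcal D$ containing $\mathbf d(t)=\mathbf\Omega_{LF}^\top\mathbf p_L^{(m+1)}(t)/L$, Proposition~\ref{prop:power-potential} with $\mathbf M=\mathbf\Omega_{FF}$ for condition~2, and the bound $\mathbf q^\top\mathbf s=\sum_i\|[\mathbf\Omega_{FF}\mathbf q]_i\|\ge\lambda_{\min}(\mathbf\Omega_{FF})\|\mathbf q\|$ for condition~3, followed by Theorem~\ref{thm:main}. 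Your hedge about the vector-norm case is unnecessary, since Proposition~\ref{prop:power-potential} is already stated for blocks $\mathbf M_\ell\in\R^{d_\ell\times n}$. Your fallback is nonetheless valid here: because $\mathbf\Omega_{FF}$ is square, symmetric and invertible, one has explicitly $U_\mu^*(\mathbf y)=\frac{r_{\mu+1}}{1+r_{\mu+1}}\sum_i\|[\mathbf\Omega_{FF}^{-1}\mathbf y]_i\|^{1+1/r_{\mu+1}}$, which is $C^2$ since $1+1/r_{\mu+1}>2$.
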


\begin{remark}
The observer \eqref{eq:affine:observer} has not appeared before in the literature. Since each \(\hat{\mathbf p}_{i,\mu}(t)\) provides an estimate of the corresponding derivative of the desired follower trajectory, and these estimates converge in finite time, they can be combined directly with a local trajectory-tracking controller, as in \cite{affine}. In contrast to \cite{affine}, however, \eqref{eq:affine:observer} can also be adapted directly as a controller for first-order agents \(\dot{\mathbf p}_i=\mathbf u_i\). In this case, setting \(\hat{\mathbf p}_{i,0}=\mathbf p_i\), the control input can be chosen as
\[
\mathbf u_i
=
-k_0L^{1-r_1}
\vsgn{
\sum_{j\in\mathcal N_i}
\omega_{ij}
(\mathbf p_i-\mathbf p_j)
}{r_1}
+\hat{\mathbf p}_{i,1},
\]
while the states \(\hat{\mathbf p}_{i,1},\ldots,\hat{\mathbf p}_{i,m}\) evolve according to \eqref{eq:affine:observer}. This implementation requires only relative position measurements $\mathbf{p}_i(t)-\mathbf{p}_j(t)$. Moreover, the operator \(\vsgn{\bullet}{\alpha}\) is rotation equivariant, since \(\vsgn{\mathbf R\mathbf x}{\alpha}=\mathbf R\vsgn{\mathbf x}{\alpha}\) for every \(\mathbf R\in\mathrm{SO}(d)\). Consequently, as opposed to \cite{affine}, the relative measurements, observer states, and control input may all be represented in each agent's fixed local coordinate frame, without requiring a common global frame of reference.
\end{remark}

\section{Proof of the main result}
\label{sec:proof:sec}

The proof is organized in four steps. First, Lemma~\ref{lem:normalization} removes the scale \(L\) and rewrites the dynamics in terms of normalized gains. Second, we construct a homogeneous Lyapunov function from Fenchel--Young gaps associated with the potentials \(U_\mu\) and derive an upper bound for its derivative along solutions. Third, this estimate is used to select the normalized gains recursively and obtain a finite-time Lyapunov inequality for \(\gamma=0\). Finally, the case \(\gamma\geq0\) is obtained from the case \(\gamma=0\) through a time reparameterization.

\subsection{Normalization}

\begin{lemma}
\label{lem:normalization}
Fix any \(\kappa>0\) and consider the change of coordinates
\[
\mathbf x_0:=\frac{\mathbf q_0}{\kappa L},
\qquad
\mathbf x_\mu:=\frac{\mathbf Q^{-1}\mathbf q_\mu}{k_{\mu-1}\kappa^{r_\mu}L},
\qquad
\mu=1,\ldots,m.
\]
Then, \eqref{eq:system} is transformed into
\begin{equation}
\begin{aligned}
\dot{\mathbf x}_0
&=
-\widetilde{k}_0\mathbf Q
\left(
\nabla U_0(\mathbf x_0)-\mathbf x_1
\right)
-\gamma r_0\mathbf x_0,\\
\dot{\mathbf x}_\mu
&=
-\widetilde{k}_\mu
\left(
\nabla U_\mu(\mathbf x_0)-\mathbf x_{\mu+1}
\right)
-\gamma r_\mu\mathbf x_\mu,
\ \ 
\mu=1,\ldots,m-1,\\
\dot{\mathbf x}_m
&\in
-\widetilde{k}_m
\left(
\mathcal S(\mathbf x_0)-\frac{1}{k_m}\mathcal D
\right)
-\gamma r_m\mathbf x_m,
\end{aligned}
\label{eq:normalized-system}
\end{equation}
where
\[
\widetilde{k}_\mu
:=
\begin{cases}
k_0\kappa^{-1/(m+1)}, & \mu=0,\\
\dfrac{k_\mu}{k_{\mu-1}}\kappa^{-1/(m+1)}, & \mu=1,\ldots,m.
\end{cases}
\]
In particular,
\begin{equation}
\label{eq:normalized-gain-product}
k_m
=
\kappa
\prod_{\mu=0}^m\widetilde{k}_\mu.
\end{equation}
Moreover, the origin of \eqref{eq:system} is globally finite-time stable if and only if the origin of \eqref{eq:normalized-system} is globally finite-time stable.
\end{lemma}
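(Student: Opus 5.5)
The plan is a direct computation. The change of coordinates is a linear bijection on \(\mathcal X^{m+1}\), so I would substitute it into \eqref{eq:system} and use the homogeneity properties in Definition~\ref{def:ahc} to pull the scale factors out of the nonlinearities.

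\emph{Step 1: the coordinate map is well defined.} Condition~1 gives \(\mathbf x^\top\mathbf Q\mathbf x=\tfrac12\mathbf x^\top(\mathbf Q+\mathbf Q^\top)\mathbf x>0\) for every nonzero \(\mathbf x\in\mathcal X\), so \(\mathbf Q\) is injective on \(\mathcal X\). Since \(\mathbf Q\mathcal X\subseteq\mathcal X\) and \(\mathcal X\) is finite dimensional, \(\mathbf Q\) is a bijection of \(\mathcal X\). Hence \(\mathbf Q^{-1}\) exists on \(\mathcal X\), and the map \(\mathbf q\mapsto\mathbf x\) is a linear isomorphism of \(\mathcal X^{m+1}\) because \(\kappa,L,k_\mu>0\).

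\emph{Step 2: homogeneity of the nonlinearities.} Differentiating \eqref{eq:homogenenity:U} with respect to \(\mathbf q\) gives \(\lambda\nabla U_\mu(\lambda\mathbf q)=\lambda^{1+r_{\mu+1}}\nabla U_\mu(\mathbf q)\). Thus \(\nabla U_\mu\) is homogeneous of degree \(r_{\mu+1}\), and in particular
\[
\nabla U_\mu(\kappa L\mathbf x_0)=(\kappa L)^{r_{\mu+1}}\nabla U_\mu(\mathbf x_0).
\]
By condition~3, \(\mathcal S(\kappa L\mathbf x_0)=\mathcal S(\mathbf x_0)\).

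\emph{Step 3: substitute row by row.} Use \(\mathbf q_0=\kappa L\mathbf x_0\) and \(\mathbf q_\mu=k_{\mu-1}\kappa^{r_\mu}L\mathbf Q\mathbf x_\mu\) for \(\mu\ge1\).

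\begin{itemize}
\item Row \(\mu=0\). The drift becomes \(-k_0L^{1-r_1}(\kappa L)^{r_1}\mathbf Q\nabla U_0(\mathbf x_0)=-k_0\kappa^{r_1}L\,\mathbf Q\nabla U_0(\mathbf x_0)\), and \(\mathbf q_1=k_0\kappa^{r_1}L\,\mathbf Q\mathbf x_1\). Dividing by \(\kappa L\) gives the factor \(k_0\kappa^{r_1-1}=k_0\kappa^{-1/(m+1)}=\widetilde k_0\), multiplying \(\mathbf Q(\nabla U_0(\mathbf x_0)-\mathbf x_1)\). The damping term stays \(-\gamma r_0\mathbf x_0\).
\item Rows \(1\le\mu\le m-1\). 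Apply \(\mathbf Q^{-1}\) and divide by \(k_{\mu-1}\kappa^{r_\mu}L\). The factor in front of \(\nabla U_\mu(\mathbf x_0)-\mathbf x_{\mu+1}\) becomes \((k_\mu/k_{\mu-1})\kappa^{r_{\mu+1}-r_\mu}\), and \(r_{\mu+1}-r_\mu=-1/(m+1)\).
\item Row \(m\). The inclusion becomes \(-L(k_m\mathcal S(\mathbf x_0)-\mathcal D)\) divided by \(k_{m-1}\kappa^{r_m}L\), with \(r_m=1/(m+1)\). This equals \(-\widetilde k_m(\mathcal S(\mathbf x_0)-\mathcal D/k_m)\).
\end{itemize}

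In every row the linear term \(-\gamma r_\mu\mathbf q_\mu\) is mapped to \(-\gamma r_\mu\mathbf x_\mu\) because the scaling is linear. The product \(\prod_\mu\widetilde k_\mu\) telescopes to \(k_m\kappa^{-(m+1)/(m+1)}=k_m/\kappa\), which is \eqref{eq:normalized-gain-product}.

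\emph{Step 4: equivalence of stability.} Absolutely continuous solutions of one inclusion map to absolutely continuous solutions of the other under the isomorphism, since it commutes with differentiation and maps sets elementwise. The map fixes the origin, and on a finite-dimensional space there are constants \(c_1,c_2>0\) with \(c_1\|\mathbf q\|\le\|\mathbf x\|\le c_2\|\mathbf q\|\). Hence Lyapunov stability and finite-time convergence to the origin transfer in both directions, with the same settling times.

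The only mildly delicate points are the invertibility of \(\mathbf Q\) on \(\mathcal X\) and the correspondence of set-valued solutions. Both follow from linearity of the transformation, so I expect no real obstacle.
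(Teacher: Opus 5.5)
Your proposal is correct and follows the same route as the paper: invertibility of $\mathbf Q$ on $\mathcal X$ from condition~1, degree-$r_{\mu+1}$ homogeneity of $\nabla U_\mu$ and degree-zero homogeneity of $\mathcal S$ to pull out the scale factors, the identity $r_{\mu+1}-r_\mu=-1/(m+1)$ to identify the normalized gains, and the telescoping product for \eqref{eq:normalized-gain-product}. Your row-by-row computation and the explicit transfer of solutions and stability under the linear isomorphism only spell out details that the paper states in a single line.
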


\begin{proof}
Since \(U_\mu\) is homogeneous of degree \(1+r_{\mu+1}\), its gradient is homogeneous of degree \(r_{\mu+1}\), so \(\nabla U_\mu(\kappa L\mathbf x_0)=(\kappa L)^{r_{\mu+1}}\nabla U_\mu(\mathbf x_0)\). Substituting \(\mathbf q_0=\kappa L\mathbf x_0\) and \(\mathbf q_\mu=k_{\mu-1}\kappa^{r_\mu}L\mathbf Q\mathbf x_\mu\), \(\mu=1,\ldots,m\), into \eqref{eq:system}, and using \(r_{\mu+1}-r_\mu=-1/(m+1)\), gives \eqref{eq:normalized-system} with the stated normalized gains. The degree-zero homogeneity of \(\mathcal S\) is used in the last inclusion. Multiplying the expressions for \(\widetilde{k}_0,\ldots,\widetilde{k}_m\) gives \eqref{eq:normalized-gain-product}. Since \(\mathbf Q+\mathbf Q^\top\) is positive definite on \(\mathcal X\), the restriction of \(\mathbf Q\) to \(\mathcal X\) is invertible. Hence, the change of coordinates is linear and invertible for every \(\kappa>0\), so it preserves global finite-time stability.
\end{proof}

\subsection{Lyapunov construction}

The Lyapunov function is assembled from Fenchel--Young gaps \eqref{eq:gap} of the potentials. Define the functions \(V_0,\ldots,V_m\) by
\begin{equation}
\label{eq:Vmu}
\begin{aligned}
V_0(\mathbf{x})
&:=
U_0(\mathbf{x}_0)
+
U_0^*(\mathbf{x}_1)
-
\mathbf{x}_0^\top\mathbf{x}_1,\\
V_\mu(\mathbf{x})
&:=
U_{\mu-1}\left(\nabla U_\mu^*(\mathbf{x}_{\mu+1})\right)
+
U_{\mu-1}^*(\mathbf{x}_\mu)-
\mathbf{x}_\mu^\top\nabla U_\mu^*(\mathbf{x}_{\mu+1}),\\&
\mu=1,\ldots,m-1,\\
V_m(\mathbf{x})
&:=
U_{m-1}^*(\mathbf{x}_m).
\end{aligned}
\end{equation}
Note that by \eqref{eq:gap}, \(V_0,\ldots,V_{m-1}\) are Fenchel--Young gaps:
\[
\begin{aligned}
V_0(\mathbf{x})
&=
F_{U_0}(\mathbf x_0,\mathbf x_1),\\
V_\mu(\mathbf{x})
&=
F_{U_{\mu-1}}
\left(
\nabla U_\mu^*(\mathbf x_{\mu+1}),
\mathbf x_\mu
\right),
\qquad
\mu=1,\ldots,m-1.
\end{aligned}
\]
Under the dilation \eqref{eq:weighted-dilation}, \(\mathbf x_{\mu+1}\) has degree \(r_{\mu+1}\). Proposition~\ref{prop:potential-properties}, item~\ref{item:potential-conjugate} in Appendix \ref{sec:auxiliary}, gives degree \(1/r_{\mu+1}\) for \(\nabla U_\mu^*\), so \(\nabla U_\mu^*(\mathbf x_{\mu+1})\) has degree one. Proposition~\ref{prop:potential-properties}, item~\ref{item:potential-gap}, then gives
\begin{align}
V_0(\delta(\mathbf{x};\lambda))
&=
\lambda^{1+r_1}V_0(\mathbf{x}),
\label{eq:V0-homogeneity}\\
V_\mu(\delta(\mathbf{x};\lambda))
&=
\lambda^{1+r_\mu}V_\mu(\mathbf{x}),
\qquad
\mu=1,\ldots,m,
\label{eq:Vmu-homogeneity}
\end{align}
where the case \(\mu=m\) follows from the homogeneity of \(U_{m-1}^*\). These degrees are equalized by
\[
\rho_0:=\frac{2}{1+r_1},
\qquad
\rho_\mu:=\frac{2}{1+r_\mu},
\qquad
\mu=1,\ldots,m,
\]
which satisfy \(\rho_\mu>1\). Fix arbitrary \(\beta_0,\ldots,\beta_m>0\) and consider
\begin{equation}
\label{eq:lyap}
V(\mathbf{x})
:=
\sum_{\mu=0}^m
\frac{\beta_\mu}{\rho_\mu}
V_\mu(\mathbf{x})^{\rho_\mu}.
\end{equation}
Every term \(V_\mu^{\rho_\mu}\) is homogeneous of degree
\begin{equation}
\label{eq:V-degree}
d_V
:=
\rho_0(1+r_1)
=
\rho_\mu(1+r_\mu)
=
2,
\qquad
\mu=1,\ldots,m,
\end{equation}
and therefore
\begin{equation}
\label{eq:V-homogeneity}
V(\delta(\mathbf{x};\lambda))
=
\lambda^{d_V}V(\mathbf{x}).
\end{equation}

\begin{lemma}
\label{lem:lyapunov-properties}
Let \(V_0,\ldots,V_m\) and \(V\) be defined as in \eqref{eq:Vmu} and \eqref{eq:lyap}. Then, the following properties hold.
\begin{enumerate}

\item\label{item:V-zero-sets} Each \(V_\mu\) is nonnegative. Define \[\mathcal C_\mu:=\{\mathbf{x}\in\mathcal{X}^{m+1}:V_0(\mathbf{x})=\cdots=V_\mu(\mathbf{x})=0\}.\] Then, for \(\mu=0,\ldots,m-1\),
\begin{equation}
\label{eq:Cmu-characterization}
\begin{aligned}
\mathcal{C}_\mu
=
\big\{&
\mathbf{x}\in\mathcal{X}^{m+1}: \\&
\mathbf{x}_0
=
\nabla U_0^*(\mathbf{x}_1)
=
\cdots
=
\nabla U_\mu^*(\mathbf{x}_{\mu+1})
\big\},
\end{aligned}
\end{equation}
and
\begin{equation}
\label{eq:Cm-origin}
\mathcal C_m=\{\mathbf0\}.
\end{equation}

\item\label{item:V-positive} The functions \(V_0,\ldots,V_m\) and \(V\) are continuously differentiable, and \(V\) is positive definite and radially unbounded.
\end{enumerate}
\end{lemma}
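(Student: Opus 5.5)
Nonnegativity comes first, from the Fenchel--Young inequality. $V_0=F_{U_0}(\mathbf x_0,\mathbf x_1)\ge0$ and, for $1\le\mu\le m-1$, $V_\mu=F_{U_{\mu-1}}(\nabla U_\mu^*(\mathbf x_{\mu+1}),\mathbf x_\mu)\ge0$. For $V_m=U_{m-1}^*(\mathbf x_m)$ we use Proposition~\ref{prop:potential-properties} from the Appendix. That result says $U_\mu^*$ is homogeneous of degree $1+1/r_{\mu+1}$ (the conjugate exponent), nonnegative, and positive definite; this holds because $U_\mu$ is positive definite, convex, and superlinearly homogeneous, so it is coercive and its conjugate is finite everywhere, zero only at the origin. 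Also, $U_\mu$ is differentiable and, since $U^*_\mu$ is $C^2$ (hence differentiable), $U_\mu$ is essentially strictly convex. Therefore $\nabla U_\mu$ and $\nabla U_\mu^*$ are mutually inverse bijections of $\mathcal X$.

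Next, the characterization \eqref{eq:Cmu-characterization} follows by induction on $\mu$ from the equality case of Fenchel--Young: $F_U(\mathbf a,\mathbf b)=0$ iff $\mathbf b=\nabla U(\mathbf a)$, iff $\mathbf a=\nabla U^*(\mathbf b)$. For $\mu=0$, $V_0=0$ iff $\mathbf x_0=\nabla U_0^*(\mathbf x_1)$. For $\mu\ge1$, $V_\mu=0$ iff $\mathbf x_\mu=\nabla U_{\mu-1}(\nabla U_\mu^*(\mathbf x_{\mu+1}))$, iff $\nabla U_{\mu-1}^*(\mathbf x_\mu)=\nabla U_\mu^*(\mathbf x_{\mu+1})$. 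Chaining this with the inductive hypothesis $\mathbf x_0=\nabla U_{\mu-1}^*(\mathbf x_\mu)$ gives the claimed equality chain. For $\mathcal C_m$, we add $V_m=0$, i.e.\ $U_{m-1}^*(\mathbf x_m)=0$, so $\mathbf x_m=\mathbf 0$. Since $\nabla U^*_\mu(\mathbf0)=\mathbf0$ (degree $1/r_{\mu+1}>0$), the chain gives $\mathbf x_0=\mathbf0$. Then each $\nabla U^*_\mu(\mathbf x_{\mu+1})=\mathbf 0$, and injectivity forces $\mathbf x_{\mu+1}=\mathbf0$. Hence $\mathcal C_m=\{\mathbf0\}$.

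Differentiability comes from composition. $U_\mu$ is $C^1$ and $U^*_\mu$ is $C^2$, so $\nabla U_\mu^*$ is $C^1$. Each $V_\mu$ is therefore a sum of $C^1$ functions, including $U_{\mu-1}\circ\nabla U^*_\mu$ and the bilinear term. Since $\rho_\mu>1$ and $V_\mu\ge0$, the map $s\mapsto s^{\rho_\mu}$ is $C^1$ on $[0,\infty)$ with zero derivative at $0$, so $V_\mu^{\rho_\mu}$ and hence $V$ are $C^1$. Positive definiteness: $V\ge0$, and $V(\mathbf x)=0$ forces all $V_\mu=0$, i.e.\ $\mathbf x\in\mathcal C_m=\{\mathbf0\}$.

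Radial unboundedness, the main remaining point, will use homogeneity \eqref{eq:V-homogeneity} with the gauge $\varrho$. The unit sphere $\{\varrho=1\}$ is compact, and $V$ is continuous and positive on it, so $c:=\min_{\varrho=1}V>0$. Writing $\mathbf x=\delta(\mathbf y;\varrho(\mathbf x))$ with $\varrho(\mathbf y)=1$ gives $V(\mathbf x)\ge c\,\varrho(\mathbf x)^{2}$. Since $\varrho\to\infty$ as $\|\mathbf x\|\to\infty$, this proves radial unboundedness. The only delicate step is confirming the inverse-gradient and conjugate-homogeneity facts, which I take from Proposition~\ref{prop:potential-properties}.
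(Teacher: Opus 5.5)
Your proposal is correct and follows essentially the same route as the paper. You chain the Fenchel--Young equality cases to characterize $\mathcal C_\mu$, and use $\nabla U_\mu^*(\mathbf 0)=\mathbf 0$ together with injectivity of $\nabla U_\mu^*$ to get $\mathcal C_m=\{\mathbf 0\}$. You obtain the $C^1$ property from $\nabla U_\mu^*\in C^1$ and $\rho_\mu>1$, and radial unboundedness from degree-$d_V=2$ homogeneity with respect to the gauge $\varrho$. The only addition is that you sketch why the conjugate is positive definite and why the Legendre-type inverse relation holds, which the paper simply imports from Proposition~\ref{prop:potential-properties}.
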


\begin{proof}
For item~\ref{item:V-zero-sets}, \(V_0,\ldots,V_{m-1}\) are the Fenchel--Young gaps described above. Proposition~\ref{prop:potential-properties}, item~\ref{item:potential-gap}, in Appendix \ref{sec:auxiliary} gives \(V_0\geq0\), with equality if and only if \(\mathbf{x}_0=\nabla U_0^*(\mathbf{x}_1)\), and \(V_\mu\geq0\), with equality if and only if \(\nabla U_{\mu-1}^*(\mathbf{x}_\mu)=\nabla U_\mu^*(\mathbf{x}_{\mu+1})\). Positive definiteness of \(U_{m-1}^*\) gives \(V_m\geq0\), with equality if and only if \(\mathbf{x}_m=\mathbf0\). Combining these equality conditions proves \eqref{eq:Cmu-characterization}. If \(\mathbf{x}\in\mathcal C_m\), then \(\mathbf{x}\in\mathcal C_{m-1}\) and \(\mathbf{x}_m=\mathbf0\), so \eqref{eq:Cmu-characterization} and \(\nabla U_{m-1}^*(\mathbf0)=\mathbf0\) give
\[
\mathbf{x}_0
=
\nabla U_0^*(\mathbf{x}_1)
=
\cdots
=
\nabla U_{m-1}^*(\mathbf{x}_m)
=
\mathbf0.
\]
Applying \(\nabla U_{\mu-1}\) to \(\nabla U_{\mu-1}^*(\mathbf{x}_\mu)=\mathbf0\) gives \(\mathbf{x}_\mu=\mathbf0\) for every \(\mu=1,\ldots,m\). Hence \(\mathbf{x}=\mathbf0\), proving \eqref{eq:Cm-origin}.

For item~\ref{item:V-positive}, condition~2 of Definition~\ref{def:ahc} implies that \(\nabla U_\mu^*\) is continuously differentiable. Hence, \(V_0,\ldots,V_{m-1}\) are continuously differentiable, while \(V_m\) is continuously differentiable by definition. Since \(V_\mu\geq0\) and \(\rho_\mu>1\), the powers in \eqref{eq:lyap} are continuously differentiable. Thus, \(V\) is continuously differentiable. Since \(\beta_0,\ldots,\beta_m>0\), nonnegativity of the terms and \eqref{eq:Cm-origin} show that \(V\) is positive definite. For radial unboundedness, since \(V\) is continuous, positive definite, and homogeneous of degree \(d_V>0\), there exists \(c_V>0\) such that \(V(\mathbf x)\geq c_V\varrho(\mathbf x)^{d_V}\) for every \(\mathbf x\in\mathcal X^{m+1}\), with \(\varrho\) defined in \eqref{eq:hom:norm}. Therefore, \(V\) is radially unbounded.
\end{proof}

\subsection{Derivative along solutions}

We compute the derivative of \(V\) in coordinates adapted to the potentials. Define
\begin{equation}
\label{eq:proof-xi}
\boldsymbol{\xi}_0:=\mathbf x_0,
\qquad
\boldsymbol{\xi}_\mu:=\nabla U_{\mu-1}^*(\mathbf x_\mu),
\quad
\mu=1,\ldots,m,
\end{equation}
and \(\boldsymbol{\xi}
:=
\operatorname{col}(\boldsymbol{\xi}_0,\ldots,\boldsymbol{\xi}_m)\). By Proposition~\ref{prop:potential-properties}, item~\ref{item:potential-conjugate}, in Appendix \ref{sec:auxiliary},
\begin{equation}
\label{eq:proof-x-xi}
\mathbf x_\mu=\nabla U_{\mu-1}(\boldsymbol{\xi}_\mu),
\qquad
\mu=1,\ldots,m.
\end{equation}
Moreover, \eqref{eq:Cmu-characterization} and \eqref{eq:proof-xi} give
\begin{equation}
\label{eq:cmu:xi}
\mathcal C_\mu
=
\left\{
\mathbf x\in\mathcal X^{m+1}:
\boldsymbol{\xi}_0
=
\cdots
=
\boldsymbol{\xi}_{\mu+1}
\right\},
\quad
\mu=0,\ldots,m-1.
\end{equation}
By \eqref{eq:proof-x-xi}, the terms driving \eqref{eq:normalized-system} satisfy \[\nabla U_\mu(\mathbf x_0)-\mathbf x_{\mu+1}=\nabla U_\mu(\boldsymbol{\xi}_0)-\nabla U_\mu(\boldsymbol{\xi}_{\mu+1}).\] Thus, every level compares \(\boldsymbol{\xi}_{\mu+1}\) with \(\boldsymbol{\xi}_0\) through \(\nabla U_\mu\), while the sets \(\mathcal C_\mu\) in \eqref{eq:cmu:xi} are determined by equality of consecutive levels. Throughout, fix
\begin{equation}
\label{eq:km}
k_m>\frac{1}{c_{\mathcal S}}.
\end{equation}

To express \(\dot V\) in a form suitable for the recursive gain argument, we collect the terms arising in its derivative through the following quantities. For \(\mathbf s\in\mathcal S(\boldsymbol{\xi}_0)\) and \(\mathbf d\in\mathcal D\), define

\begin{equation}
\label{eq:proof-auxiliary-functions}
\begin{aligned}
&\mathbf h_\mu(\boldsymbol{\xi})
:=
\nabla U_\mu(\boldsymbol{\xi}_0)
-
\nabla U_\mu(\boldsymbol{\xi}_{\mu+1}),
\qquad
\mu=0,\ldots,m-1,\\
&\boldsymbol{\chi}_\mu(\boldsymbol{\xi};\mathbf s,\mathbf d)
:=
\begin{cases}
\mathbf h_\mu(\boldsymbol{\xi}),
&\mu=1,\ldots,m-1,\\
\mathbf s-\dfrac{1}{k_m}\mathbf d,
&\mu=m,
\end{cases}\\
&\psi_\mu(\boldsymbol{\xi})
:=
-
(\boldsymbol{\xi}_\mu-\boldsymbol{\xi}_{\mu+1})^\top
\left(
\nabla U_\mu(\boldsymbol{\xi}_0)
-
\nabla U_\mu(\boldsymbol{\xi}_\mu)
\right),\\&
\qquad
\mu=1,\ldots,m-1,\\
&\psi_m(\boldsymbol{\xi};\mathbf s,\mathbf d)
:=
-
\boldsymbol{\xi}_m^\top
\boldsymbol{\chi}_m(\boldsymbol{\xi};\mathbf s,\mathbf d),\\
&H_\mu(\boldsymbol{\xi})
:=\\&
\begin{cases}
U_0(\boldsymbol{\xi}_0)
-
U_0(\boldsymbol{\xi}_1)
-
(\boldsymbol{\xi}_0-\boldsymbol{\xi}_1)^\top
\nabla U_0(\boldsymbol{\xi}_1),
\qquad\mu=0,\\
U_{\mu-1}(\boldsymbol{\xi}_{\mu+1})
-
U_{\mu-1}(\boldsymbol{\xi}_\mu)
-
(\boldsymbol{\xi}_{\mu+1}-\boldsymbol{\xi}_\mu)^\top
\nabla U_{\mu-1}(\boldsymbol{\xi}_\mu),\\\quad
\mu=1,\ldots,m-1,\\
r_mU_{m-1}(\boldsymbol{\xi}_m),
\qquad\mu=m,
\end{cases}\\
&\theta_\mu(\boldsymbol{\xi};\mathbf s,\mathbf d;\widetilde{k}_{\mu+1})
:=\\&
\begin{cases}
\widetilde{k}_1
(\boldsymbol{\xi}_0-\boldsymbol{\xi}_1)^\top
\boldsymbol{\chi}_1(\boldsymbol{\xi};\mathbf s,\mathbf d),
\qquad \mu=0,\\
-\widetilde{k}_{\mu+1}
\left(
\nabla U_{\mu-1}(\boldsymbol{\xi}_{\mu+1})
-
\nabla U_{\mu-1}(\boldsymbol{\xi}_\mu)
\right)^\top\\
\times
\nabla^2U_\mu^*
\left(
\nabla U_\mu(\boldsymbol{\xi}_{\mu+1})
\right)
\boldsymbol{\chi}_{\mu+1}(\boldsymbol{\xi};\mathbf s,\mathbf d),
\quad\mu=1,\ldots,m-1.
\end{cases}
\end{aligned}
\end{equation}
and
\begin{equation}
\label{eq:proof-Bnu}
\begin{aligned}
&B_\mu(\boldsymbol{\xi})
:=\\&
\begin{cases}
H_0(\boldsymbol{\xi})^{\rho_0-1}
\mathbf h_0(\boldsymbol{\xi})^\top
\mathbf Q
\mathbf h_0(\boldsymbol{\xi}),
\qquad \mu=0,\\
H_\mu(\boldsymbol{\xi})^{\rho_\mu-1}
(\boldsymbol{\xi}_\mu-\boldsymbol{\xi}_{\mu+1})^\top
\left(
\nabla U_\mu(\boldsymbol{\xi}_\mu)
-
\nabla U_\mu(\boldsymbol{\xi}_{\mu+1})
\right),\\
\quad \mu=1,\ldots,m-1.
\end{cases}
\end{aligned}
\end{equation}

By convexity and positive definiteness of the potentials, the functions \(H_\mu\) are nonnegative. Consequently, \(B_\mu\geq0\) for \(\mu=0,\ldots,m-1\), since \(\mathbf h_0^\top\mathbf Q\mathbf h_0=\frac{1}{2}\mathbf h_0^\top(\mathbf Q+\mathbf Q^\top)\mathbf h_0\geq0\), while Proposition~\ref{prop:potential-properties}, item~\ref{item:potential-primal}, in Appendix \ref{sec:auxiliary} and \eqref{eq:strict-monotone} give the corresponding result for \(\mu=1,\ldots,m-1\). Moreover, \(B_\mu\) and \(\theta_\mu\) vanish when \(\boldsymbol{\xi}_\mu=\boldsymbol{\xi}_{\mu+1}\), while \(\psi_\mu\) vanishes when \(\boldsymbol{\xi}_0=\boldsymbol{\xi}_\mu\) for \(\mu=1,\ldots,m-1\).

\begin{lemma}
\label{lem:lyapunov-upper-bound}
Let \eqref{eq:system} be an AHC with \(\gamma=0\), let \(\widetilde{k}_0,\ldots,\widetilde{k}_m>0\), and let \(V\) be defined by \eqref{eq:lyap}. For \(\mathbf s\in\mathcal S(\boldsymbol{\xi}_0)\) and \(\mathbf d\in\mathcal D\), define
\begin{equation}
\label{eq:proof-Phi-selection}
\begin{aligned}
&\Phi(\boldsymbol{\xi};\mathbf s,\mathbf d)
:=
\beta_0\widetilde{k}_0B_0(\boldsymbol{\xi})
-
\beta_0H_0(\boldsymbol{\xi})^{\rho_0-1}
\theta_0(\boldsymbol{\xi};\mathbf s,\mathbf d;\widetilde{k}_1)+\\
&
\sum_{\mu=1}^{m-1}
\beta_\mu\left[
\widetilde{k}_\mu B_\mu(\boldsymbol{\xi})
-
H_\mu(\boldsymbol{\xi})^{\rho_\mu-1}
\left(
\widetilde{k}_\mu\psi_\mu(\boldsymbol{\xi})
+
\theta_\mu(\boldsymbol{\xi};\mathbf s,\mathbf d;\widetilde{k}_{\mu+1})
\right)
\right]\\
&-
\beta_mH_m(\boldsymbol{\xi})^{\rho_m-1}
\widetilde{k}_m
\psi_m(\boldsymbol{\xi};\mathbf s,\mathbf d).
\end{aligned}
\end{equation}
Set
\begin{equation}
\label{eq:proof-Phi}
\Phi(\boldsymbol{\xi})
:=
\min_{\substack{
\mathbf s\in\mathcal S(\boldsymbol{\xi}_0)\\
\mathbf d\in\mathcal D
}}
\Phi(\boldsymbol{\xi};\mathbf s,\mathbf d).
\end{equation}
Then, every solution \(\mathbf x(t)\) of \eqref{eq:normalized-system} satisfies
\begin{equation}
\label{eq:proof-Phi-upper}
\dot V(\mathbf x(t))
\leq
-\Phi(\boldsymbol{\xi}(\mathbf x(t)))
\end{equation}
for almost every \(t\) in its interval of definition.
\end{lemma}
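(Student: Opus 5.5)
The plan is to differentiate \(V\) along an arbitrary solution by the chain rule. I will first rewrite each \(V_\mu\) and its partial gradients in the coordinates \(\boldsymbol{\xi}\) of \eqref{eq:proof-xi}. Then I will identify the resulting expression with \(-\Phi(\boldsymbol{\xi};\mathbf s,\mathbf d)\) for a particular admissible pair \((\mathbf s,\mathbf d)\), and finally bound it by the minimum in \eqref{eq:proof-Phi}.

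\textbf{Step 1: the \(V_\mu\) are Bregman divergences in \(\boldsymbol{\xi}\).} Using \eqref{eq:proof-x-xi} I get \(\mathbf x_{\mu+1}=\nabla U_\mu(\boldsymbol{\xi}_{\mu+1})\), hence \(\nabla U_\mu^*(\mathbf x_{\mu+1})=\boldsymbol{\xi}_{\mu+1}\). I will also use the Fenchel equality \(U^*(\nabla U(\mathbf z))=\mathbf z^\top\nabla U(\mathbf z)-U(\mathbf z)\). Substituting these into \eqref{eq:Vmu} gives \(V_\mu=H_\mu(\boldsymbol{\xi})\) for \(\mu=0,\ldots,m-1\). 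For \(\mu=m\), Euler's identity for the degree-\((1+r_m)\) function \(U_{m-1}\) gives \(V_m=U_{m-1}^*(\mathbf x_m)=r_mU_{m-1}(\boldsymbol{\xi}_m)=H_m\). Consequently the outer factor produced by differentiating \(\frac{\beta_\mu}{\rho_\mu}V_\mu^{\rho_\mu}\) is \(\beta_\mu H_\mu^{\rho_\mu-1}\). This factor is continuous, since \(\rho_\mu>1\) and \(H_\mu\geq0\).

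\textbf{Step 2: partial gradients and their pairing with the vector field.} I will compute the partial gradients directly from \eqref{eq:Vmu}. For \(V_0\) they are \(\nabla_{\mathbf x_0}V_0=\mathbf h_0\) and \(\nabla_{\mathbf x_1}V_0=\boldsymbol{\xi}_1-\boldsymbol{\xi}_0\). For \(1\leq\mu\leq m-1\) they are
\[
\nabla_{\mathbf x_\mu}V_\mu=\boldsymbol{\xi}_\mu-\boldsymbol{\xi}_{\mu+1},\qquad
\nabla_{\mathbf x_{\mu+1}}V_\mu=\nabla^2U_\mu^*(\mathbf x_{\mu+1})\big(\nabla U_{\mu-1}(\boldsymbol{\xi}_{\mu+1})-\nabla U_{\mu-1}(\boldsymbol{\xi}_\mu)\big).
\]
The second one uses that \(U_\mu^*\) is \(C^2\) and that \(\nabla^2U_\mu^*\) is symmetric. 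For \(V_m\), \(\nabla_{\mathbf x_m}V_m=\boldsymbol{\xi}_m\).

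With \(\gamma=0\), \eqref{eq:normalized-system} reads \(\dot{\mathbf x}_0=-\widetilde k_0\mathbf Q\mathbf h_0\), \(\dot{\mathbf x}_\mu=-\widetilde k_\mu\boldsymbol{\chi}_\mu\) for \(1\leq\mu\leq m-1\), and \(\dot{\mathbf x}_m=-\widetilde k_m\boldsymbol{\chi}_m(\boldsymbol{\xi};\mathbf s,\mathbf d)\). Pairing these with the gradients yields the following terms:
\begin{itemize}
\item For \(V_0\): the term \(-\widetilde k_0\mathbf h_0^\top\mathbf Q\mathbf h_0\), which gives \(B_0\), plus \(-\theta_0\).
\item For \(1\leq\mu\leq m-1\): the term \(-\widetilde k_\mu(\boldsymbol{\xi}_\mu-\boldsymbol{\xi}_{\mu+1})^\top\mathbf h_\mu\). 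I split \(\mathbf h_\mu=[\nabla U_\mu(\boldsymbol{\xi}_0)-\nabla U_\mu(\boldsymbol{\xi}_\mu)]+[\nabla U_\mu(\boldsymbol{\xi}_\mu)-\nabla U_\mu(\boldsymbol{\xi}_{\mu+1})]\), which produces exactly \(\widetilde k_\mu\psi_\mu\) and the \(B_\mu\)-integrand. The \(\mathbf x_{\mu+1}\)-component produces \(-\theta_\mu\), with \(\nabla^2U_\mu^*\) evaluated at \(\mathbf x_{\mu+1}=\nabla U_\mu(\boldsymbol{\xi}_{\mu+1})\).
\item For \(V_m\): the term \(-\widetilde k_m\boldsymbol{\xi}_m^\top\boldsymbol{\chi}_m=\widetilde k_m\psi_m\).
\end{itemize}
Multiplying by \(\beta_\mu H_\mu^{\rho_\mu-1}\) and summing gives \(\nabla V(\mathbf x)^\top\mathbf f=-\Phi(\boldsymbol{\xi};\mathbf s,\mathbf d)\) for every velocity \(\mathbf f\) of the right-hand side of \eqref{eq:normalized-system} generated by \((\mathbf s,\mathbf d)\). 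The main bookkeeping care is the signs in \(\theta_\mu\) and \(\psi_\mu\).

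\textbf{Step 3: passing to solutions.} By Lemma~\ref{lem:lyapunov-properties}, \(V\) is \(C^1\). Since \(\mathbf x(t)\) is absolutely continuous, \(V(\mathbf x(t))\) is absolutely continuous, and for a.e.\ \(t\) we have \(\frac{d}{dt}V=\nabla V(\mathbf x(t))^\top\dot{\mathbf x}(t)\). At such \(t\) the inclusion in \eqref{eq:normalized-system} holds. Its last component forces \(\dot{\mathbf x}_m(t)=-\widetilde k_m(\mathbf s-\mathbf d/k_m)\) for some \(\mathbf s\in\mathcal S(\mathbf x_0)=\mathcal S(\boldsymbol{\xi}_0)\) and \(\mathbf d\in\mathcal D\); only pointwise existence of such a pair is needed, not a measurable selection. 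Step 2 then gives \(\dot V=-\Phi(\boldsymbol{\xi};\mathbf s,\mathbf d)\leq-\Phi(\boldsymbol{\xi})\). The minimum in \eqref{eq:proof-Phi} is attained because \(\Phi(\boldsymbol{\xi};\cdot,\cdot)\) is affine, hence continuous, in \((\mathbf s,\mathbf d)\) over the compact set \(\mathcal S(\boldsymbol{\xi}_0)\times\mathcal D\).

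I expect the main obstacle to be Step 2 for the intermediate levels. There one must verify that \(\nabla_{\mathbf x_{\mu+1}}V_\mu\) collapses to the stated Hessian expression, which requires \(\nabla U_{\mu-1}^*(\mathbf x_\mu)=\boldsymbol{\xi}_\mu\) and the inverse relation between \(\nabla U_\mu\) and \(\nabla U_\mu^*\). One must also check that the split of \(\mathbf h_\mu\) reproduces \(B_\mu\) and \(\psi_\mu\) with the correct signs.
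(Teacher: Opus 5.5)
Your proposal is correct and follows essentially the paper's proof. You identify $V_\mu=H_\mu(\boldsymbol{\xi})$, differentiate along solutions by the chain rule, obtain $\dot V=-\Phi(\boldsymbol{\xi};\mathbf s,\mathbf d)$ for a pointwise admissible pair, and bound by the minimum. Your direct partial gradients in $\mathbf x$ are equivalent, by symmetry of $\nabla^2U_\mu^*$, to the paper's route through $\dot{\boldsymbol{\xi}}_{\mu+1}$. Pointwise existence of $(\mathbf s,\mathbf d)$ does suffice, so the measurable selection is not needed.

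There is one sign slip in the bookkeeping. Pairing your (correct) gradients $\nabla_{\mathbf x_1}V_0=\boldsymbol{\xi}_1-\boldsymbol{\xi}_0$ and $\nabla_{\mathbf x_{\mu+1}}V_\mu$ with $\dot{\mathbf x}_1=-\widetilde{k}_1\boldsymbol{\chi}_1$ and $\dot{\mathbf x}_{\mu+1}=-\widetilde{k}_{\mu+1}\boldsymbol{\chi}_{\mu+1}$ gives $+\theta_0$ and $+\theta_\mu$ in $\dot V_0$ and $\dot V_\mu$, not $-\theta_0$ and $-\theta_\mu$. This is the same convention in which you correctly write $+\widetilde{k}_\mu\psi_\mu$ and $\widetilde{k}_m\psi_m$. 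The $\theta$ terms enter $\Phi$ with a minus sign, so the identity $\nabla V^\top\mathbf f=-\Phi(\boldsymbol{\xi};\mathbf s,\mathbf d)$ holds exactly as you claim.
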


\begin{proof}
Let \(\mathbf x(t)\) be a solution of \eqref{eq:normalized-system}. Since \(\gamma=0\), for almost every \(t\) there exist measurable selections \(\mathbf s(t)\in\mathcal S(\mathbf x_0(t))\) and \(\mathbf d(t)\in\mathcal D\) such that, using \eqref{eq:proof-x-xi},
\begin{equation}
\label{eq:proof-normalized-dynamics}
\begin{aligned}
\dot{\mathbf x}_0
&=
-\widetilde{k}_0\mathbf Q\mathbf h_0,\\
\dot{\mathbf x}_\mu
&=
-\widetilde{k}_\mu
\boldsymbol{\chi}_\mu(\boldsymbol{\xi};\mathbf s,\mathbf d),
\qquad
\mu=1,\ldots,m.
\end{aligned}
\end{equation}
In terms of \(\boldsymbol{\xi}_\mu\), \eqref{eq:Vmu} becomes
\begin{equation}
\label{eq:proof-Vmu-coordinates}
\begin{aligned}
V_0
&=
U_0(\boldsymbol{\xi}_0)
+
U_0^*(\mathbf x_1)
-
\boldsymbol{\xi}_0^\top\mathbf x_1,\\
V_\mu
&=
U_{\mu-1}(\boldsymbol{\xi}_{\mu+1})
+
U_{\mu-1}^*(\mathbf x_\mu)
-
\boldsymbol{\xi}_{\mu+1}^\top\mathbf x_\mu,
\mu=1,\ldots,m-1,\\
V_m
&=
U_{m-1}^*(\mathbf x_m).
\end{aligned}
\end{equation}
By \eqref{eq:proof-x-xi} and Proposition~\ref{prop:potential-properties}, item~\ref{item:potential-gap}, \(U_{\mu-1}^*(\mathbf x_\mu)=\boldsymbol{\xi}_\mu^\top\mathbf x_\mu-U_{\mu-1}(\boldsymbol{\xi}_\mu)\). Moreover, item~\ref{item:potential-conjugate} of Proposition~\ref{prop:potential-properties} gives \(U_{m-1}^*(\mathbf x_m)=r_mU_{m-1}(\boldsymbol{\xi}_m)\). Hence, by the definition of \(H_\mu\),
\begin{equation}
\label{eq:Vmu-Hmu}
V_\mu(\mathbf x)
\equiv
H_\mu(\boldsymbol{\xi}(\mathbf x)),
\qquad
\mu=0,\ldots,m.
\end{equation}
For \(\mu=1,\ldots,m\), condition~2 of Definition~\ref{def:ahc} implies that \(\nabla U_{\mu-1}^*\) is continuously differentiable. Hence, \(\boldsymbol{\xi}_\mu(\cdot)\) is absolutely continuous and the chain rule, together with \eqref{eq:proof-x-xi} and \eqref{eq:proof-normalized-dynamics}, gives
\begin{equation}
\label{eq:proof-xi-derivative}
\begin{aligned}
\dot{\boldsymbol{\xi}}_\mu
&=
\nabla^2U_{\mu-1}^*(\mathbf x_\mu)
\dot{\mathbf x}_\mu\\
&=
-\widetilde{k}_\mu
\nabla^2U_{\mu-1}^*
\left(
\nabla U_{\mu-1}(\boldsymbol{\xi}_\mu)
\right)
\boldsymbol{\chi}_\mu(\boldsymbol{\xi};\mathbf s,\mathbf d),
\end{aligned}
\end{equation}
for \(
\mu=1,\ldots,m,\) and almost every \(t\). Since the functions in \eqref{eq:proof-Vmu-coordinates} are continuously differentiable in their arguments, \(V_\mu(\mathbf x(\cdot))\), \(\mu=0,\ldots,m\), are absolutely continuous. Therefore, \(V(\mathbf x(\cdot))\) is absolutely continuous and
\begin{equation}
\label{eq:proof-V-derivative}
\dot V
=
\sum_{\mu=0}^m
\beta_\mu H_\mu(\boldsymbol{\xi})^{\rho_\mu-1}\dot V_\mu
\end{equation}
for almost every \(t\). For \(V_0\), the chain rule, \eqref{eq:proof-x-xi}, and \eqref{eq:proof-normalized-dynamics} give
\begin{equation}
\label{eq:proof-V0-derivative}
\begin{aligned}
\dot V_0
&=
\left(
\nabla U_0(\boldsymbol{\xi}_0)-\mathbf x_1
\right)^\top
\dot{\mathbf x}_0
+
\left(
\nabla U_0^*(\mathbf x_1)-\boldsymbol{\xi}_0
\right)^\top
\dot{\mathbf x}_1\\
&=
\mathbf h_0(\boldsymbol{\xi})^\top\dot{\mathbf x}_0
+
(\boldsymbol{\xi}_1-\boldsymbol{\xi}_0)^\top
\dot{\mathbf x}_1\\
&=
-\widetilde{k}_0
\mathbf h_0(\boldsymbol{\xi})^\top
\mathbf Q
\mathbf h_0(\boldsymbol{\xi})
+
\theta_0(\boldsymbol{\xi};\mathbf s,\mathbf d;\widetilde{k}_1).
\end{aligned}
\end{equation}
For \(\mu=1,\ldots,m-1\), the chain rule applied to \eqref{eq:proof-Vmu-coordinates}, followed by adding and subtracting \(\nabla U_\mu(\boldsymbol{\xi}_\mu)\), gives
\begin{equation}
\label{eq:proof-Vmu-derivative}
\begin{aligned}
&\dot V_\mu
=\\&
\left(
\nabla U_{\mu-1}(\boldsymbol{\xi}_{\mu+1})-\mathbf x_\mu
\right)^\top
\dot{\boldsymbol{\xi}}_{\mu+1}
+
\left(
\nabla U_{\mu-1}^*(\mathbf x_\mu)-\boldsymbol{\xi}_{\mu+1}
\right)^\top
\dot{\mathbf x}_\mu\\
&=
\left(
\nabla U_{\mu-1}(\boldsymbol{\xi}_{\mu+1})
-
\nabla U_{\mu-1}(\boldsymbol{\xi}_\mu)
\right)^\top
\dot{\boldsymbol{\xi}}_{\mu+1}
+
(\boldsymbol{\xi}_\mu-\boldsymbol{\xi}_{\mu+1})^\top
\dot{\mathbf x}_\mu\\
&=
\theta_\mu(\boldsymbol{\xi};\mathbf s,\mathbf d;\widetilde{k}_{\mu+1})
-
\widetilde{k}_\mu
(\boldsymbol{\xi}_\mu-\boldsymbol{\xi}_{\mu+1})^\top
\mathbf h_\mu(\boldsymbol{\xi})=\\
&
\theta_\mu(\boldsymbol{\xi};\mathbf s,\mathbf d;\widetilde{k}_{\mu+1})
-
\widetilde{k}_\mu
(\boldsymbol{\xi}_\mu-\boldsymbol{\xi}_{\mu+1})^\top
\left(
\nabla U_\mu(\boldsymbol{\xi}_\mu)
-
\nabla U_\mu(\boldsymbol{\xi}_{\mu+1})
\right)\\
&-
\widetilde{k}_\mu
(\boldsymbol{\xi}_\mu-\boldsymbol{\xi}_{\mu+1})^\top
\left(
\nabla U_\mu(\boldsymbol{\xi}_0)
-
\nabla U_\mu(\boldsymbol{\xi}_\mu)
\right)\\
&=
-
\widetilde{k}_\mu
(\boldsymbol{\xi}_\mu-\boldsymbol{\xi}_{\mu+1})^\top
\left(
\nabla U_\mu(\boldsymbol{\xi}_\mu)
-
\nabla U_\mu(\boldsymbol{\xi}_{\mu+1})
\right)\\
&+
\widetilde{k}_\mu\psi_\mu(\boldsymbol{\xi})
+
\theta_\mu(\boldsymbol{\xi};\mathbf s,\mathbf d;\widetilde{k}_{\mu+1}).
\end{aligned}
\end{equation}
The first term is nonpositive by Proposition~\ref{prop:potential-properties}, item~\ref{item:potential-primal}, and \eqref{eq:strict-monotone}, while the rest of cross terms have no definite sign. Finally, from \(V_m=U_{m-1}^*(\mathbf x_m)\), \eqref{eq:proof-x-xi}, and \eqref{eq:proof-normalized-dynamics},
\begin{equation}
\label{eq:proof-Vm-derivative}
\dot V_m
=
-\widetilde{k}_m
\boldsymbol{\xi}_m^\top
\boldsymbol{\chi}_m(\boldsymbol{\xi};\mathbf s,\mathbf d)
=
\widetilde{k}_m
\psi_m(\boldsymbol{\xi};\mathbf s,\mathbf d).
\end{equation}
Combining \eqref{eq:proof-V-derivative}, \eqref{eq:proof-V0-derivative}, \eqref{eq:proof-Vmu-derivative}, \eqref{eq:proof-Vm-derivative}, and \eqref{eq:proof-Bnu} gives \(\dot V(\mathbf x(t))=-\Phi(\boldsymbol{\xi}(\mathbf x(t));\mathbf s(t),\mathbf d(t))\) for almost every \(t\). Since \(\Phi(\boldsymbol{\xi})\) is the minimum in \eqref{eq:proof-Phi}, this proves \eqref{eq:proof-Phi-upper}.
\end{proof}

\subsection{Recursive domination}

\begin{lemma}
\label{lem:recursive-estimate}
Let \(V\) and \(\Phi\) be defined by \eqref{eq:lyap} and \eqref{eq:proof-Phi}. For every \(k_m>1/c_{\mathcal S}\) and every \(\widetilde{k}_m>0\), there exist normalized gains \(\widetilde{k}_{m-1},\ldots,\widetilde{k}_0>0\) and a constant \(c>0\) such that
\begin{equation}
\label{eq:proof-Phi-V-bound}
\Phi(\boldsymbol{\xi}(\mathbf x))
\geq
cV(\mathbf x)^\alpha
\end{equation}
for every \(\mathbf x\in\mathcal X^{m+1}\), where
\begin{equation}
\label{eq:alpha}
\alpha
:=
1-\frac{1}{(m+1)d_V}
\in(0,1)
\end{equation}
and \(d_V\) is given in \eqref{eq:V-degree}. Consequently, every solution of \eqref{eq:normalized-system} with \(\gamma=0\) satisfies
\begin{equation}
\label{eq:main-lyapunov-estimate}
\dot V(\mathbf x(t))
\leq
-cV(\mathbf x(t))^\alpha
\end{equation}
for almost every \(t\).
\end{lemma}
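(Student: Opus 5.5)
The plan is a backward recursion on the levels \(\mu=m,m-1,\ldots,0\), combined with homogeneity and compactness on the sphere \(\{\varrho(\mathbf x)=1\}\). First observe that \(\Phi(\boldsymbol{\xi}(\mathbf x))\) is weighted homogeneous of degree \(d_V-\frac{1}{m+1}\). This is the degree of \(\dot V\) for a homogeneous vector field of degree \(-\frac{1}{m+1}\), which \eqref{eq:normalized-system} is after using degree-zero homogeneity of \(\mathcal S\). The disturbance term \(\mathcal D/k_m\) is not homogeneous, but it only enters through \(\boldsymbol\chi_m\), and the minimum over \(\mathbf d\in\mathcal D\), \(\mathbf s\in\mathcal S\) can be treated as a homogeneous set-valued term because \(\mathcal D\) enters at the top level exactly like \(\mathcal S\) (degree zero). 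Since \(V^\alpha\) has the same degree \(d_V\alpha=d_V-\frac1{m+1}\), it suffices to prove \(\Phi>0\) on the compact set \(\{\varrho=1\}\). Lower semicontinuity of \(\Phi\) (from upper semicontinuity and compact values of \(\mathcal S\), compactness of \(\mathcal D\), and continuity of the remaining terms) then yields \(c=\min_{\varrho=1}\Phi/V^\alpha>0\). Inequality \eqref{eq:main-lyapunov-estimate} follows directly from Lemma~\ref{lem:lyapunov-upper-bound}.

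To prove positivity I will use the nested sets \(\mathcal C_\mu\) from \eqref{eq:cmu:xi}. I show by downward induction that, for suitable gains \(\widetilde k_{\mu},\ldots,\widetilde k_m\), the partial sum \(\Phi^{(\mu)}\) of the terms with indices \(\geq\mu\) is positive on \(\mathcal C_{\mu-1}\cap\{\varrho=1\}\). The induction starts at \(\mu=m\). On \(\mathcal C_{m-1}\) we have \(\boldsymbol\xi_0=\boldsymbol\xi_m\), all \(\theta,\psi_\mu,B_\mu\) with \(\mu<m\) vanish, and only \(-\beta_m H_m^{\rho_m-1}\widetilde k_m\psi_m\) remains. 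Here condition 3 of Definition~\ref{def:ahc} gives \(\boldsymbol\xi_m^\top(\mathbf s-\mathbf d/k_m)\geq (c_{\mathcal S}-1/k_m)\|\boldsymbol\xi_m\|>0\), using \(\|\mathbf d\|\le1\) and \eqref{eq:km}. Also \(H_m=r_mU_{m-1}(\boldsymbol\xi_m)>0\) off the origin, and the origin is excluded by \(\varrho=1\) together with \eqref{eq:Cm-origin}. This is why \(\widetilde k_m\) is free.

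For the inductive step from \(\mu+1\) to \(\mu\), the terms at level \(\mu\) that are not controlled by the gain are \(\theta_\mu\), which carries \(\widetilde k_{\mu+1}\) and vanishes when \(\boldsymbol\xi_\mu=\boldsymbol\xi_{\mu+1}\), and \(\psi_\mu\), which vanishes when \(\boldsymbol\xi_0=\boldsymbol\xi_\mu\). The good term is \(\widetilde k_\mu B_\mu\), which is nonnegative and vanishes exactly on \(\boldsymbol\xi_\mu=\boldsymbol\xi_{\mu+1}\) by strict monotonicity \eqref{eq:strict-monotone}. At level \(0\) the good term is \(B_0\), which vanishes exactly when \(\mathbf h_0=0\), i.e.\ \(\boldsymbol\xi_0=\boldsymbol\xi_1\), using strict convexity and positive definiteness of \(\mathbf Q+\mathbf Q^\top\). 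I then apply the standard homogeneous domination lemma (a Finsler/Uspensky type argument): if \(A\) and \(B\) are continuous and homogeneous of the same degree on a compact sphere, \(B\geq0\), and \(A>0\) on \(\{B=0\}\), then \(A+kB>0\) for all \(k\) large enough. Take \(A\) to be the sum of the already fixed higher-level terms and the level-\(\mu\) cross terms, and \(B=\beta_\mu B_\mu\). On \(\{B_\mu=0\}\cap\mathcal C_{\mu-1}=\mathcal C_\mu\), the level-\(\mu\) cross terms \(\psi_\mu\) and \(\theta_\mu\) vanish, and positivity follows from the induction hypothesis. For \(\mu\ge1\), the term \(\widetilde k_\mu\psi_\mu\) also scales with \(\widetilde k_\mu\). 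I expect this to be harmless, since we restrict to \(\mathcal C_{\mu-1}\), where \(\boldsymbol\xi_0=\boldsymbol\xi_\mu\) and hence \(\psi_\mu\equiv0\).

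The main obstacle is passing from ``positive on \(\mathcal C_{\mu-1}\)'' at each stage to ``positive everywhere'' at the end. The lower-level terms are only made to dominate near their own zero sets, while the higher-level cross terms are not sign-definite globally. I will handle this by setting up the induction on closed neighborhoods. The statement actually proved at stage \(\mu\) will be positivity of the full \(\Phi\) (with \(\widetilde k_0,\ldots,\widetilde k_{\mu-1}\) still to be chosen, but entering only through nonnegative \(B\)-terms and terms vanishing on \(\mathcal C_{\mu-1}\)) on an open neighborhood of \(\mathcal C_{\mu-1}\cap\{\varrho=1\}\). Compactness then gives a positive margin, which absorbs the continuous lower-level cross terms near that set. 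At level \(0\) the set \(\mathcal C_{-1}\) is the whole sphere, so the final choice of \(\widetilde k_0\) makes \(\Phi>0\) everywhere on the sphere. Lower semicontinuity of the min over \(\mathbf s,\mathbf d\) must be checked at each compactness step.
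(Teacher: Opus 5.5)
Your second paragraph is essentially the paper's proof. It has the same ingredients:
\begin{itemize}
\item backward induction over the nested closed cones $\mathcal C_{\mu-1}$;
\item the terminal estimate on $\mathcal C_{m-1}$ from condition~3 and $k_m>1/c_{\mathcal S}$;
\item the observation that $\psi_\mu\equiv0$ on $\mathcal C_{\mu-1}$, so that $\widetilde k_\mu$ appears only in $\beta_\mu\widetilde k_\mu B_\mu$;
\item a homogeneous domination step whose hypothesis on $\{B_\mu=0\}\cap\mathcal C_{\mu-1}=\mathcal C_\mu$ is supplied by the previous stage.
\end{itemize}
One technical adjustment is needed. The remainder should be a single minimum over $(\mathbf s,\mathbf d)$ of all terms other than $\beta_\mu\widetilde k_\mu B_\mu$ (the paper's $-A_\nu$), because $\theta_{m-1}$ and $\psi_m$ both depend on $(\mathbf s,\mathbf d)$. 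This remainder is only lower semicontinuous, since $\mathcal S$ is merely upper semicontinuous. You therefore need the semicontinuous domination result (Proposition~\ref{prop:homogeneous-domination}) rather than the continuous version you quote; you already flag this.

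Your last paragraph, however, misdiagnoses the situation, and the fix it proposes would not work.

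There is no obstacle in passing to global positivity. The domination step at stage $\mu$ is applied on the closed, dilation-invariant set $\mathcal C_{\mu-1}$ itself. It yields $\Phi\geq c_\mu\varrho^q$ on all of $\mathcal C_{\mu-1}$, not just near $\{B_\mu=0\}$. Near that zero set, the induction hypothesis and semicontinuity keep the remainder positive. Away from it, $B_\mu$ is bounded below on the compact slice $\{\varrho=1\}$ and the large gain dominates. Every term carrying $\widetilde k_0,\ldots,\widetilde k_{\mu-1}$ vanishes identically on $\mathcal C_{\mu-1}$, so those unchosen gains play no role there. At $\mu=0$ the set is $\mathcal C_{-1}=\mathcal X^{m+1}$, which is already the whole space.

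By contrast, the neighborhood hypothesis you propose cannot be established. Off $\mathcal C_{\mu-1}$, the terms $-\beta_jH_j^{\rho_j-1}\widetilde k_j\psi_j$ and $-\beta_{j-1}H_{j-1}^{\rho_{j-1}-1}\theta_{j-1}$ (the latter linear in $\widetilde k_j$), for $1\leq j<\mu$, are generally nonzero and sign-indefinite. They are multiplied by gains that will later be chosen arbitrarily large. No positive margin fixed at stage $\mu$ on a fixed neighborhood can absorb them.

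Drop that paragraph and run the induction on the closed cones, exactly as in your second paragraph.
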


\begin{proof}
We first establish the homogeneity of \(\Phi\). By Proposition~\ref{prop:potential-properties}, item~\ref{item:potential-conjugate}, each \(\boldsymbol{\xi}_\mu\) has degree one. Hence, \(\mathbf h_\mu\) has degree \(r_{\mu+1}\), while \(\boldsymbol{\chi}_m\) has degree zero. Moreover, item~\ref{item:potential-hessian} gives degree \(1-r_{\mu+1}\) for \(\nabla^2U_\mu^*(\nabla U_\mu(\boldsymbol{\xi}_{\mu+1}))\). Together with \eqref{eq:Vmu-Hmu} and \eqref{eq:V0-homogeneity}--\eqref{eq:V-degree}, these identities show that \(B_0\) and \(H_0^{\rho_0-1}\theta_0\) have degree
\begin{equation}
\label{eq:proof-q}
q
:=
d_V-\frac{1}{m+1}
>0.
\end{equation}
For \(\mu=1,\ldots,m-1\), \(B_\mu\), \(H_\mu^{\rho_\mu-1}\psi_\mu\), and \(H_\mu^{\rho_\mu-1}\theta_\mu\) have degree $q$. Finally, \(H_m^{\rho_m-1}\psi_m\) also has degree $q$. Together with \eqref{eq:proof-Phi} and the degree-zero homogeneity of \(\mathcal S\), these calculations show that \(\Phi\) is homogeneous of degree \(q\).

In the following, we repeatedly apply the domination argument of Proposition~\ref{prop:homogeneous-domination} in Appendix \ref{sec:auxiliary}. The domination is performed recursively through the continuous nonnegative terms \(B_\nu\) in \eqref{eq:proof-Phi-selection}, which are used to enforce positivity of \(\Phi\). We therefore first characterize the sets on which each \(B_\nu\) vanishes.

Set \(\mathcal C_{-1}:=\mathcal X^{m+1}\). By Lemma~\ref{lem:lyapunov-properties} and \eqref{eq:cmu:xi}, each \(\mathcal C_\mu\) is closed, contains the origin, and is invariant under \eqref{eq:weighted-dilation}. We claim that, for every \(\mathbf x\in\mathcal C_{\nu-1}\),
\begin{equation}
\label{eq:proof-Bnu-zero}
B_\nu(\boldsymbol{\xi}(\mathbf x))=0
\quad\Longleftrightarrow\quad
\mathbf x\in\mathcal C_\nu.
\end{equation}
To prove this, note that, for \(\mathbf x\in\mathcal C_{\nu-1}\), \eqref{eq:Vmu-Hmu} and the zero-set characterization of \(V_\nu\) give \(H_\nu(\boldsymbol{\xi}(\mathbf x))=0\) if and only if \(\boldsymbol{\xi}_\nu=\boldsymbol{\xi}_{\nu+1}\). For \(\nu=0\), positive definiteness of \(\mathbf Q+\mathbf Q^\top\) on \(\mathcal X\) and injectivity of \(\nabla U_0\) imply that \(\mathbf h_0^\top\mathbf Q\mathbf h_0=0\) if and only if \(\boldsymbol{\xi}_0=\boldsymbol{\xi}_1\). For \(\nu=1,\ldots,m-1\), Proposition~\ref{prop:potential-properties}, item~\ref{item:potential-primal} in Appendix \ref{sec:auxiliary}, and \eqref{eq:strict-monotone} give
\[
(\boldsymbol{\xi}_\nu-\boldsymbol{\xi}_{\nu+1})^\top
\left(
\nabla U_\nu(\boldsymbol{\xi}_\nu)
-
\nabla U_\nu(\boldsymbol{\xi}_{\nu+1})
\right)
>0
\]
whenever \(\boldsymbol{\xi}_\nu\neq\boldsymbol{\xi}_{\nu+1}\). Hence, in every case,
\[
B_\nu(\boldsymbol{\xi}(\mathbf x))=0
\quad\Longleftrightarrow\quad
\boldsymbol{\xi}_\nu=\boldsymbol{\xi}_{\nu+1}.
\]
Since \(\mathbf x\in\mathcal C_{\nu-1}\), \eqref{eq:cmu:xi} then gives \(\mathbf x\in\mathcal C_\nu\), proving \eqref{eq:proof-Bnu-zero}.

We now construct the gains by backward induction. First, we establish \(\Phi>0\) on \(\mathcal C_{m-1}\setminus\{\mathbf0\}\), which initializes the induction. Then, for \(\nu=m-1,\ldots,0\), assuming that \(\widetilde{k}_{\nu+1},\ldots,\widetilde{k}_m\) have been selected so that \(\Phi>0\) on \(\mathcal C_\nu\setminus\{\mathbf0\}\), we show that \(\widetilde{k}_\nu\) can be selected so that \(\Phi>0\) on \(\mathcal C_{\nu-1}\setminus\{\mathbf0\}\).

On \(\mathcal C_{m-1}\), \(\boldsymbol{\xi}_0=\cdots=\boldsymbol{\xi}_m\), and hence every term in \eqref{eq:proof-Phi-selection} except the last one vanishes. Thus, by \eqref{eq:proof-Phi},
\begin{equation}
\label{eq:proof-terminal-positivity}
\begin{aligned}
&\Phi(\boldsymbol{\xi}(\mathbf x))
=
\beta_m\widetilde{k}_m
H_m(\boldsymbol{\xi}(\mathbf x))^{\rho_m-1}
\min_{\substack{
\mathbf s\in\mathcal S(\boldsymbol{\xi}_0)\\
\mathbf d\in\mathcal D
}}
\boldsymbol{\xi}_0^\top
\left(
\mathbf s-\frac{1}{k_m}\mathbf d
\right)\\
&\geq
\beta_m\widetilde{k}_m
\left(
c_{\mathcal S}-\frac{1}{k_m}
\right)
H_m(\boldsymbol{\xi}(\mathbf x))^{\rho_m-1}
\|\boldsymbol{\xi}_0\|,
\quad
\mathbf x\in\mathcal C_{m-1},
\end{aligned}
\end{equation}
where the inequality follows from conditions~3 and~4 of Definition~\ref{def:ahc}. Since \(k_m>1/c_{\mathcal S}\), the coefficient is positive. Moreover, for every \(\mathbf x\in\mathcal C_{m-1}\setminus\{\mathbf0\}\), one has \(\boldsymbol{\xi}_0\neq\mathbf0\) and \(H_m(\boldsymbol{\xi})=r_mU_{m-1}(\boldsymbol{\xi}_0)>0\). Therefore, \(\Phi(\boldsymbol{\xi}(\mathbf x))>0\) for every \(\mathbf x\in\mathcal C_{m-1}\setminus\{\mathbf0\}\).

We now prove the induction step. Fix \(\nu\in\{0,\ldots,m-1\}\) and assume that \(\widetilde{k}_{\nu+1},\ldots,\widetilde{k}_m\) have been selected so that \(\Phi>0\) on \(\mathcal C_\nu\setminus\{\mathbf0\}\). Consider \(\mathbf x\in\mathcal C_{\nu-1}\). Then \(\boldsymbol{\xi}_0=\cdots=\boldsymbol{\xi}_\nu\), so all terms in \eqref{eq:proof-Phi-selection} involving \(\widetilde{k}_0,\ldots,\widetilde{k}_{\nu-1}\) vanish. Moreover, \(\psi_\nu=0\) when \(\nu\geq1\). Hence, on \(\mathcal C_{\nu-1}\), the dependence of \(\Phi\) on \(\widetilde{k}_\nu\) is isolated in the nonnegative term \(\beta_\nu\widetilde{k}_\nu B_\nu\), while all remaining terms depend only on \(\widetilde{k}_{\nu+1},\ldots,\widetilde{k}_m\). Since \(B_\nu\) is independent of \((\mathbf s,\mathbf d)\), taking the minimum in \eqref{eq:proof-Phi} gives
\begin{equation}
\label{eq:proof-recursive-decomposition}
\Phi(\boldsymbol{\xi}(\mathbf x))
=
\beta_\nu\widetilde{k}_\nu B_\nu(\boldsymbol{\xi}(\mathbf x))
-
A_\nu(\boldsymbol{\xi}(\mathbf x);\widetilde{k}_{\nu+1},\ldots,\widetilde{k}_m),
\end{equation}
\(\forall \mathbf{x}\in\mathcal C_{\nu-1},\) where
\begin{equation}
\label{eq:proof-Anu}
\begin{aligned}
&A_\nu(\boldsymbol{\xi};\widetilde{k}_{\nu+1},\ldots,\widetilde{k}_m)
:=\\&\max_{\substack{
\mathbf s\in\mathcal S(\boldsymbol{\xi}_0)\\
\mathbf d\in\mathcal D
}}
\Bigg\{
\beta_\nu H_\nu(\boldsymbol{\xi})^{\rho_\nu-1}
\theta_\nu(\boldsymbol{\xi};\mathbf s,\mathbf d;\widetilde{k}_{\nu+1})\\
&+
\sum_{\mu=\nu+1}^{m-1}
\beta_\mu\Bigg[
H_\mu(\boldsymbol{\xi})^{\rho_\mu-1}
\left(
\widetilde{k}_\mu\psi_\mu(\boldsymbol{\xi})
+
\theta_\mu(\boldsymbol{\xi};\mathbf s,\mathbf d;\widetilde{k}_{\mu+1})
\right)
\\&-
\widetilde{k}_\mu B_\mu(\boldsymbol{\xi})
\Bigg]+
\beta_mH_m(\boldsymbol{\xi})^{\rho_m-1}
\widetilde{k}_m
\psi_m(\boldsymbol{\xi};\mathbf s,\mathbf d)
\Bigg\}.
\end{aligned}
\end{equation}
The maximum is attained since \(\mathcal S(\boldsymbol{\xi}_0)\) and \(\mathcal D\) are compact. Moreover, continuity of the maximized function, together with upper semicontinuity and compactness of \(\mathcal S\) and compactness of \(\mathcal D\), shows that \(A_\nu\) is upper semicontinuous, while the preceding homogeneity calculation shows that it is homogeneous of degree \(q\). Since \(k_m\) and \(\beta_0,\ldots,\beta_m\) are fixed, their dependence is omitted from the notation.

By \eqref{eq:proof-Bnu-zero}, if \(\mathbf x\in\mathcal C_{\nu-1}\setminus\{\mathbf0\}\) and \(B_\nu(\boldsymbol{\xi}(\mathbf x))=0\), then \(\mathbf x\in\mathcal C_\nu\setminus\{\mathbf0\}\). Hence, by the induction hypothesis and \eqref{eq:proof-recursive-decomposition},
\[
\begin{aligned}
&A_\nu(\boldsymbol{\xi}(\mathbf x);\widetilde{k}_{\nu+1},\ldots,\widetilde{k}_m)
=
-\Phi(\boldsymbol{\xi}(\mathbf x))
<0,\\&
\forall\,\mathbf x\in\mathcal C_{\nu-1}\setminus\{\mathbf0\}
\text{ with } B_\nu(\boldsymbol{\xi}(\mathbf x))=0.
\end{aligned}
\]

This is the hypothesis of Proposition~\ref{prop:homogeneous-domination}, applied on \(\mathcal C_{\nu-1}\) with \(a(\mathbf x):=A_\nu(\boldsymbol{\xi}(\mathbf x);\widetilde{k}_{\nu+1},\ldots,\widetilde{k}_m)\) and \(b(\mathbf x):=\beta_\nu B_\nu(\boldsymbol{\xi}(\mathbf x))\), which yields the finite threshold \(\widetilde{k}_\nu^*=\omega_\nu(\widetilde{k}_{\nu+1},\ldots,\widetilde{k}_m)\), where
\begin{equation}
\label{eq:proof-knu-star}
\begin{aligned}
&\omega_\nu(\widetilde{k}_{\nu+1},\ldots,\widetilde{k}_m)
:=\\&
\max\left\{
0,\,
\sup_{\substack{
\mathbf x\in\mathcal C_{\nu-1},\\ \varrho(\mathbf x)=1\\
B_\nu(\boldsymbol{\xi}(\mathbf x))>0
}}
\frac{
A_\nu(\boldsymbol{\xi}(\mathbf x);\widetilde{k}_{\nu+1},\ldots,\widetilde{k}_m)
}{
\beta_\nu B_\nu(\boldsymbol{\xi}(\mathbf x))
}
\right\}.
\end{aligned}
\end{equation}
Therefore, from \eqref{eq:proof-recursive-decomposition}, for every \(\widetilde{k}_\nu>\widetilde{k}_\nu^*\), there exists \(c_\nu>0\) such that
\begin{equation}
\label{eq:proof-recursive-positive}
\Phi(\boldsymbol{\xi}(\mathbf x))
\geq
c_\nu\varrho(\mathbf x)^q,
\qquad
\mathbf x\in\mathcal C_{\nu-1}.
\end{equation}
Thus, selecting \(\widetilde{k}_\nu>\widetilde{k}_\nu^*\) establishes the induction claim on \(\mathcal C_{\nu-1}\). Repeating this step for \(\nu=m-1,\ldots,0\), and using \(\mathcal C_{-1}=\mathcal X^{m+1}\), yields \(c_0>0\) such that
\begin{equation}
\label{eq:proof-global-Phi}
\Phi(\boldsymbol{\xi}(\mathbf x))
\geq
c_0\varrho(\mathbf x)^q
\end{equation}
for every \(\mathbf x\in\mathcal X^{m+1}\).

It remains to express \eqref{eq:proof-global-Phi} in terms of \(V\). Since \(V\) is continuous and homogeneous of degree \(d_V\), there exists \(\overline c_V>0\) such that \(V(\mathbf x)\leq\overline c_V\varrho(\mathbf x)^{d_V}\) for every \(\mathbf x\). Therefore,
\[
\varrho(\mathbf x)^q
\geq
\overline c_V^{-q/d_V}V(\mathbf x)^{q/d_V}.
\]
By \eqref{eq:proof-q}, \(q/d_V=1-1/((m+1)d_V)=\alpha\). Combined with \eqref{eq:proof-global-Phi} proves \eqref{eq:proof-Phi-V-bound} with \(c:=c_0\overline c_V^{-\alpha}\). Lemma~\ref{lem:lyapunov-upper-bound} gives \eqref{eq:main-lyapunov-estimate}.
\end{proof}

\subsection{Proof of Theorem~\ref{thm:main}}

\begin{proof}
Choose \(k_m>1/c_{\mathcal S}\) and \(\widetilde{k}_m>0\). By Lemma~\ref{lem:recursive-estimate}, there exist \(\widetilde{k}_{m-1},\ldots,\widetilde{k}_0>0\) and \(c>0\) such that \eqref{eq:main-lyapunov-estimate} holds for \(\gamma=0\). Set \(\kappa:=k_m/\prod_{\mu=0}^m\widetilde{k}_\mu\), \(k_0:=\widetilde{k}_0\kappa^{1/(m+1)}\), and \(k_\mu:=\widetilde{k}_\mu k_{\mu-1}\kappa^{1/(m+1)}\) for \(\mu=1,\ldots,m-1\). Then \(\kappa>0\), and these gains reproduce the normalized gains of Lemma~\ref{lem:normalization}. Hence, \eqref{eq:main-lyapunov-estimate} holds for \eqref{eq:normalized-system} when \(\gamma=0\).

Consider now \(\gamma\geq0\), let \(\mathbf x(t)\) be any solution of \eqref{eq:normalized-system}, and define \[
\begin{aligned}
\mathbf z(t)&:=\delta(\mathbf x(t);\exp(\gamma t)),\\
\tau(t)&:=\int_0^t \exp(\gamma s/(m+1))\,\mathrm ds.
\end{aligned}
\] 

Differentiating \(\mathbf z_\mu(t)=\exp(\gamma r_\mu t)\mathbf x_\mu(t)\) gives \(\dot{\mathbf z}_\mu=\gamma r_\mu\mathbf z_\mu+\exp(\gamma r_\mu t)\dot{\mathbf x}_\mu\). The main idea is that substituting \eqref{eq:normalized-system}, the term \(\gamma r_\mu\mathbf z_\mu\) obtained from the product rule before cancels with \(-\gamma r_\mu\exp(\gamma r_\mu t)\mathbf x_\mu=-\gamma r_\mu\mathbf z_\mu\) in the dynamics. Since \(r_{\mu+1}=r_\mu-1/(m+1)\), \(\nabla U_\mu\) is homogeneous of degree \(r_{\mu+1}\), and \(\mathcal S\) is homogeneous of degree zero, the remaining terms have the common factor \(\exp(\gamma t/(m+1))=\mathrm d\tau/\mathrm dt\). Hence,
\begin{equation}
\begin{aligned}
\frac{\mathrm d\mathbf z_0}{\mathrm d\tau}
&=
-\widetilde{k}_0\mathbf Q
\left(
\nabla U_0(\mathbf z_0)-\mathbf z_1
\right),\\
\frac{\mathrm d\mathbf z_\mu}{\mathrm d\tau}
&=
-\widetilde{k}_\mu
\left(
\nabla U_\mu(\mathbf z_0)-\mathbf z_{\mu+1}
\right),
\quad \mu=1,\ldots,m-1,\\
\frac{\mathrm d\mathbf z_m}{\mathrm d\tau}
&\in
-\widetilde{k}_m
\left(
\mathcal S(\mathbf z_0)-\frac{1}{k_m}\mathcal D
\right).
\end{aligned}
\end{equation}

Since \(\mathrm d\tau/\mathrm dt>0\), the reparameterized curve \(\mathbf z(t(\tau))\) satisfies \eqref{eq:normalized-system} with \(\gamma=0\), and \eqref{eq:main-lyapunov-estimate} gives \(\mathrm dV(\mathbf z)/\mathrm d\tau\leq-cV(\mathbf z)^\alpha\) almost everywhere. Using \(V(\mathbf z(t))=\exp(d_V\gamma t)V(\mathbf x(t))\), which follows from \eqref{eq:V-homogeneity}, the chain rule gives
\begin{equation}
\label{eq:damped-lyapunov-estimate}
\begin{aligned}
&\frac{\mathrm dV(\mathbf x(t))}{\mathrm dt}
=\\&
-\gamma d_V\exp(-d_V\gamma t)V(\mathbf z(t))
+
\exp(-d_V\gamma t)
\frac{\mathrm dV(\mathbf z(t))}{\mathrm dt}\\
&=
-\gamma d_VV(\mathbf x(t))
+
\exp(-d_V\gamma t)
\frac{\mathrm dV(\mathbf z)}{\mathrm d\tau}
\frac{\mathrm d\tau}{\mathrm dt}\\
&\leq
-
c\exp(-d_V\gamma t)
V(\mathbf z(t))^\alpha
\exp\left(\frac{\gamma t}{m+1}\right)\\
&=
-
c\exp\left(
\gamma t\left(
-d_V+\frac{1}{m+1}+d_V\alpha
\right)
\right)
V(\mathbf x(t))^\alpha\\
&=
-cV(\mathbf x(t))^\alpha,
\end{aligned}
\end{equation}
for almost every \(t\), where \(d_V\alpha=d_V-1/(m+1)\) follows from \eqref{eq:alpha}. 

By Lemma~\ref{lem:lyapunov-properties}, \(V\) is positive definite, radially unbounded, and continuously differentiable. Hence, \(V(\mathbf x(\cdot))\) is absolutely continuous, and \eqref{eq:damped-lyapunov-estimate} implies that every solution remains in the compact sublevel set determined by its initial condition. Since the right-hand side of \eqref{eq:normalized-system} is upper semicontinuous, locally bounded, and has nonempty compact convex values, every maximal solution is forward complete. Moreover, since \(\alpha\in(0,1)\), integration of \eqref{eq:damped-lyapunov-estimate} shows that every solution reaches the origin in finite time, with \(T\leq V(\mathbf x(0))^{1-\alpha}/(c(1-\alpha))\). Equation~\eqref{eq:damped-lyapunov-estimate} also implies that every solution starting at or reaching the origin remains there, while positive definiteness of \(V\) and its nonincrease imply Lyapunov stability. Therefore, the origin of \eqref{eq:normalized-system} is globally finite-time stable. Lemma~\ref{lem:normalization} gives the same conclusion for \eqref{eq:system}.
\end{proof}

\section{Gain proposal}
\label{sec:gain-selection}

Although Theorem~\ref{thm:main} is existential with respect to the gains, its proof yields the recursive gain-proposal procedure summarized in Algorithm~\ref{alg:gain-selection}. At each step, the required gain threshold is characterized by a finite-dimensional optimization problem involving quantities determined by the AHC data and the selected Lyapunov coefficients \(\beta_0,\ldots,\beta_m\).

\begin{algorithm}
\caption{Recursive gain-proposal}
\label{alg:gain-selection}
\begin{algorithmic}[1]

\STATE Choose \(\beta_0,\ldots,\beta_m>0\), \(k_m>1/c_{\mathcal S}\), and \(\widetilde{k}_m>0\).

\FOR{\(\nu=m-1,m-2,\ldots,0\)}

    \STATE Determine
    \(
    \widetilde{k}_\nu^*
=
\omega_\nu(\widetilde{k}_{\nu+1},\ldots,\widetilde{k}_m)
    \)
    from \eqref{eq:gain-selection-program}.

    \STATE Choose \(\widetilde{k}_\nu>\widetilde{k}_\nu^*\).

\ENDFOR

\STATE Set
\(
\kappa
:=
{k_m}
\left(\prod_{\mu=0}^m\widetilde{k}_\mu\right)^{-1}.
\)

\STATE Set
\(
k_0:=\widetilde{k}_0\kappa^{1/(m+1)}.
\)

\FOR{\(\mu=1,2,\ldots,m-1\)}

    \STATE Set
    \(
    k_\mu
    :=
    \widetilde{k}_\mu k_{\mu-1}\kappa^{1/(m+1)}.
    \)

\ENDFOR

\end{algorithmic}
\end{algorithm}

\begin{proposition}
\label{prop:gain-selection}
Assume that the conditions of Theorem~\ref{thm:main} hold. At each step \(\nu=m-1,\ldots,0\) of Algorithm~\ref{alg:gain-selection}, with \(\widetilde{k}_{\nu+1},\ldots,\widetilde{k}_m\) fixed according to the preceding steps, the threshold \(\omega_\nu(\widetilde{k}_{\nu+1},\ldots,\widetilde{k}_m)\) in \eqref{eq:proof-knu-star} is equivalently represented by
\begin{equation}
\label{eq:gain-selection-program}
\begin{aligned}
&\omega_\nu(\widetilde{k}_{\nu+1},\ldots,\widetilde{k}_m)
=
\max\left\{
0,\,
\underset{k,\boldsymbol{\xi}}{\operatorname{sup}}
\quad
k
\right\}\\
&\operatorname{subject\ to}
\quad
A_\nu(\boldsymbol{\xi};
\widetilde{k}_{\nu+1},\ldots,\widetilde{k}_m)
-
k\beta_\nu B_\nu(\boldsymbol{\xi})
\geq0,
\\&\boldsymbol{\xi}\in\mathcal X^{m+1},
\qquad
\|\boldsymbol{\xi}\|=1,
\qquad
\boldsymbol{\xi}_0=\cdots=\boldsymbol{\xi}_\nu.
\end{aligned}
\end{equation}
Consequently, if the gains are selected according to Algorithm~\ref{alg:gain-selection}, then the conclusion of Theorem~\ref{thm:main} holds.
\end{proposition}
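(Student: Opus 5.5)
We show that the program \eqref{eq:gain-selection-program} has the same optimal value as the supremum in \eqref{eq:proof-knu-star}. The remaining claim is then immediate: with the same $\omega_\nu$ and $\widetilde{k}_\nu>\omega_\nu$ at each step, the proof of Lemma~\ref{lem:recursive-estimate} goes through unchanged. The gains built in steps 6--10 of Algorithm~\ref{alg:gain-selection} are exactly those in the proof of Theorem~\ref{thm:main}, so global finite-time stability follows.

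\emph{Step 1: change of variables.} The map $\mathbf x\mapsto\boldsymbol{\xi}(\mathbf x)$ in \eqref{eq:proof-xi} is a bijection of $\mathcal X^{m+1}$, with inverse \eqref{eq:proof-x-xi}. By \eqref{eq:cmu:xi}, the condition $\mathbf x\in\mathcal C_{\nu-1}$ is the same as $\boldsymbol{\xi}_0=\cdots=\boldsymbol{\xi}_\nu$. For $\nu=0$ this condition is empty, matching $\mathcal C_{-1}=\mathcal X^{m+1}$. Hence the supremum in \eqref{eq:proof-knu-star} can be taken over $\boldsymbol{\xi}$ in the linear constraint set.

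\emph{Step 2: normalization.} We replace $\varrho(\mathbf x)=1$ by $\|\boldsymbol{\xi}\|=1$ using homogeneity. Each $\boldsymbol{\xi}_\mu$ has degree one under the dilation \eqref{eq:weighted-dilation} (Proposition~\ref{prop:potential-properties}). Thus $\delta(\mathbf x;\lambda)$ corresponds to $\lambda\boldsymbol{\xi}$, and standard Euclidean scaling acts on $\boldsymbol{\xi}$. Both $A_\nu$ and $B_\nu$ have degree $q$, as shown in the proof of Lemma~\ref{lem:recursive-estimate}, so the ratio $A_\nu/(\beta_\nu B_\nu)$ is constant along these rays. Every ray with $\mathbf x\neq\mathbf0$ meets both $\{\varrho=1\}$ and $\{\|\boldsymbol{\xi}\|=1\}$ exactly once. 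Both constraint sets are dilation invariant, and so is the condition $B_\nu>0$. The two suprema are therefore the same.

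\emph{Step 3: epigraph reformulation.} Fix a feasible $\boldsymbol{\xi}$ with $B_\nu(\boldsymbol{\xi})>0$. Then $A_\nu-k\beta_\nu B_\nu\ge0$ holds exactly when $k\le A_\nu/(\beta_\nu B_\nu)$. The largest admissible $k$ is therefore the ratio, and taking the supremum over $\boldsymbol{\xi}$ recovers the sup in \eqref{eq:proof-knu-star}.

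Points with $B_\nu(\boldsymbol{\xi})=0$ need separate care, and this is the main obstacle: they are excluded in \eqref{eq:proof-knu-star} but allowed in \eqref{eq:gain-selection-program}. If such a point were feasible, the constraint would reduce to $A_\nu\ge0$ and every $k$ would be admissible, making the program unbounded. We rule this out using the induction hypothesis. By \eqref{eq:proof-Bnu-zero}, a nonzero $\boldsymbol{\xi}$ in the constraint set with $B_\nu=0$ lies in $\mathcal C_\nu\setminus\{\mathbf0\}$. There $A_\nu=-\Phi<0$ because the previously chosen gains give $\Phi>0$, so the constraint $A_\nu\ge0$ fails. These points are thus infeasible, and the two suprema coincide.

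Finiteness of $\omega_\nu$ comes from Proposition~\ref{prop:homogeneous-domination}, which uses upper semicontinuity of $A_\nu$ and compactness of the normalized set. The outer $\max\{0,\cdot\}$ is the same in both formulas, and it also covers the case where the program is infeasible (supremum $-\infty$).
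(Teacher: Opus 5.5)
Your proposal is correct and follows essentially the same route as the paper's proof. You translate $\mathcal C_{\nu-1}$ into the linear constraint $\boldsymbol{\xi}_0=\cdots=\boldsymbol{\xi}_\nu$ through the bijection $\mathbf x\mapsto\boldsymbol{\xi}(\mathbf x)$, use $\boldsymbol{\xi}(\delta(\mathbf x;\lambda))=\lambda\boldsymbol{\xi}(\mathbf x)$ and the equal degrees of $A_\nu$ and $B_\nu$ to replace $\varrho(\mathbf x)=1$ by $\|\boldsymbol{\xi}\|=1$, rule out points with $B_\nu=0$ as infeasible because the induction hypothesis forces $A_\nu=-\Phi<0$ there, and conclude because Algorithm~\ref{alg:gain-selection} reproduces the gain construction of Lemma~\ref{lem:recursive-estimate} and Theorem~\ref{thm:main}.
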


\begin{proof}
By \eqref{eq:cmu:xi}, the constraint \(\mathbf x\in\mathcal C_{\nu-1}\) is equivalent to \(\boldsymbol{\xi}_0=\cdots=\boldsymbol{\xi}_\nu\). Moreover, by \eqref{eq:proof-xi} and Proposition~\ref{prop:potential-properties}, item~\ref{item:potential-conjugate}, the transformation \(\boldsymbol{\xi}(\mathbf x)\) is bijective and satisfies \(\boldsymbol{\xi}(\delta(\mathbf x;\lambda))=\lambda\boldsymbol{\xi}(\mathbf x)\). Hence, every nonzero homogeneous ray in \(\mathcal C_{\nu-1}\) has a unique representative satisfying \(\|\boldsymbol{\xi}\|=1\). Since \(A_\nu\) and \(\beta_\nu B_\nu\) are homogeneous of the same degree, their quotient is invariant along these rays wherever \(B_\nu>0\). If \(B_\nu(\boldsymbol{\xi})=0\), then \eqref{eq:proof-Bnu-zero} and the recursive construction give \(A_\nu(\boldsymbol{\xi};\widetilde{k}_{\nu+1},\ldots,\widetilde{k}_m)<0\), so the inequality constraint in \eqref{eq:gain-selection-program} cannot hold. Hence, every feasible point satisfies \(B_\nu(\boldsymbol{\xi})>0\), and the inequality constraint is equivalent to
\(k\leq A_\nu(\boldsymbol{\xi};\widetilde{k}_{\nu+1},\ldots,\widetilde{k}_m)/(\beta_\nu B_\nu(\boldsymbol{\xi}))\). Therefore, taking the supremum over \(k\) and \(\boldsymbol{\xi}\) yields exactly the supremum in \eqref{eq:proof-knu-star}, and taking the maximum with zero gives \(\omega_\nu(\widetilde{k}_{\nu+1},\ldots,\widetilde{k}_m)\). Algorithm~\ref{alg:gain-selection} consequently reproduces the gain construction in the proof of Theorem~\ref{thm:main}, and the conclusion follows.
\end{proof}

\begin{remark}
Algorithm~\ref{alg:gain-selection} gives a sequential gain-proposal procedure, although the thresholds \(\omega_\nu\) generally require approximating \eqref{eq:gain-selection-program} numerically, as in other established HOSM gain-design procedures \cite{cruz2018,calmbach2026}. Once \(\beta_0,\ldots,\beta_m\) are fixed, all quantities entering this program are determined by the AHC data through \(U_\mu\), \(\nabla U_\mu\), and the Hessian actions of \(U_\mu^*\) appearing in \eqref{eq:proof-auxiliary-functions}. In particular, for the power potentials in Proposition~\ref{prop:power-potential}, these Hessian actions can be computed through \eqref{eq:power-potential-hessian-characterization} without evaluating the conjugates \(U_\mu^*\) explicitly.
\end{remark}

\section{Numerical example}
\label{sec:numerical-example}

We illustrate the gain-proposal procedure with the affine formation observer of Section~\ref{subsec:affine}. We consider the planar nominal configuration shown in Figure~\ref{fig:affine-configuration}, taken from \cite{zhao2018}, with \(N=7\), \(N_L=3\), \(N_F=4\), \(d=2\), and observer order \(m=2\).

\begin{figure}[t]
\centering
\includegraphics[width=0.78\columnwidth]{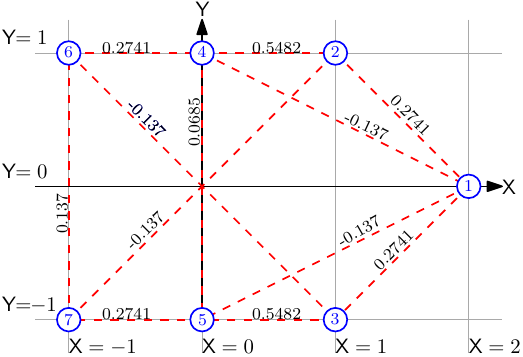}
\caption{\emph{Nominal configuration and relative stress parameters used in the numerical example, adapted from \cite{zhao2018}. Nodes \(\{1,2,3\}\) are leaders and nodes \(\{4,5,6,7\}\) are followers.}}
\label{fig:affine-configuration}
\end{figure}

The common scale of the stress parameters is arbitrary, since it does not change either the affine formation space or the follower reference. We therefore use this freedom to normalize \(c_{\mathcal S}\). Let \(\mathbf\Omega^\star\) denote the stress matrix associated with the parameters shown in Figure~\ref{fig:affine-configuration}. We choose \(\alpha_\Omega=1/\lambda_{\min}(\mathbf\Omega_{FF}^\star)=2.9142\) and use the scaled stress matrix \(\mathbf\Omega=\alpha_\Omega\mathbf\Omega^\star\). Then \(c_{\mathcal S}=\lambda_{\min}(\mathbf\Omega_{FF})=1\), so the terminal gain condition \eqref{eq:km} reduces to \(k_2>1\). This scaling leaves \(\operatorname{Null}(\mathbf\Omega\otimes\mathbf I_d)\) unchanged so the prescribed affine formation and follower reference are unaffected.

We set \(\beta_0=\beta_1=\beta_2=1\), choose \(k_2=2>1/c_{\mathcal S}\) and \(\widetilde{k}_2=1\), and approximate the optimization problems in \eqref{eq:gain-selection-program} using the \texttt{\small differential\_evolution} routine from \texttt{\small scipy}, with \texttt{\small popsize}$=10$, \texttt{\small maxiter}$=100$, tolerance \(10^{-13}\), and ten independent runs. In line with related numerical HOSM gain-design procedures \cite{cruz2018,calmbach2026}, the searches evaluate the nonconvex homogeneous optimization problems on their compact normalized domains. Accordingly, the resulting values are used only to propose candidate gains, whose behavior is subsequently evaluated in simulation. The resulting threshold estimates are \(\omega_1=5.81\) and \(\omega_0=23.09\). Using these estimates as practical gain proposals, we select
\(
(\widetilde{k}_0,\widetilde{k}_1,\widetilde{k}_2)
=
(24,6,1),
\)
Algorithm~\ref{alg:gain-selection} then yields \(\kappa=1/72\) and
\(
(k_0,k_1,k_2)=(5.77,8.32,2).
\)
We next simulate \eqref{eq:affine:observer} with time-varying leader trajectories
\begin{equation}
\begin{aligned}
\mathbf p_1(t)&=\operatorname{col}(\sin t,\cos t),\\
\mathbf p_2(t)&=\operatorname{col}(1+0.8\sin(1.2t),0.8\cos(1.2t)),\\
\mathbf p_3(t)&=\operatorname{col}(0.7\cos(1.4t),1+0.7\sin(1.4t)).
\end{aligned}
\end{equation}
We select \(L=66\), which satisfies Assumption~\ref{ass:affine} for these trajectories, and initialize all follower observer states at zero. The observer dynamics \eqref{eq:affine:observer} are integrated using the forward Euler method with step size \(\Delta t=10^{-4}\). Figure~\ref{fig:affine-numerical} shows the horizontal $\mathsf{X}$ components of the observer states and corresponding follower references for \(\mu=0,1,2\), together with the Lyapunov function \(V\) and the numerically evaluated \(-\dot V\). The observer states reach the corresponding references in finite time, while \(V\) decreases to zero and \(-\dot V\) remains nonnegative up to numerical precision, consistently with the Lyapunov analysis.

\begin{figure}[t]
\centering
\includegraphics[width=0.9\columnwidth]{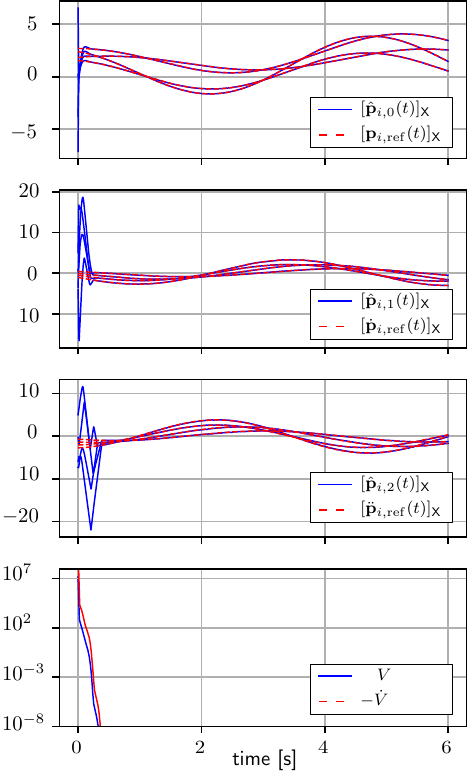}
\caption{\emph{Affine formation observer with \(m=2\). From top to bottom: horizontal components of the observer states (blue) and corresponding references (red dashed) for all followers and \(\mu=0,1,2\), followed by the Lyapunov function \(V\) (blue) and the numerically evaluated \(-\dot V\) (red), shown on a logarithmic scale.}}
\label{fig:affine-numerical}
\end{figure}

\section{Conclusions}

This work introduced abstract homogeneous chains as a unified framework for arbitrary-order sliding-mode algorithms in multi-agent systems. Combining homogeneity with convex analysis, we constructed an arbitrary-order homogeneous Lyapunov function, established global finite-time stability, and obtained recursive sufficient gain conditions through finite-dimensional optimization. The framework provides a numerical optimization-based arbitrary-order gain-proposal procedure for EDCHO, strengthens REDCHO convergence from local to global, provides arbitrary-order tuning rules for leader--follower distributed differentiation, and yields a new arbitrary-order observer for multi-leader affine formation tracking.

The framework also has two main limitations. Since the potentials \(U_\mu\) and the matrix \(\mathbf Q\) depend on the interaction graph, the Lyapunov construction is not common to different graphs and therefore does not directly cover switching networks. Moreover, although gain proposal is finite-dimensional, the required thresholds are generally not available in closed form and their certified computation may require non-trivial application-specific representations and optimization tools. These two directions are therefore considered for future work.

\appendix

\subsection{Auxiliary results}
\label{sec:auxiliary}
\begin{proposition}
\label{prop:potential-properties}
Let \(U:\mathcal{X}\to\R_{\geq0}\) satisfy condition~2 of Definition~\ref{def:ahc}, with homogeneity degree \(p\in(1,2)\), and let \(p^*:=p/(p-1)\). Then, the following properties hold.
\begin{enumerate}
\item\label{item:potential-primal} The function \(U\) is strictly convex, and the map \(\nabla U:\mathcal{X}\to\mathcal{X}\) is bijective. Moreover, \(\nabla U(\mathbf0)=\mathbf0\), \(\nabla U(\lambda\mathbf{x})=\lambda^{p-1}\nabla U(\mathbf{x})\) for every \(\mathbf{x}\in\mathcal{X}\) and \(\lambda>0\), and \(\mathbf{x}^\top\nabla U(\mathbf{x})=pU(\mathbf{x})\) for every \(\mathbf{x}\in\mathcal X\).

\item\label{item:potential-conjugate} The function \(U^*\) is positive definite and homogeneous of degree \(p^*\). Its gradient satisfies \(\nabla U^*=(\nabla U)^{-1}\), \(\nabla U^*(\mathbf0)=\mathbf0\), and \(\nabla U^*(\lambda\mathbf y)=\lambda^{1/(p-1)}\nabla U^*(\mathbf y)\) for every \(\mathbf y\in\mathcal X\) and \(\lambda>0\). Moreover, \(U^*(\nabla U(\mathbf x))=(p-1)U(\mathbf x)\) for every \(\mathbf x\in\mathcal X\).

\item\label{item:potential-hessian} The Hessian of \(U^*\) satisfies
\(
\nabla^2U^*(\lambda^{p-1}\mathbf y)
=
\lambda^{2-p}\nabla^2U^*(\mathbf y)
\)
for every \(\mathbf y\in\mathcal X\) and \(\lambda>0\).

\item\label{item:potential-gap} The Fenchel--Young gap \(F_U(\mathbf{x},\mathbf{y})\) is continuously differentiable and nonnegative. It vanishes if and only if \(\mathbf{y}=\nabla U(\mathbf{x})\), or equivalently, \(\mathbf{x}=\nabla U^*(\mathbf{y})\). Moreover, \(F_U(\lambda\mathbf{x},\lambda^{p-1}\mathbf{y})=\lambda^pF_U(\mathbf{x},\mathbf{y})\) for every \(\lambda>0\).
\end{enumerate}
\end{proposition}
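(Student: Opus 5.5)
The plan is to prove the four items in order, each relying on the previous ones. Throughout, I use only Euler's identity for homogeneous functions, standard Legendre--Fenchel duality for differentiable, strictly convex, coercive functions (\cite{bauschke2011}), and the chain rule.

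\emph{Item~\ref{item:potential-primal}.} Differentiating \(U(\lambda\mathbf x)=\lambda^pU(\mathbf x)\) in \(\mathbf x\) gives \(\nabla U(\lambda\mathbf x)=\lambda^{p-1}\nabla U(\mathbf x)\). Differentiating in \(\lambda\) at \(\lambda=1\) gives Euler's identity \(\mathbf x^\top\nabla U(\mathbf x)=pU(\mathbf x)\). Letting \(\lambda\to0\) and using continuity of \(\nabla U\) gives \(\nabla U(\mathbf0)=\mathbf0\).

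For strict convexity, I argue by contradiction. Suppose \(U\) were affine on a segment \([\mathbf a,\mathbf b]\) with \(\mathbf a\neq\mathbf b\). If the segment's line does not pass through the origin, I consider the cone spanned by the segment. Homogeneity of degree \(p>1\) forces \(U\) to be affine along each ray-rescaled copy of the segment, and combining this with convexity along the rays (where \(U\) behaves like \(t^p\), strictly convex) yields a contradiction. If the line passes through the origin, the restriction \(t\mapsto U(t\mathbf a)\) equals \(|t|^p\) times positive constants, which is strictly convex. This homogeneity/ray argument is the delicate point. A cleaner route is to use positive definiteness and homogeneity to bound \(U(\mathbf x)\geq c\|\mathbf x\|^p\) with \(c>0\), and then invoke strict convexity of the sublevel-set gauge.

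Bijectivity follows from three facts. Injectivity comes from strict monotonicity \eqref{eq:strict-monotone}. Surjectivity onto \(\mathcal X\) comes from coercivity: the bound \(U(\mathbf x)\geq c\|\mathbf x\|^p\) with \(p>1\) makes \(\mathbf x\mapsto U(\mathbf x)-\mathbf y^\top\mathbf x\) attain its minimum, where the gradient vanishes. Finally, \(\nabla U\) maps \(\mathcal X\) into \(\mathcal X\), since gradients are taken within the subspace.

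\emph{Item~\ref{item:potential-conjugate}.} For differentiable, strictly convex, supercoercive \(U\), the conjugate \(U^*\) is finite, differentiable, and satisfies \(\nabla U^*=(\nabla U)^{-1}\) (\cite[Ch.~16, 18]{bauschke2011}).

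Homogeneity follows by substitution in \eqref{eq:conjugate}. Writing \(\mathbf x=\lambda^{1/(p-1)}\mathbf z\), we get \(U^*(\lambda\mathbf y)=\lambda^{p^*}U^*(\mathbf y)\), because \(1+1/(p-1)=p/(p-1)=p^*\). Differentiating this gives the stated scaling of \(\nabla U^*\), and \(\nabla U^*(\mathbf0)=\mathbf0\) follows as before.

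The identity \(U^*(\nabla U(\mathbf x))=\mathbf x^\top\nabla U(\mathbf x)-U(\mathbf x)=(p-1)U(\mathbf x)\) comes from the equality case of Fenchel--Young together with Euler's identity. Positive definiteness then follows from bijectivity of \(\nabla U\) and positive definiteness of \(U\).

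\emph{Item~\ref{item:potential-hessian}.} Differentiate \(\nabla U^*(\lambda\mathbf y)=\lambda^{1/(p-1)}\nabla U^*(\mathbf y)\) in \(\mathbf y\), which is valid because \(U^*\) is \(C^2\). This gives \(\nabla^2U^*(\lambda\mathbf y)=\lambda^{1/(p-1)-1}\nabla^2U^*(\mathbf y)\). Replacing \(\lambda\) by \(\lambda^{p-1}\) gives the exponent \(1-(p-1)=2-p\).

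\emph{Item~\ref{item:potential-gap}.} The function \(F_U\) is \(C^1\) since \(U\) and \(U^*\) are. Nonnegativity and the equality case \(\mathbf y=\nabla U(\mathbf x)\) are the Fenchel--Young facts recalled in Section~\ref{subsec:convex}. The equivalent form \(\mathbf x=\nabla U^*(\mathbf y)\) follows from item~\ref{item:potential-conjugate}.

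For scaling, each term of \(F_U(\lambda\mathbf x,\lambda^{p-1}\mathbf y)\) scales by \(\lambda^p\):
\[
U(\lambda\mathbf x)=\lambda^pU(\mathbf x),\qquad U^*(\lambda^{p-1}\mathbf y)=\lambda^{(p-1)p^*}U^*(\mathbf y)=\lambda^pU^*(\mathbf y),\qquad (\lambda\mathbf x)^\top(\lambda^{p-1}\mathbf y)=\lambda^p\,\mathbf x^\top\mathbf y .
\]

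The main obstacle is the strict convexity and surjectivity in item~\ref{item:potential-primal}, since the hypothesis only gives convexity. I would first try to derive strict convexity from homogeneity with \(p\in(1,2)\) and positive definiteness, via the bound \(U\geq c\|\cdot\|^p\) and the ray argument above. If a gap remains, I would handle it through the degenerate segments lying inside a cone on which \(U\) is linear, showing that homogeneity of degree \(p>1\) excludes them.
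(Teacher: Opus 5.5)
There is a genuine gap in item~1. Strict convexity of $U$, and with it injectivity of $\nabla U$, does \emph{not} follow from convexity, continuous differentiability, positive definiteness and homogeneity of degree $p\in(1,2)$. Neither your ray argument nor the sublevel-set route can therefore succeed.

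Here is a counterexample on $\mathcal X=\R^2$. Let $K$ be the stadium, i.e., the convex hull of the unit disks centred at $(\pm1,0)$, let $g$ be its Minkowski gauge, and set $U:=g^p/p$. Because $\partial K$ is $C^1$, $g$ is $C^1$ away from the origin. Hence $U$ is $C^1$, convex, positive definite and homogeneous of degree $p$. Yet $U\equiv 1/p$ on the segment $\{(t,1):|t|\leq1\}$, and $\nabla U\equiv(0,1)$ there. So $U$ is not strictly convex and $\nabla U$ is not injective.

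The most your ray argument can extract from affinity on $[\mathbf a,\mathbf b]$ is $U(\mathbf a)=U(\mathbf b)$, i.e., that the segment lies in a flat piece of a level set. Homogeneity then merely reproduces that flat piece on every level set, which is no contradiction. Likewise, the sublevel set $\{U\leq 1/p\}=K$ is not strictly convex. Your item~2 inherits the gap, because you derive differentiability of $U^*$ and $\nabla U^*=(\nabla U)^{-1}$ from the strict convexity you have not established.

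The missing idea is that condition~2 also assumes $U^*$ is (twice continuously) differentiable, and this is exactly what excludes flat pieces. The implication must run from smoothness of $U^*$ to strict convexity of $U$, not the other way. The paper obtains this from \cite[Theorem~26.3]{rockafellar1970}.

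A direct argument also works. If $\nabla U(\mathbf a)=\nabla U(\mathbf b)=\mathbf y$, the equality case of Fenchel--Young gives $\mathbf a,\mathbf b\in\partial U^*(\mathbf y)=\{\nabla U^*(\mathbf y)\}$, so $\mathbf a=\mathbf b$. If $U$ were affine on a nondegenerate segment, the gradient $\mathbf y$ of $U$ at its midpoint would support $U$ along the whole segment. That puts both endpoints in $\partial U^*(\mathbf y)$, again a contradiction. Consistently, in the stadium example $U^*$ fails to be differentiable at $(0,1)$.

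Once strict convexity is obtained this way, the rest of your plan matches the paper's proof. This covers coercivity via $U\geq c\|\mathbf x\|^p$ for surjectivity, and the substitution $\mathbf x=\lambda^{1/(p-1)}\mathbf z$ for the homogeneity of $U^*$. It also covers Euler's identity with the Fenchel--Young equality case for $U^*(\nabla U(\mathbf x))=(p-1)U(\mathbf x)$, differentiating the gradient scaling for item~3, and the termwise scaling for item~4.
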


\begin{proof}
For item~\ref{item:potential-primal}, continuity, positive definiteness, and homogeneity of \(U\) give \(c:=\min_{\|\mathbf{x}\|=1}U(\mathbf{x})>0\), hence \(U(\mathbf{x})\geq c\|\mathbf{x}\|^p\). Since \(U^*\) is differentiable on \(\mathcal X\), \cite[Theorem~26.3]{rockafellar1970} implies that \(U\) is strictly convex. Hence, \eqref{eq:strict-monotone} implies that \(\nabla U\) is injective. To prove surjectivity, fix \(\mathbf y\in\mathcal X\) and define \(\varphi_{\mathbf y}(\mathbf x):=U(\mathbf x)-\mathbf x^\top\mathbf y\). Since \(\varphi_{\mathbf y}(\mathbf x)\geq c\|\mathbf x\|^p-\|\mathbf x\|\|\mathbf y\|\) and \(p>1\), \(\varphi_{\mathbf y}\) is coercive and attains a minimum at some \(\mathbf x_{\mathbf y}\), where \(\nabla U(\mathbf x_{\mathbf y})=\mathbf y\). Thus, \(\nabla U\) is bijective. Since \(\mathbf0\) minimizes \(U\), \(\nabla U(\mathbf0)=\mathbf0\). Differentiating \(U(\lambda\mathbf x)=\lambda^pU(\mathbf x)\) with respect to \(\mathbf x\) gives \(\nabla U(\lambda\mathbf x)=\lambda^{p-1}\nabla U(\mathbf x)\), while differentiating with respect to \(\lambda\) at \(\lambda=1\) gives \(\mathbf x^\top\nabla U(\mathbf x)=pU(\mathbf x)\).

For item~\ref{item:potential-conjugate}, \cite[Theorem~16.23]{bauschke2011} gives \(\mathbf y=\nabla U(\mathbf x)\) if and only if \(\mathbf x=\nabla U^*(\mathbf y)\). Since \(\nabla U\) is bijective by item~\ref{item:potential-primal}, it follows that \(\nabla U^*=(\nabla U)^{-1}\). Also, \(U^*(\mathbf0)=\sup_{\mathbf x}\{-U(\mathbf x)\}=0\). For \(\mathbf y\neq\mathbf0\), let \(\mathbf x:=\nabla U^*(\mathbf y)\). Then \(\mathbf x\neq\mathbf0\), \(\mathbf y=\nabla U(\mathbf x)\), and \(U^*(\mathbf y)=\mathbf x^\top\mathbf y-U(\mathbf x)=(p-1)U(\mathbf x)>0\), so \(U^*\) is positive definite. For \(\lambda>0\), substituting \(\mathbf x=\lambda^{1/(p-1)}\mathbf z\) in the definition of \(U^*\) gives \(U^*(\lambda\mathbf y)=\lambda^{p/(p-1)}U^*(\mathbf y)=\lambda^{p^*}U^*(\mathbf y)\). Moreover, \(\nabla U(\lambda^{1/(p-1)}\mathbf x)=\lambda\nabla U(\mathbf x)\), and applying \((\nabla U)^{-1}=\nabla U^*\) gives \(\nabla U^*(\lambda\mathbf y)=\lambda^{1/(p-1)}\nabla U^*(\mathbf y)\), including \(\nabla U^*(\mathbf0)=\mathbf0\).

For item~\ref{item:potential-hessian}, item~\ref{item:potential-conjugate} gives \(\nabla U^*(\lambda^{p-1}\mathbf y)=\lambda\nabla U^*(\mathbf y)\). Differentiating with respect to \(\mathbf y\) gives \(\nabla^2U^*(\lambda^{p-1}\mathbf y)=\lambda^{2-p}\nabla^2U^*(\mathbf y)\).

For item~\ref{item:potential-gap}, the Fenchel--Young inequality \cite[Proposition~13.13]{bauschke2011} gives \(F_U(\mathbf{x},\mathbf{y})\geq0\), with equality if and only if \(\mathbf y=\nabla U(\mathbf x)\), equivalently, \(\mathbf x=\nabla U^*(\mathbf y)\), by \cite[Theorem~16.23]{bauschke2011}. Since \(U\) and \(U^*\) are continuously differentiable, so is \(F_U\). Finally, their homogeneity and \((p-1)p^*=p\) give \(F_U(\lambda\mathbf{x},\lambda^{p-1}\mathbf{y})=\lambda^pF_U(\mathbf{x},\mathbf{y})\).
\end{proof}

Homogeneous domination arguments of the following type are classical \cite{andrieu2008}. The version below allows a relaxed upper semi-continuity condition required in this work.
\begin{proposition}
\label{prop:homogeneous-domination}
Let \(\mathcal C\subseteq\mathcal X^{m+1}\) be nonempty, closed, and invariant under \eqref{eq:weighted-dilation}. Let \(a:\mathcal C\to\R\) be upper semicontinuous and let \(b:\mathcal C\to\R_{\geq0}\) be continuous. Suppose that \(a\) and \(b\) are homogeneous of the same degree \(q>0\), and that \(a(\mathbf x)<0\) whenever \(\mathbf x\in\mathcal C\setminus\{\mathbf0\}\) and \(b(\mathbf x)=0\). Define
\[
k^*
:=
\max\left\{
0,\,
\sup_{\substack{\mathbf x\in\mathcal C,\ \varrho(\mathbf x)=1\\
b(\mathbf x)>0}}
\frac{a(\mathbf x)}{b(\mathbf x)}
\right\},
\]
where the supremum over the empty set is understood as \(-\infty\). Then, \(k^*<\infty\), and for any \(k>k^*\) there exists \(c_k>0\) such that
\[
kb(\mathbf x)-a(\mathbf x)
\geq
c_k\varrho(\mathbf x)^q,
\qquad
\forall\,\mathbf x\in\mathcal C.
\]
\end{proposition}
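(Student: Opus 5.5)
The plan is to work on the compact slice \(\Sigma:=\{\mathbf x\in\mathcal C:\varrho(\mathbf x)=1\}\) and transfer the estimate to all of \(\mathcal C\) by homogeneity. \(\Sigma\) is closed (since \(\mathcal C\) and the level set of the continuous \(\varrho\) are closed) and bounded, because \(\varrho(\mathbf x)=1\) forces \(\|\mathbf x_\mu\|\leq1\) for every \(\mu\). Since \(\mathcal C\) is dilation invariant and \(\varrho(\delta(\mathbf x;\lambda))=\lambda\varrho(\mathbf x)\), every \(\mathbf x\in\mathcal C\setminus\{\mathbf0\}\) can be written as \(\mathbf x=\delta(\mathbf y;\varrho(\mathbf x))\) with \(\mathbf y\in\Sigma\). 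Hence
\[
kb(\mathbf x)-a(\mathbf x)=\varrho(\mathbf x)^q\bigl(kb(\mathbf y)-a(\mathbf y)\bigr),
\]
and the claimed inequality reduces to the statement that \(g_k:=kb-a\) has a positive minimum on \(\Sigma\). At the origin, homogeneity of positive degree \(q\) gives \(a(\mathbf0)=b(\mathbf0)=0\), so the inequality holds trivially there.

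First I would show \(k^*<\infty\). Put \(Z:=\{\mathbf y\in\Sigma:b(\mathbf y)=0\}\), which is compact by continuity of \(b\). The function \(a\) is upper semicontinuous and \(a<0\) on \(Z\), so it attains its maximum on the compact set \(Z\); call it \(-2\varepsilon<0\) (if \(Z=\emptyset\) this step is vacuous). The set \(W:=\{\mathbf y\in\Sigma:a(\mathbf y)<-\varepsilon\}\) is relatively open in \(\Sigma\) by upper semicontinuity, and it contains \(Z\). On the compact set \(\Sigma\setminus W\), the continuous function \(b\) is strictly positive, so \(b\geq\eta>0\) there. The function \(a\) is also bounded above on \(\Sigma\) by some \(M\), again by upper semicontinuity on a compact set.

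Now take any \(\mathbf y\in\Sigma\) with \(b(\mathbf y)>0\). Either \(\mathbf y\in W\), in which case \(a(\mathbf y)/b(\mathbf y)<0\), or \(\mathbf y\notin W\), in which case \(a(\mathbf y)/b(\mathbf y)\leq\max\{M,0\}/\eta\). Hence \(k^*\leq\max\{M,0\}/\eta<\infty\).

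Next, fix \(k>k^*\). The function \(g_k=kb-a\) is lower semicontinuous on the compact set \(\Sigma\), being a continuous function minus an upper semicontinuous one, so it attains its minimum at some \(\mathbf y^*\). It suffices to show \(g_k(\mathbf y)>0\) for every \(\mathbf y\in\Sigma\), and I would split into two cases.
\begin{itemize}
\item If \(b(\mathbf y)=0\), then \(g_k(\mathbf y)=-a(\mathbf y)>0\) by the hypothesis on \(a\).
\item If \(b(\mathbf y)>0\), then \(a(\mathbf y)/b(\mathbf y)\leq k^*<k\), so \(g_k(\mathbf y)=b(\mathbf y)\bigl(k-a(\mathbf y)/b(\mathbf y)\bigr)>0\).
\end{itemize}
Therefore \(c_k:=g_k(\mathbf y^*)>0\), and the homogeneity reduction above gives \(kb(\mathbf x)-a(\mathbf x)\geq c_k\varrho(\mathbf x)^q\) on all of \(\mathcal C\).

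The main obstacle is the finiteness of \(k^*\) near the zero set of \(b\), where the ratio \(a/b\) could blow up. Using only upper semicontinuity of \(a\), the argument above handles this by keeping \(a\) uniformly negative on a neighbourhood \(W\) of \(Z\), which works because an upper semicontinuous function attains its maximum on the compact set \(Z\).
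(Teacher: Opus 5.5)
Your proof is correct and follows essentially the same route as the paper. Both arguments restrict to the compact slice \(\varrho=1\) and use upper semicontinuity of \(a\) to keep it uniformly negative on a neighbourhood of the zero set of \(b\), so that \(a/b\) is bounded and \(k^*<\infty\). Both then use lower semicontinuity of \(kb-a\) to obtain a positive minimum on the slice and transfer it to all of \(\mathcal C\) by homogeneity. The only case to add is the degenerate one \(\mathcal C=\{\mathbf 0\}\): there \(\Sigma=\emptyset\), no minimizer \(\mathbf y^*\) exists, and any \(c_k>0\) works, which the paper treats separately as trivial.
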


\begin{proof}
If \(\mathcal C=\{\mathbf0\}\), the result is immediate. Otherwise, let
\(
\mathcal K:=\{\mathbf x\in\mathcal C:\varrho(\mathbf x)=1\},
\)
which is nonempty and compact. Let \(\mathcal Z:=\{\mathbf x\in\mathcal K:b(\mathbf x)=0\}\). Since \(b\) is continuous, \(\mathcal Z\) is compact. If \(\mathcal Z\neq\emptyset\), the hypothesis and upper semicontinuity of \(a\) give \(\max_{\mathbf x\in\mathcal Z}a(\mathbf x)<0\), hence \(a<0\) on a neighborhood of \(\mathcal Z\) in \(\mathcal K\). If \(\mathcal Z=\emptyset\), take this neighborhood to be empty. On its compact complement, \(b\) is bounded away from zero and \(a\) is bounded above. Hence,
\(
\sup_{\mathbf x\in\mathcal K,\ b(\mathbf x)>0}a(\mathbf x)/b(\mathbf x)<\infty,
\)
so \(k^*<\infty\). Fix \(k>k^*\). By the definition of \(k^*\) and the hypothesis on the zero set of \(b\), \(kb-a>0\) on \(\mathcal K\). Since \(kb-a\) is lower semicontinuous,
\(
c_k:=\min_{\mathbf x\in\mathcal K}(kb(\mathbf x)-a(\mathbf x))>0.
\)
For every \(\mathbf x\in\mathcal C\setminus\{\mathbf0\}\), let
\(
\widehat{\mathbf x}:=\delta(\mathbf x;1/\varrho(\mathbf x))\in\mathcal K.
\)
Homogeneity gives
\(
kb(\mathbf x)-a(\mathbf x)
=
\varrho(\mathbf x)^q
\left(
kb(\widehat{\mathbf x})-a(\widehat{\mathbf x})
\right)
\geq
c_k\varrho(\mathbf x)^q.
\)
The result at \(\mathbf x=\mathbf0\) follows from homogeneity.
\end{proof}

\begin{proposition}
\label{prop:power-potential}
Let \(\mathcal X\subseteq\R^n\) be a linear subspace, let \(\mathbf M=\operatorname{col}(\mathbf M_1,\ldots,\mathbf M_s)\), with \(\mathbf M_\ell\in\R^{d_\ell\times n}\), be injective on \(\mathcal X\), and let \(r\in(0,1)\). Define
\[
U(\mathbf x)
:=
\frac{1}{1+r}
\sum_{\ell=1}^{s}
\|\mathbf M_\ell\mathbf x\|^{1+r},
\qquad
\mathbf x\in\mathcal X.
\]
Then, \(U\) is continuously differentiable, convex, positive definite, and homogeneous of degree \(1+r\), and its conjugate \(U^*\) is twice continuously differentiable.
\end{proposition}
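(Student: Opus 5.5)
The first four properties follow directly from the form of \(U\). The real work is the \(C^2\) regularity of \(U^*\), which I would obtain from a constrained dual representation and the implicit function theorem.

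\emph{Elementary properties.} Write \(f(\mathbf z):=\frac{1}{1+r}\|\mathbf z\|^{1+r}\) on \(\R^{d_\ell}\), so that \(U=\sum_\ell f\circ\mathbf M_\ell\).
\begin{itemize}
\item Since \(1+r>1\), \(f\) is \(C^1\) with \(\nabla f(\mathbf z)=\vsgn{\mathbf z}{r}\), which is continuous also at \(\mathbf z=\mathbf0\) because \(r>0\). Hence \(U\) is \(C^1\) with \(\nabla U(\mathbf x)=\sum_\ell\mathbf M_\ell^\top\vsgn{\mathbf M_\ell\mathbf x}{r}\).
\item Convexity holds because \(f\) is a nondecreasing convex function of the norm, composed with a linear map, and sums preserve convexity.
\item Homogeneity of degree \(1+r\) is immediate.
\item Positive definiteness holds because \(U(\mathbf x)=0\) forces \(\mathbf M\mathbf x=\mathbf0\), hence \(\mathbf x=\mathbf0\) by injectivity on \(\mathcal X\).
\end{itemize}

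I cannot invoke Proposition~\ref{prop:potential-properties} for the conjugate, since its item~\ref{item:potential-primal} assumes \(U^*\) is differentiable. Instead I would prove strict monotonicity of \(\nabla U\) directly from the standard inequality for \(r\in(0,1)\):
\[
(\mathbf a-\mathbf b)^\top\big(\vsgn{\mathbf a}{r}-\vsgn{\mathbf b}{r}\big)\geq c_r\,\|\mathbf a-\mathbf b\|^2\,(\|\mathbf a\|+\|\mathbf b\|)^{r-1}.
\]
Summing over \(\ell\) and using \(\|\mathbf M\mathbf x\|\geq\underline m\|\mathbf x\|\) on \(\mathcal X\), this gives strong monotonicity of \(\nabla U\) on bounded sets. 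Together with coercivity (the surjectivity argument of Proposition~\ref{prop:potential-properties}), it follows that \(\nabla U:\mathcal X\to\mathcal X\) is a bijection. Its inverse \(\nabla U^*\) exists, is locally Lipschitz, and \(U^*\) is \(C^1\) and homogeneous of degree \(p^*=(1+r)/r>2\).

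\emph{Hessian away from the origin.} Let \(g(\mathbf w):=\frac{r}{1+r}\|\mathbf w\|^{(1+r)/r}\) be the conjugate of \(f\). Since \((1+r)/r>2\), \(g\) is \(C^2\) with \(\nabla^2 g(\mathbf w)\) of order \(\|\mathbf w\|^{1/r-1}\), continuous and vanishing at \(\mathbf w=\mathbf0\). I would then use the representation
\[
U^*(\mathbf y)=\min\Big\{\sum_\ell g(\mathbf w_\ell):\ \textstyle\sum_\ell\mathbf M_\ell^\top\mathbf w_\ell=\mathbf y\Big\},
\]
with \(\mathbf w_\ell\) restricted to \(\operatorname{range}(\mathbf M_\ell)\) after projecting onto \(\mathcal X\). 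Its KKT system is \(\nabla g(\mathbf w_\ell)=\mathbf M_\ell\mathbf x\), \(\sum_\ell\mathbf M_\ell^\top\mathbf w_\ell=\mathbf y\), with multiplier \(\mathbf x=\nabla U^*(\mathbf y)\) and \(\mathbf w_\ell=\vsgn{\mathbf M_\ell\mathbf x}{r}\). This is a \(C^1\) system in \((\mathbf w,\mathbf x)\), so the implicit function theorem would give \(\nabla U^*\in C^1\) near every \(\mathbf y\neq\mathbf0\). The relevant Jacobian has the saddle form \(\begin{bmatrix}\mathbf H&-\mathbf M\\ \mathbf M^\top&\mathbf0\end{bmatrix}\) with \(\mathbf H=\operatorname{diag}(\nabla^2 g(\mathbf w_\ell))\succeq0\). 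A kernel vector \((\mathbf u,\mathbf v)\) satisfies \(\mathbf u^\top\mathbf H\mathbf u=0\), which forces \(\mathbf M\mathbf v=\mathbf0\), hence \(\mathbf v=\mathbf0\). Thus the \(\mathbf x\)-component is always determined.

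\emph{Main obstacle.} The remaining \(\mathbf w\)-kernel is \(\ker\mathbf M^\top\cap\{\mathbf u:\mathbf u_\ell=\mathbf0 \text{ whenever } \mathbf w_\ell\neq\mathbf0\}\). It can be nontrivial exactly at points where some blocks \(\mathbf M_\ell\mathbf x\) vanish, which are the points where \(\nabla U\) itself fails to be differentiable.
\begin{itemize}
\item First attempt: apply the implicit function theorem on the quotient by this degenerate subspace. Since \(\mathbf w_\ell=\mathbf0\) on the degenerate blocks, one may freeze those blocks locally and check that the reduced saddle matrix is nonsingular with continuous dependence.
\item Fallback: show directly that the difference quotients of \(\nabla U^*\) converge. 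Use the formula \(D\nabla U^*=\big(\sum_\ell\mathbf M_\ell^\top\nabla^2g(\mathbf w_\ell)^{-1}\mathbf M_\ell\big)^{-1}\) restricted to \(\mathcal X\), which is valid where all \(\mathbf w_\ell\neq\mathbf0\). Then show that it extends continuously as blocks with \(\mathbf w_\ell\to\mathbf0\) contribute an "infinite stiffness", i.e.\ a projection onto the complement of \(\operatorname{range}\mathbf M_\ell^\top\). The strong-monotonicity estimate above supplies the uniform bounds needed for this limit.
\end{itemize}

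\emph{At the origin.} Homogeneity gives \(\|\nabla U^*(\mathbf y)\|\leq C\|\mathbf y\|^{1/r}=o(\|\mathbf y\|)\), so \(\nabla^2U^*(\mathbf0)=\mathbf0\). Moreover \(\nabla^2U^*\) is homogeneous of degree \(1/r-1>0\) and bounded on the unit sphere by the previous step, so it tends to \(\mathbf0\) at the origin. Hence \(U^*\in C^2\) on all of \(\mathcal X\).
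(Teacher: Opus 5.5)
\textbf{Verdict: there is a genuine gap.} Your proof is complete only at points where the $\mathbf w$-kernel is trivial, and neither of your routes for the degenerate points is carried through.

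What works: the elementary properties, the local strong monotonicity of $\nabla U$ (which gives bijectivity and local Lipschitzness of $\nabla U^*$), and the implicit-function argument on the dual KKT system. The dual argument is a genuinely different and rather clean route. Since $g\in C^2$, it gives $\nabla U^*\in C^1$ near every $\mathbf y=\nabla U(\mathbf x)$ at which the $\mathbf w$-kernel you identify is trivial. When $\mathbf M:\mathcal X\to\R^{\sum_\ell d_\ell}$ is bijective (the leader--follower and affine-formation potentials), that covers every point, origin included, and your proof is complete there. But the proposition is stated for arbitrary injective $\mathbf M$. The redundant case, e.g.\ the leaderless potential $\mathbf M=\mathbf D^\top$ on a graph with cycles, is exactly where the difficulty lies.

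Your first attempt cannot close the gap. At a degenerate point the dual solution $\mathbf w_\ell(\mathbf y)=\vsgn{\mathbf M_\ell\nabla U^*(\mathbf y)}{r}$ is in general not differentiable. So no implicit-function argument that keeps $\mathbf w$ among the unknowns can succeed. Freezing the blocks with $\mathbf w_\ell=\mathbf0$ also does not describe the solution at nearby $\mathbf y$, where those blocks are nonzero. For instance, take $\mathcal X=\R^3$ with scalar blocks $x_1$, $x_2$, $x_1-x_2$, $x_3$. The point $\mathbf y_0=\nabla U(\mathbf e_3)=\mathbf e_3\neq\mathbf0$ is degenerate, and near it $(w_1,w_2,w_3)$ is an odd, positively $1$-homogeneous function of $(y_1,y_2)$ alone. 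Evaluating at $(x_1,x_2)=(1,0),(0,1),(1,-1)$ shows this function is linear only if $2^r=2$.

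Your fallback names the right object. The infinite-stiffness limit of $(\nabla^2U(\mathbf x))^{-1}$ is the paper's $\mathbf T_{\mathbf x}$: the minimizer of the quadratic form of the nonvanishing blocks over $\mathcal K(\mathbf x)=\{\mathbf w\in\mathcal X:\mathbf M_\ell\mathbf w=\mathbf0\text{ whenever }\mathbf M_\ell\mathbf x=\mathbf0\}$. However, both essential steps are only asserted.
\begin{itemize}
\item \emph{Identifying the limit.} The uniform bound from strong monotonicity only gives precompactness of $\mathbf w_j:=(\nabla^2U(\mathbf x_j))^{-1}\mathbf v$. To identify the limit regardless of the rates at which different blocks vanish, you need something like the paper's Claim~4. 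Testing the optimality condition with $\mathbf w_j$ gives $r\|\mathbf M_\ell\mathbf x_j\|^{r-1}\|\mathbf M_\ell\mathbf w_j\|^2\leq\mathbf w_j^\top\mathbf v$, which forces $\mathbf M_\ell\mathbf w_j\to\mathbf0$ on vanishing blocks. Uniqueness of the constrained limit problem then pins down the limit.
\item \emph{Getting differentiability at the excluded points.} A continuous extension of the derivative from the set where all blocks are nonzero does not by itself give differentiability there. With local Lipschitzness you would still need the exceptional set $\nabla U(\bigcup_\ell\{\mathbf x:\mathbf M_\ell\mathbf x=\mathbf0\})$ to be Lebesgue-null. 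That is not immediate, since $\nabla U$ is only H\"older there.
\end{itemize}

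The paper instead proves differentiability directly at every $\mathbf y$ (Claim~3). Let $\mathcal Z(\mathbf x)$ index the vanishing blocks. The estimate $\sum_{\ell\in\mathcal Z(\mathbf x)}\|\mathbf M_\ell\Delta\mathbf x\|^{1+r}\leq\mathbf h^\top\Delta\mathbf x\leq C\|\mathbf h\|^2$, together with $r<1$, makes those blocks of $\Delta\mathbf x$ of order $o(\|\mathbf h\|)$. Hence $\Delta\mathbf x$ is $o(\|\mathbf h\|)$-close to $\mathcal K(\mathbf x)$, and expanding the remaining blocks yields $\Delta\mathbf x=\mathbf T_{\mathbf x}\mathbf h+o(\|\mathbf h\|)$.

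Your homogeneity argument at the origin is fine in itself. But it presupposes a continuous Hessian on the whole unit sphere, so it inherits the same gap.
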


\begin{proof}
Continuous differentiability, convexity, and homogeneity follow from the corresponding properties of \(\mathbf z\mapsto\|\mathbf z\|^{1+r}/(1+r)\), while positive definiteness follows from the injectivity of \(\mathbf M\) on \(\mathcal X\). Hence, \(c:=\min_{\|\mathbf x\|=1}U(\mathbf x)>0\), and homogeneity gives \(U(\mathbf x)\geq c\|\mathbf x\|^{1+r}\). Moreover, since \(\mathbf M\) is injective on \(\mathcal X\), there exists \(\sigma_{\mathbf M}>0\) such that \(\|\mathbf M\mathbf x\|\geq\sigma_{\mathbf M}\|\mathbf x\|\) for every \(\mathbf x\in\mathcal X\).

The rest of the proof proceeds through four claims. Here, \(J[\mathbf f](\mathbf x)\) denotes the Jacobian of a map \(\mathbf f\) at \(\mathbf x\). First, we show that \(U^*\) is finite and differentiable and that \(\nabla U^*\) is locally Lipschitz. Second, we show that the map \(\mathbf T_{\mathbf x}\) in \eqref{eq:power-potential-Tx} is well defined and linear. Third, we prove that \(J[\nabla U^*](\mathbf y)=\mathbf T_{\mathbf x}\) for \(\mathbf y=\nabla U(\mathbf x)\). Fourth, we show that \(\mathbf T_{\mathbf x}\) depends continuously on \(\mathbf x\). The last two claims imply that \(U^*\) is twice continuously differentiable.

\emph{Claim 1: \(U^*\) is finite and differentiable, and \(\nabla U^*\) is locally Lipschitz.} Define \(\boldsymbol\phi(\mathbf z):=\|\mathbf z\|^{r-1}\mathbf z\), with \(\boldsymbol\phi(\mathbf0):=\mathbf0\), and let \(\mathbf P_{\mathcal X}\) denote the orthogonal projector onto \(\mathcal X\). Then
\begin{equation}
\label{eq:grad:U}
\nabla U(\mathbf x)
=
\mathbf P_{\mathcal X}
\sum_{\ell=1}^{s}
\mathbf M_\ell^\top\boldsymbol\phi(\mathbf M_\ell\mathbf x).
\end{equation}
For \(\mathbf z\neq\mathbf0\),
\[
J[\boldsymbol\phi](\mathbf z)
=
\|\mathbf z\|^{r-1}\mathbf I
+
(r-1)\|\mathbf z\|^{r-3}\mathbf z\mathbf z^\top
\succeq
r\|\mathbf z\|^{r-1}\mathbf I.
\]
Let \(\|\mathbf x\|,\|\mathbf y\|\leq R\). For each \(\ell=1,\ldots,s\), consider the segment \(\boldsymbol\xi_\ell(t):=\mathbf M_\ell\mathbf y+t\mathbf M_\ell(\mathbf x-\mathbf y)\), \(t\in[0,1]\). Since \(\|\boldsymbol\xi_\ell(t)\|\leq\|\mathbf M\|R\), absolute continuity gives
\[
\begin{aligned}
&(\mathbf M_\ell(\mathbf x-\mathbf y))^\top
\left(
\boldsymbol\phi(\mathbf M_\ell\mathbf x)
-
\boldsymbol\phi(\mathbf M_\ell\mathbf y)
\right)\\
&=
\int_0^1
(\mathbf M_\ell(\mathbf x-\mathbf y))^\top
J[\boldsymbol\phi](\boldsymbol\xi_\ell(t))
\mathbf M_\ell(\mathbf x-\mathbf y)\,\mathrm dt\\
&\geq
r(\|\mathbf M\|R)^{r-1}
\|\mathbf M_\ell(\mathbf x-\mathbf y)\|^2.
\end{aligned}
\]
The identity remains valid if the segment passes through the origin, since \(r>0\) and \(\boldsymbol\phi(\boldsymbol\xi_\ell(t))\) is absolutely continuous. Since \(\mathbf x-\mathbf y\in\mathcal X\), the projector in \eqref{eq:grad:U} disappears when taking the inner product with \(\mathbf x-\mathbf y\). Summing over \(\ell=1,\ldots,s\) therefore gives
\[
\begin{aligned}
&(\mathbf x-\mathbf y)^\top
\left(\nabla U(\mathbf x)-\nabla U(\mathbf y)\right)\\
&\geq
r(\|\mathbf M\|R)^{r-1}
\|\mathbf M(\mathbf x-\mathbf y)\|^2\geq
r(\|\mathbf M\|R)^{r-1}
\sigma_{\mathbf M}^2
\|\mathbf x-\mathbf y\|^2.
\end{aligned}
\]
Thus, \(U\) is strictly convex. Moreover, \(\mathbf x^\top\mathbf y-U(\mathbf x)\leq\|\mathbf x\|\|\mathbf y\|-c\|\mathbf x\|^{1+r}\to-\infty\) as \(\|\mathbf x\|\to\infty\), so \(U^*\) is finite everywhere. Hence, \cite[Theorem~26.5]{rockafellar1970} implies that \(\nabla U:\mathcal X\to\mathcal X\) is a bijection with continuous inverse \(\nabla U^*\), and \(U^*\) is differentiable.

Finally, fix \(M>0\). Since \(\nabla U^*\) is continuous, there exists \(R>0\) such that \(\|\nabla U^*(\mathbf y)\|\leq R\) whenever \(\|\mathbf y\|\leq M\). Thus, for \(\|\mathbf y\|,\|\widehat{\mathbf y}\|\leq M\), setting \(\mathbf x:=\nabla U^*(\mathbf y)\) and \(\widehat{\mathbf x}:=\nabla U^*(\widehat{\mathbf y})\), the preceding estimate and Cauchy--Schwarz give
\[
r(\|\mathbf M\|R)^{r-1}\sigma_{\mathbf M}^2
\|\mathbf x-\widehat{\mathbf x}\|
\leq
\|\mathbf y-\widehat{\mathbf y}\|.
\]
Therefore, \(\nabla U^*\) is locally Lipschitz.

\emph{Claim 2: For every \(\mathbf x\in\mathcal X\), the map \(\mathbf T_{\mathbf x}\) in \eqref{eq:power-potential-Tx} is well defined and linear.} The construction accounts for the fact that \(\nabla U\) need not be differentiable at points where some \(\mathbf M_\ell\mathbf x=\mathbf0\). Fix \(\mathbf x\in\mathcal X\), and define \(\mathcal Z(\mathbf x):=\{\ell:\mathbf M_\ell\mathbf x=\mathbf0\}\) and \(\mathcal K(\mathbf x):=\{\mathbf w\in\mathcal X:\mathbf M_\ell\mathbf w=\mathbf0,\ \ell\in\mathcal Z(\mathbf x)\}\). Thus, \(\mathcal Z(\mathbf x)\) contains the blocks at which \(\boldsymbol\phi\) is not differentiable, while \(\mathcal K(\mathbf x)\) contains the directions along which these blocks remain zero. For each \(\mathbf v\in\mathcal X\), define
\begin{equation}
\label{eq:power-potential-Tx}
\begin{aligned}
&\mathbf T_{\mathbf x}\mathbf v
:=\\&
\operatorname*{argmin}_{\mathbf w\in\mathcal K(\mathbf x)}
\left\{
\frac12
\sum_{\ell\notin\mathcal Z(\mathbf x)}
(\mathbf M_\ell\mathbf w)^\top
J[\boldsymbol\phi](\mathbf M_\ell\mathbf x)
\mathbf M_\ell\mathbf w
-
\mathbf v^\top\mathbf w
\right\}.
\end{aligned}
\end{equation}
The quadratic form is positive definite on \(\mathcal K(\mathbf x)\). Indeed, if it vanishes at \(\mathbf w\in\mathcal K(\mathbf x)\), then \(\mathbf M_\ell\mathbf w=\mathbf0\) for every \(\ell\notin\mathcal Z(\mathbf x)\), while membership in \(\mathcal K(\mathbf x)\) gives the same for \(\ell\in\mathcal Z(\mathbf x)\). Hence, \(\mathbf M\mathbf w=\mathbf0\), and injectivity of \(\mathbf M\) gives \(\mathbf w=\mathbf0\). The minimizer therefore exists and is unique, and its first-order optimality condition gives
\begin{equation}
\label{eq:power-potential-hessian-characterization}
\mathbf u^\top\mathbf v
=
\sum_{\ell\notin\mathcal Z(\mathbf x)}
\mathbf u^\top\mathbf M_\ell^\top
J[\boldsymbol\phi](\mathbf M_\ell\mathbf x)
\mathbf M_\ell\mathbf T_{\mathbf x}\mathbf v,
\qquad
\forall\,\mathbf u\in\mathcal K(\mathbf x).
\end{equation}
Uniqueness and linearity of this condition in \(\mathbf v\) imply that \(\mathbf T_{\mathbf x}\) is linear.

\emph{Claim 3: \(J[\nabla U^*](\mathbf y)=\mathbf T_{\mathbf x}\) for \(\mathbf y=\nabla U(\mathbf x)\).} To simplify the argument, in what follows we use the formal symbol \(o(\|\mathbf h\|)\) to denote a scalar or vector term, as appropriate, satisfying \(\lim_{\|\mathbf h\|\to0}\|o(\|\mathbf h\|)\|/\|\mathbf h\|=0\), and may denote a different term at each occurrence. For \(\mathbf h\in\mathcal X\), define \(\widehat{\mathbf x}:=\nabla U^*(\mathbf y+\mathbf h)\) and \(\Delta\mathbf x:=\widehat{\mathbf x}-\mathbf x\). Local Lipschitzness of \(\nabla U^*\) gives \(\|\Delta\mathbf x\|\leq C\|\mathbf h\|\) for all sufficiently small \(\mathbf h\) and some \(C>0\). Using \eqref{eq:grad:U}, taking the inner product of \(\mathbf h=\nabla U(\widehat{\mathbf x})-\nabla U(\mathbf x)\) with \(\Delta\mathbf x\) gives
\(
\mathbf h^\top\Delta\mathbf x
\geq
\sum_{\ell\in\mathcal Z(\mathbf x)}
\|\mathbf M_\ell\Delta\mathbf x\|^{1+r},
\)
where the terms with \(\ell\notin\mathcal Z(\mathbf x)\) are nonnegative by monotonicity of \(\boldsymbol\phi\), while for \(\ell\in\mathcal Z(\mathbf x)\) one has \(\mathbf M_\ell\mathbf x=\mathbf0\) and \(\mathbf M_\ell\widehat{\mathbf x}=\mathbf M_\ell\Delta\mathbf x\). Hence,
\begin{equation}
\label{eq:power-potential-zero-block-limit}
\sum_{\ell\in\mathcal Z(\mathbf x)}
\|\mathbf M_\ell\Delta\mathbf x\|^{1+r}
\leq
\mathbf h^\top\Delta\mathbf x
\leq
C\|\mathbf h\|^2.
\end{equation}
Since \(r<1\), for all \(\ell\in\mathcal Z(\mathbf x)\),
\(
\lim_{\|\mathbf h\|\to0}
\|\mathbf M_\ell\Delta\mathbf x\|/\|\mathbf h\|=0.
\)

Let \(\Delta\mathbf x_{\mathcal K}\) be the orthogonal projection of \(\Delta\mathbf x\) onto \(\mathcal K(\mathbf x)\). The maps \(\mathbf M_\ell\), \(\ell\in\mathcal Z(\mathbf x)\), have common kernel \(\mathcal K(\mathbf x)\), so their restriction to \(\mathcal K(\mathbf x)^\perp\) is jointly injective. By finite dimensionality and the preceding limits,
\(
\lim_{\|\mathbf h\|\to0}
\|\Delta\mathbf x-\Delta\mathbf x_{\mathcal K}\|/\|\mathbf h\|=0.
\)

For every \(\ell\notin\mathcal Z(\mathbf x)\), \(\boldsymbol\phi\) is differentiable at \(\mathbf M_\ell\mathbf x\). Since \(\|\Delta\mathbf x\|\leq C\|\mathbf h\|\), its first-order expansion gives
\[
\boldsymbol\phi(\mathbf M_\ell\mathbf x+\mathbf M_\ell\Delta\mathbf x)
-
\boldsymbol\phi(\mathbf M_\ell\mathbf x)
=
J[\boldsymbol\phi](\mathbf M_\ell\mathbf x)
\mathbf M_\ell\Delta\mathbf x
+
o(\|\mathbf h\|).
\]
Using the projection estimate and the finiteness of the number of blocks, for every \(\mathbf u\in\mathcal K(\mathbf x)\),
\[
\mathbf u^\top\mathbf h
=
\sum_{\ell\notin\mathcal Z(\mathbf x)}
\mathbf u^\top\mathbf M_\ell^\top
J[\boldsymbol\phi](\mathbf M_\ell\mathbf x)
\mathbf M_\ell\Delta\mathbf x_{\mathcal K}
+
o(\|\mathbf h\|)\|\mathbf u\|.
\]
Subtracting \eqref{eq:power-potential-hessian-characterization} with \(\mathbf v=\mathbf h\) and taking \(\mathbf u=\Delta\mathbf x_{\mathcal K}-\mathbf T_{\mathbf x}\mathbf h\) gives
\[
\begin{aligned}
&\sum_{\ell\notin\mathcal Z(\mathbf x)}
\bigl(\mathbf M_\ell(\Delta\mathbf x_{\mathcal K}-\mathbf T_{\mathbf x}\mathbf h)\bigr)^\top
J[\boldsymbol\phi](\mathbf M_\ell\mathbf x)
\mathbf M_\ell(\Delta\mathbf x_{\mathcal K}-\mathbf T_{\mathbf x}\mathbf h)\\
&\qquad=
o(\|\mathbf h\|)
\|\Delta\mathbf x_{\mathcal K}-\mathbf T_{\mathbf x}\mathbf h\|.
\end{aligned}
\]
Since the quadratic form is positive definite on \(\mathcal K(\mathbf x)\), its left-hand side is bounded below by \(c_{\mathbf x}\|\Delta\mathbf x_{\mathcal K}-\mathbf T_{\mathbf x}\mathbf h\|^2\) for some \(c_{\mathbf x}>0\). Hence, \(\|\Delta\mathbf x_{\mathcal K}-\mathbf T_{\mathbf x}\mathbf h\|=o(\|\mathbf h\|)\). Together with the projection estimate, this gives
\[
\lim_{\|\mathbf h\|\to0}
\frac{
\|\nabla U^*(\mathbf y+\mathbf h)-\nabla U^*(\mathbf y)-\mathbf T_{\mathbf x}\mathbf h\|
}{
\|\mathbf h\|
}
=
0.
\]
Thus, \(\nabla U^*\) is differentiable at \(\mathbf y\), with \(J[\nabla U^*](\mathbf y)=\mathbf T_{\mathbf x}\).

\emph{Claim 4: The map \(\mathbf x\mapsto\mathbf T_{\mathbf x}\) is continuous.} Let \(\{\mathbf x_j\}_{j=1}^{\infty}\subset\mathcal X\) satisfy \(\lim_{j\to\infty}\mathbf x_j=\mathbf x\), fix \(\mathbf v\in\mathcal X\), and define \(\mathbf w_j:=\mathbf T_{\mathbf x_j}\mathbf v\) for \(j=1,2,\ldots\). Since \(\nabla U(\mathbf x_j)\to\nabla U(\mathbf x)\), local Lipschitzness of \(\nabla U^*\) and Claim~3 give a uniform bound on \(\|\mathbf T_{\mathbf x_j}\|\) for all sufficiently large \(j\). Hence, \(\{\mathbf w_j\}_{j=1}^{\infty}\) is bounded.

We first show that the blocks vanishing at \(\mathbf x\) also vanish in the limit of \(\{\mathbf w_j\}_{j=1}^{\infty}\). Fix \(\ell\in\mathcal Z(\mathbf x)\). If \(\ell\in\mathcal Z(\mathbf x_j)\), then \(\mathbf M_\ell\mathbf w_j=\mathbf0\). Otherwise, taking \(\mathbf u=\mathbf w_j\) in \eqref{eq:power-potential-hessian-characterization} at \(\mathbf x_j\) and retaining only the \(\ell\)-th nonnegative term gives
\[
r\|\mathbf M_\ell\mathbf x_j\|^{r-1}
\|\mathbf M_\ell\mathbf w_j\|^2
\leq
\mathbf w_j^\top\mathbf v.
\]
Since \(\lim_{j\to\infty}\mathbf M_\ell\mathbf x_j=\mathbf0\), \(r-1<0\), and \(\{\mathbf w_j\}_{j=1}^{\infty}\) is bounded, it follows that \(\lim_{j\to\infty}\mathbf M_\ell\mathbf w_j=\mathbf0\). Thus, every accumulation point of \(\{\mathbf w_j\}_{j=1}^{\infty}\) belongs to \(\mathcal K(\mathbf x)\).

Consider any convergent subsequence \(\{\mathbf w_{j_k}\}_{k=1}^{\infty}\), where \(\{j_k\}_{k=1}^{\infty}\subset\mathbb N\) is strictly increasing, and let \(\lim_{k\to\infty}\mathbf w_{j_k}=\overline{\mathbf w}\). For all sufficiently large \(k\), every block with \(\ell\notin\mathcal Z(\mathbf x)\) also satisfies \(\ell\notin\mathcal Z(\mathbf x_{j_k})\). Moreover, every \(\mathbf u\in\mathcal K(\mathbf x)\) belongs to \(\mathcal K(\mathbf x_{j_k})\), while the terms with \(\ell\in\mathcal Z(\mathbf x)\setminus\mathcal Z(\mathbf x_{j_k})\) vanish because \(\mathbf M_\ell\mathbf u=\mathbf0\). Passing to the limit in \eqref{eq:power-potential-hessian-characterization} therefore gives
\[
\mathbf u^\top\mathbf v
=
\sum_{\ell\notin\mathcal Z(\mathbf x)}
\mathbf u^\top\mathbf M_\ell^\top
J[\boldsymbol\phi](\mathbf M_\ell\mathbf x)
\mathbf M_\ell\overline{\mathbf w},
\qquad
\forall\,\mathbf u\in\mathcal K(\mathbf x).
\]
Since \(\overline{\mathbf w}\in\mathcal K(\mathbf x)\), uniqueness in \eqref{eq:power-potential-hessian-characterization} gives \(\overline{\mathbf w}=\mathbf T_{\mathbf x}\mathbf v\). Thus, every accumulation point of \(\{\mathbf w_j\}_{j=1}^{\infty}\) equals \(\mathbf T_{\mathbf x}\mathbf v\), and therefore \(\lim_{j\to\infty}\mathbf T_{\mathbf x_j}\mathbf v=\mathbf T_{\mathbf x}\mathbf v\). Applying this argument to a fixed basis of \(\mathcal X\) gives continuity of \(\mathbf T_{\mathbf x}\) in matrix norm. By Claims 3 and 4, \(J[\nabla U^*](\mathbf y)=\mathbf T_{\nabla U^*(\mathbf y)}\) is continuous because \(\nabla U^*\) is continuous. Therefore, \(U^*\) is twice continuously differentiable.

\end{proof}

\subsection{Proof of the application corollaries}
\label{sec:proof:corollaries}

\begin{proof}[Proof of Corollaries~\ref{cor:leader}--\ref{cor:affine}]
We verify that the three error systems satisfy Definition~\ref{def:ahc} with the choices in Table~\ref{tab:applications}. For the leader--follower problem, \(\mathbf L+\mathbf B\) is symmetric positive definite because the graph is connected and at least one \(b_i\neq0\). For the leaderless problem, \(\mathbf D^\top\) is injective on \(\mathbf1^\perp\), while for affine formation tracking, \(\mathbf\Omega_{FF}\) is symmetric positive definite. Hence, the corresponding matrices \(\mathbf Q\) are symmetric positive definite, and Proposition~\ref{prop:power-potential}, applied with \(\mathbf M=\mathbf L+\mathbf B\), \(\mathbf M=\mathbf D^\top\), and \(\mathbf M=\mathbf\Omega_{FF}\), verifies all the required properties of \(U_\mu\) and gives the expressions for \(\mathbf Q\nabla U_\mu\) in Table~\ref{tab:applications}.

In all three cases, \(\mathcal S\) is upper semicontinuous, has nonempty compact convex values, and is homogeneous of degree zero. Moreover, for every \(\mathbf s\in\mathcal S(\mathbf q)\), the leader--follower case satisfies \(\mathbf q^\top\mathbf s=\|(\mathbf L+\mathbf B)\mathbf q\|_1\geq\lambda_{\min}(\mathbf L+\mathbf B)\|\mathbf q\|\), the leaderless case satisfies \(\mathbf q^\top\mathbf s=\|\mathbf D^\top\mathbf q\|_1\geq\sqrt{\lambda_2(\mathbf L)}\|\mathbf q\|\), and the affine case satisfies \(\mathbf q^\top\mathbf s=\sum_{i=1}^{N_F}\|[\mathbf\Omega_{FF}\mathbf q]_i\|\geq\lambda_{\min}(\mathbf\Omega_{FF})\|\mathbf q\|\). Thus, the values of \(c_{\mathcal S}\) in Table~\ref{tab:applications} satisfy Definition~\ref{def:ahc}.

For Corollary~\ref{cor:leader}, define \(\mathcal D:=\{-(a/L)\mathbf B\mathbf1:|a|\leq L/\sqrt N\}\). This set is nonempty, compact, and convex, and \(\|\mathbf d\|\leq1\) for every \(\mathbf d\in\mathcal D\), since \(\|\mathbf B\mathbf1\|\leq\sqrt N\). Since \((\mathbf L+\mathbf B)\mathbf1=\mathbf B\mathbf1\), Assumption~\ref{ass:leader} gives \(-z_0^{(m+1)}(t)\mathbf B\mathbf1/L\in\mathcal D\) and \(L(\mathbf L+\mathbf B)^{-1}[-z_0^{(m+1)}(t)\mathbf B\mathbf1/L]=-\mathbf1z_0^{(m+1)}(t)\). Hence, \eqref{eq:leader:error} is an AHC with \(\gamma=0\). Theorem~\ref{thm:main} gives gains for which its origin is globally finite-time stable, and \eqref{eq:leader:error:coordinates} then gives \eqref{eq:leader:objective}.

For Corollary~\ref{cor:redcho}, let \(\mathcal D:=\{\mathbf d\in\mathbf1^\perp:\|\mathbf d\|\leq1\}\). By the definition of \(\mathbf v(t)\), \(\mathbf1^\top\mathbf v(t)=0\), while Assumption~\ref{ass:redcho} gives \(\|\mathbf v(t)\|\leq L\). Hence, \(\mathbf d(t)=\mathbf v(t)/L\in\mathcal D\), and \eqref{eq:redcho:error} is the abstract homogeneous chain specified in Table~\ref{tab:applications} on \(\mathcal X=\mathbf1^\perp\). Theorem~\ref{thm:main} guarantees gains \(k_0,\ldots,k_m>0\) for which its origin is globally finite-time stable. For \(\gamma>0\), the implication established in \cite{redcho} then yields \eqref{eq:distro:diff}, while for \(\gamma=0\), under \eqref{eq:initial}, the corresponding implication follows from \cite{edcho}.

For Corollary~\ref{cor:affine}, let \(\mathcal D:=\{\mathbf d\in\R^{dN_F}:\|\mathbf d\|\leq1\}\). Assumption~\ref{ass:affine} gives \(\mathbf\Omega_{LF}^{\top}\mathbf p_L^{(m+1)}(t)/L\in\mathcal D\), and multiplication by \(L\mathbf Q=L\mathbf\Omega_{FF}^{-1}\) gives the disturbance in \eqref{eq:affine:error}. Thus, \eqref{eq:affine:error} is an AHC with \(\gamma=0\). Theorem~\ref{thm:main} gives gains for which its origin is globally finite-time stable, and \eqref{eq:affine:error:coordinates} then gives \eqref{eq:affine:objective}.
\end{proof}



\end{document}